\newtheorem{theorem}{Theorem}[section]
\newtheorem{claim}[theorem]{Claim}
\newtheorem{proposition}[theorem]{Proposition}
\newtheorem{lemma}[theorem]{Lemma}
\newtheorem{defn}[theorem]{Definition}
\newcommand{\FS}{\mathfrak{S}}
\newcommand{\CV}{\mathcal{V}}
\newcommand{\ip}[2]{\langle #1,#2\rangle}
\newcommand{\BR}{\mathbb{R}}
\newcommand{\BE}{\mathbb{E}}
\newcommand{\BP}{\mathbb{P}}
\newcommand{\ind}{\mathbf{1}}
\newcommand{\RN}{\mathbb{N}}
\newcommand{\RZ}{\mathbb{Z}}
\newcommand{\CH}{\mathcal{H}}
\newcommand{\CD}{\mathcal{D}}
\newcommand{\CI}{\mathcal{I}}
\newcommand{\fhat}{\widehat{f}}
\newcommand{\polylog}{\mathrm{polylog}}
\newcommand{\poly}{\mathrm{poly}}
\newcommand{\infnorm}[1]{\left\| #1 \right\|_{\infty}}
\newcommand{\disc}{\mathsf{disc}}
\newcommand\R{\mathbb{R}}
\newcommand\E{\mathbb{E}}
\newcommand*\bh{\ensuremath{\boldsymbol{h}}}
\newcommand*\bj{\ensuremath{\boldsymbol{j}}}
\newcommand*\bk{\ensuremath{\boldsymbol{k}}}
\newcommand*\bl{\ensuremath{\boldsymbol\ell}}
\newcommand*\bm{\ensuremath{\boldsymbol{m}}}
\newcommand{\sign}{\mathsf{sign}}
\newcommand{\CB}{\mathcal{B}}
\newcommand{\valuation}{\mathbf{v}}
\newcommand{\envy}{\ensuremath{\mathsf{envy}}\xspace}
\newcommand{\calA}{\ensuremath{\mathcal{A}}}
\newcommand{\boldp}{\ensuremath{\mathsf{p}}}
\newcommand{\boldx}{\ensuremath{\mathbf{x}}}
\renewcommand{\chi}{\varepsilon}
\newcommand{\ignore}[1]{}
\newcounter{note}[section]
\newcommand{\snote}[1]{\refstepcounter{note}$\ll${\bf Sahil~\thenote:}
  {\sf \color{red}  #1}$\gg$\marginpar{\tiny\bf SS~\thenote}}
\newcommand{\mnote}[1]{\refstepcounter{note}$\ll${\bf Makrand~\thenote:}
  {\sf \color{red}  #1}$\gg$\marginpar{\tiny\bf MS~\thenote}}
\title{Online Vector Balancing and Geometric Discrepancy}
\author{Nikhil Bansal\thanks{CWI Amsterdam and TU Eindhoven, \texttt{N.Bansal@cwi.nl}. Supported by the ERC Consolidator Grant 617951 and the NWO VICI grant 639.023.812.} \and
Haotian Jiang\thanks{Paul G. Allen School of CSE, University of Washington, \texttt{jhtdavid@cs.washington.edu}.
Supported in part by the National Science Fundation, Grant Number CCF-1749609, CCF-1740551, DMS-1839116.
} \and
Sahil Singla\thanks{Institute for Advanced Study and Princeton University, \texttt{singla@cs.princeton.edu}. Supported in part by the Schmidt Foundation. Part of the work done while visiting CWI Amsterdam.} \and 
Makrand Sinha\thanks{CWI Amsterdam, \texttt{makrand@cwi.nl}. Supported by the Netherlands Organization for Scientific Research, Grant Number 617.001.351 and the NWO VICI grant 639.023.812.}
}
\date{}
\begin{document}

\maketitle

\begin{abstract}
\medskip

    We consider an {online vector balancing} question where $T$ vectors, chosen from an {arbitrary} distribution over $[-1,1]^n$, arrive one-by-one and must be immediately given a $\pm$ sign. The goal is to keep the {discrepancy}---the $\ell_{\infty}$-norm of any signed prefix-sum---as small as possible. A concrete example of this question is the {online interval discrepancy} problem where $T$ points are sampled one-by-one uniformly in the unit interval $[0,1]$, and the goal is to immediately color them $\pm$ such that every sub-interval remains always nearly balanced. As random coloring incurs $\Omega(T^{1/2})$ discrepancy, while the worst-case offline bounds are $\Theta(\sqrt{n \log (T/n)})$ for vector balancing and $1$ for interval balancing, 
    a natural question is whether one can (nearly) match the offline bounds in the online setting for these problems. One must utilize the stochasticity as in the worst-case scenario it is known that discrepancy is $\Omega(T^{1/2})$ for any online algorithm.
    
    \medskip

    In a special case of online vector balancing, Bansal and Spencer~\cite{BansalSpencer-arXiv19} recently show an $O(\sqrt{n}\log T)$ bound when each coordinate is \emph{independently} chosen. When there are \emph{dependencies} among the coordinates, as in the interval discrepancy problem, the problem becomes much more challenging, as evidenced by a recent work of Jiang, Kulkarni, and Singla~\cite{JiangKS-arXiv19} that gives a non-trivial $O(T^{1/\log\log T})$ bound for  online interval discrepancy. Although this beats random coloring, it is still far from the offline bound.
    
    \medskip
    
    In this work, we introduce a new framework that allows us to handle online vector balancing even when the input distribution has {dependencies} across coordinates. In particular, this lets us obtain a $\poly(n, \log T)$ bound for online vector balancing under arbitrary input distributions, and a $\polylog (T)$ bound for online interval discrepancy. Our framework is powerful enough to capture other well-studied geometric discrepancy problems; e.g., we obtain a $\poly(\log^d (T))$ bound for the online $d$-dimensional Tusn\'ady's problem. All our bounds are tight up to polynomial factors. 
    
    \medskip
    
A key new technical ingredient in our work is an \emph{anti-concentration} inequality for sums of pairwise uncorrelated random variables, which might also be of independent interest.

\end{abstract}

\clearpage

\setcounter{tocdepth}{2}
\tableofcontents

\clearpage


\section{Introduction}

Consider the following online vector balancing question, originally proposed by Spencer \cite{Spencer77}: vectors $v_1,v_2,\ldots,v_T  \in [-1,1]^n$ arrive online, and upon the arrival of $v_t$, a sign $\chi_t \in \{\pm 1\}$ must be chosen irrevocably, so that the $\ell_\infty$-norm of the signed sum
$d_t = \chi_1 v_1 + \ldots + \chi_t v_t$  remains as small as possible. That is, find the smallest $B$
such that  $\max_{t \in [T]} \|d_t\|_{\infty}  \le B$. As we shall see later,
the problem arises naturally in various contexts where one wants to divide an incoming stream of objects, so that the split is as even as possible along each of the various dimensions that one might care about. 

A na\"ive algorithm is to pick each sign $\chi_t$ randomly and independently, which by standard tail bounds gives $B = \Theta((T \log n)^{1/2})$ with high probability.
In most of the interesting settings, $T \gg n$, and 
a natural question is whether the dependence on $T$ can be improved from $T^{1/2}$ to say, $\log{T}$, or removed altogether (possibly with a worse dependence on $n$).

\paragraph{Offline setting.} The offline version of the problem, where the vectors $v_1,\ldots,v_T$ are given in advance and the goal is to minimize $\max_{t \in [T]} \|d_t\|_{\infty}$, is known as the signed-series problem. It was first studied by Spencer~\cite{Spencer77}, who obtained a bound independent of $T$, but exponential in $n$. This was later improved by B\'ar\'any and Grinberg~\cite{BaranyGrinberg81} to $B\leq 2n$. Chobanyan~\cite{Chobanyan94} showed a beautiful connection between the signed-series problem and the classic Steinitz problem on the rearrangement of vector sequences---any upper bound on $B$ also holds for the latter problem. Steinitz problem has a much longer history, originating from a question of Riemann and L\'evy in the 19th century (c.f. the survey~\cite{Barany08} for some fascinating history).
A long-standing conjecture for both the problems, still open, is that $B = O(n^{1/2})$. Another notable bound is due to Banaszczyk~\cite{B12}, who showed that $B = O((n\log T)^{1/2})$. While the original argument in~\cite{B12} was non-constructive, a polynomial time algorithm to find such a signing was recently given in~\cite{BansalG17}. 


\ignore{\color{blue}\paragraph{Offline setting.} The offline version of the problem, where the vectors $v_1,\ldots,v_T$ are given in advance, and the goal is to minimize $\max_{t \in [T]} \|d_t\|_{\infty}$, originated from a question of Riemann and L\'evy (c.f. the survey~\cite{Barany08} for some fascinating history), and has been extensively studied in discrepancy theory as the signed-series problem. 
In an influential result~\cite{Steinitz-16}, in 1916 Steinitz showed that $B \le 2n$, \snote{check if this was Steinitz, as initial results might be exp in $n$?} independent of $T$---this bound was later improved to $n$ \cite{GS80}. \snote{Again, not sure if $n$ is known beyond Steinitz, based on \cite{B12}?}
A long-standing conjecture, still open, is that $B = O(n^{1/2})$. Another notable bound is due to Banaszczyk~\cite{B12}, who showed that $B = O((n\log T)^{1/2})$. While the original argument in~\cite{B12} was non-constructive, a polynomial time algorithm to find such a signing was given recently in~\cite{BansalG17}.
}

In general, there has been extensive work on various offline discrepancy problems over last several decades, and several powerful techniques such as the partial coloring method \cite{Spencer85} and convex geometric methods \cite{Giannopoulos,Banaszczyk-Journal98,B12,MNT14} have been developed, which significantly improve upon the bounds given by random coloring.
While these initial methods were mostly non-algorithmic, several new algorithmic techniques and insights have been developed in recent years \cite{Bansal-FOCS10, Lovett-Meka-SICOMP15, Rothvoss14, EldanS18, BansalDG16, LevyRR17, BansalDGL18,DNTT18}.

\paragraph{Online setting.}
The online setting was first studied in the 70's and 80's, but it did not receive much interest later as it was realized that the best guarantees are already achieved by trivial algorithms. In particular, the $T^{1/2}$ dependence on $T$ achieved by random coloring cannot be improved \cite{Spencer77}. 
See \cite{Spencer-Book87,Barany79} for even more specific lower bounds. The difficulty is that the all-powerful adversary, upon seeing the signs chosen by the algorithm until time $t-1$, can choose the next input vector $v_t$ to be \emph{orthogonal} to $d_{t-1}$. Now, irrespective of the choice of the sign $\chi_t$, the resulting signed sum $d_t$ satisfies 
\begin{equation}
    \label{l:ortho}
\|d_t\|^2_2 ~=~ \| d_{t-1} + \chi_t v_t\|_2^2 ~=~ \|d_{t-1}\|^2_2 + 2 \chi_t  \ip{d_{t-1}}{v_t} +  \|v_t\|_2^2 ~=~  \|d_{t-1}\|^2_2 + \|v_t\|_2^2.
\end{equation}
For any $d_{t-1}$, one can always pick $v_t$ with\footnote{For any $d \in \R^n$, any basic feasible solution to $\ip{d}{x}=0$ with $x \in [-1,1]^n$ has at least $n-1$ coordinates $\pm 1$.} 
$\|v_t\|_\infty \leq 1$ and $ \|v_t\|_2^2 \geq n-1$, resulting in 
$\|d_t\|_2^2 \geq (n-1)t$, and hence $\|d_t\|_\infty = \Omega(t^{1/2})$ for all $t \in [T]$ (as long as $n>1$).

It is therefore natural to ask if relaxing the power of the adversary, or making additional assumptions on the input sequence, can lead to interesting new ideas and to algorithms that perform much better,
and in particular, give bounds that only mildly depend on $T$.

A natural assumption is that of \emph{stochasticity}: if the arriving vectors are chosen in an i.i.d. manner from some distribution $\boldp$, can we maintain that the $\ell_{\infty}$ norm of the current signed-sum $d_t$---henceforth, referred to as discrepancy---is $\poly(n)$ or $\poly(n, \log T)$?

\paragraph{Previous work and challenges.}
Recently, this stochastic setting was studied by Bansal and Spencer~\cite{BansalSpencer-arXiv19}, where they considered the case where $\boldp$ is the uniform distribution on all $\{-1,1\}^n$ vectors. 
They give an online algorithm achieving a bound of $O(\sqrt{n})$ on the \emph{expected} discrepancy, matching the best possible offline bound, and an $O(\sqrt{n}\log T)$ discrepancy bound at all times $t \in [T]$, with high probability.

In general, the algorithmic discrepancy approaches  developed in the last decade do not seem to help in the online setting. This is because in the offline setting, the algorithms can ensure that the discrepancy stays low by \emph{simultaneously} updating the colors of various elements in a correlated way.
In the online setting, however, 
the discrepancy must necessarily rise (in the $\ell_2$ sense) whenever the incoming vector $v_t$ is almost orthogonal to $d_{t-1}$, which can happen quite often. The only thing that the online algorithm can do is to \emph{actively} try to \emph{cancel} this increase, whenever possible, by choosing the sign $\chi_t$ cleverly.



The algorithm of \cite{BansalSpencer-arXiv19} crucially uses that if the coordinates of $v_t$ are independently distributed and mean-zero\footnote{Note that this holds in the  case of uniform distribution over $\{-1,1\}^n$.}, then for any $d_{t-1}$ the incoming vector $v_t$ will typically be far from being orthogonal to $d_{t-1}$.
More quantitatively, the \emph{anti-concentration} property for independent random variables gives that for any $d_{t-1}=(d_1,\ldots,d_n)$, the random vector $v_t=(X_1,\ldots,X_n)$ with $X_1,\ldots,X_n$ being independent and mean-zero satisfies   
\[  \E_v\Big[ |\ip{d_{t-1}}{v_t}| \Big]  = \Omega\left(\Big(\sum_{i=1}^n d_i^2 \cdot \E[X_i]^2\Big)^{1/2}\right).\] 
Whenever $|\ip{d_{t-1}}{v_t}|$ is large, the algorithm can choose $\chi_t$ appropriately to create a {\em negative drift} in \eqref{l:ortho}, to offset the increase due to the $\|v_t\|^2$ term. We give a more detailed description below in \S\ref{sec:BansalSpencer}.

In many interesting settings, however, the $X_i$'s can be \emph{dependent}. 
For example, motivated by an envy minimization problem, Jiang, Kulkarni, and Singla~\cite{JiangKS-arXiv19} considered the following natural online {interval discrepancy} problem: points $x_1,\ldots,x_T$ arrive uniformly in the interval $[0,1]$, and the goal is to assign them signs online to minimize the discrepancy of every sub-interval of $[0,1]$. (For adversarial arrivals, \cite{JiangKS-arXiv19} show $\poly(T)$ lower bounds.)
Viewing the sub-intervals (after proper discretization) as coordinates, this becomes a stochastic online vector balancing problem, but where the random variables $X_i$ corresponding to the various sub-intervals are dependent (details in \S\ref{sec:overviewInterval}).
They give a non-trivial algorithm that achieves $T^{1/\log \log T}$ discrepancy, which is much better than the $T^{1/2}$ bound obtained by random coloring, but still substantially worse than $\text{polylog}(T)$. 



In general,
the difficulty with dependent coordinates $X_i$ is that even a small correlation can destroy anti-concentration, which makes it difficult to create a negative drift.
For example, suppose the distribution $\boldp$ is mostly supported on vectors with an equal number of $+1$ and $-1$ coordinates. 
Now if $d$ has the form $d= c(1,\ldots,1)$, then the incoming vector $v_t$ is almost always orthogonal to it, and $\|d_T\|_2$ can potentially increase as fast as $\Omega(T^{1/2})$.

In this paper, we focus on the stochastic setting where the coordinates have {dependencies}, and give several results both for specific geometric problems and for general vector balancing under arbitrary distributions. In general, there are various other ways in which one can relax the power of the adversary, and in \S\ref{sec:open} we describe several interesting open questions and directions in this area. 


\subsection{Our Discrepancy Bounds}

We first consider the following interval discrepancy problem. Let $x = x_1, \ldots, x_T$ be a sequence of points drawn uniformly in $[0,1]$ and let $\chi_1, \ldots, \chi_T \in \{\pm 1\}$ be a signing. For an interval $I \subseteq [0,1]$, let $\ind_{I}$ denote the indicator function of the interval $I$. For any time $t \in [T]$, we define the discrepancy of interval $I$ to be
\[  \disc_t(I) := \Big|\chi_1 \ind_I(x_1) + \cdots + \chi_t \ind_I(x_t) \Big| . \]
 We show the following bounds on discrepancy. 

\begin{theorem}[Interval Discrepancy]
    \label{thm:interval1d}
	There is an online algorithm which 
	selects signs $\chi_t \in \{\pm 1\}$ such that,
	with high probability\footnote{Throughout the paper, ``with high probability'' means with $1 - 1/\poly(n,T)$ probability where the exponent of the polynomial can be made as large as desired, depending on the constant in the discrepancy upper bound.}, \emph{for every interval} $I \subseteq [0,1]$ we have $\max_{t \in [T]} \disc_t(I) = O(\log^3 T)$.
	Moreover, 
	with constant probability, 
	for any online algorithm, $ \max_{I \subseteq [0,1]}  \max_{t \in [T]} \disc_t(I) = \Omega\left(\sqrt{ \log T}\right)$. 
\end{theorem}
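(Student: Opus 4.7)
The plan is to reduce the continuous interval problem to a finite online vector-balancing instance on the $O(T)$ dyadic intervals of $[0,1]$, run an exponential-potential algorithm, and close the analysis with an anti-concentration inequality for sums of pairwise uncorrelated random variables. After discretizing to $T$ atomic cells, any $I \subseteq [0,1]$ is a disjoint union of $O(\log T)$ dyadic intervals, so it suffices to bound every dyadic discrepancy by $O(\log^2 T)$; the extra $O(\log T)$ decomposition factor then yields the final $O(\log^3 T)$ bound on arbitrary intervals. Note that on arrival of $x_t$ the input vector $v_t \in \{0,1\}^{O(T)}$ has exactly one nonzero entry per level, so $\|v_t\|_2^2 = O(\log T)$.

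For the sign choice, I would maintain the potential $\Phi_t = \sum_{I \text{ dyadic}} \cosh(\lambda \cdot \disc_t(I))$ with $\lambda = \Theta(1/\log T)$, and greedily pick $\chi_t \in \{\pm 1\}$ to minimize $\Phi_t$ given $x_t$. Writing $s_{t-1}(I) := \sinh(\lambda \, \disc_{t-1}(I))$ and Taylor-expanding, the greedy sign gives
\[ \Phi_t - \Phi_{t-1} \;\le\; -\lambda\,|\langle s_{t-1}, v_t \rangle| \;+\; O\bigl(\lambda^2 \|v_t\|_2^2\, \Phi_{t-1}\bigr). \]
With $\lambda^2 \|v_t\|_2^2 = O(1/\log T)$, the second term is $O(\Phi_{t-1}/\log T)$, so if I can establish the drift lower bound $\E_{x_t}|\langle s_{t-1}, v_t\rangle| \gtrsim \Phi_{t-1}$ whenever $\Phi_{t-1}$ is large, then $\E[\Phi_t]$ stays at $\poly(T)$. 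From $\cosh(\lambda B) \le \Phi_T$ this forces $\max_I \disc_t(I) = O(\log \Phi_t / \lambda) = O(\log^2 T)$ on dyadic intervals (hence $O(\log^3 T)$ on arbitrary intervals), and a supermartingale / concentration argument combined with a union bound over $t \in [T]$ upgrades the in-expectation analysis to the stated high-probability uniform bound.

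The main obstacle is the drift bound $\E_{x_t}|\langle s_{t-1}, v_t\rangle| \gtrsim \Phi_{t-1}$. The inner product equals $\sum_{j=0}^{\log T} Y_j$, where $Y_j := s_{t-1}(I_j(x_t))$ and $I_j(x_t)$ is the level-$j$ dyadic bin containing $x_t$. The $Y_j$'s are \emph{not} independent across levels because finer bins nest inside coarser ones. However, a level-$(j{+}1)$ bin is, conditional on its parent, an unbiased random choice between two siblings; thus after subtracting conditional means, the level-wise increments are pairwise uncorrelated, and a direct calculation shows that their total variance is $\Omega(\Phi_{t-1}^2 / \polylog(T))$ whenever some $\disc_{t-1}(I)$ saturates the target. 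The anti-concentration inequality for pairwise uncorrelated random variables---the new technical ingredient advertised in the abstract---then converts this second-moment bound into the first-moment bound $\E|\sum_j Y_j| = \Omega\bigl((\mathrm{Var}\sum_j Y_j)^{1/2}\bigr)$. Proving this anti-concentration without full independence is the most delicate part of the argument.

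For the matching $\Omega(\sqrt{\log T})$ lower bound, the plan is to select a family of $\Theta(\log T)$ ``nearly orthogonal'' test intervals, roughly one dyadic bin per scale, argue that at each scale the signed load is forced to perform a fresh near-random walk whose terminal variance is $\Omega(1)$ regardless of the algorithm's sign choices, and apply a standard subgaussian maximal inequality to conclude that the maximum over the $\Theta(\log T)$ scales is $\Omega(\sqrt{\log T})$ with constant probability.
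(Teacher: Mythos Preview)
Your upper-bound plan has a genuine gap, and it is precisely the obstacle the paper isolates in \S2.2 and Appendix~A. You keep the potential $\Phi_t = \sum_{I\text{ dyadic}} \cosh(\lambda\,\disc_t(I))$ on the \emph{original} dyadic intervals and then try to prove the drift bound $\E_{x_t}|\langle s_{t-1},v_t\rangle|\gtrsim \Phi_{t-1}$ (or even the weaker $\E|L|\gtrsim \E[Q]/\polylog T$). This is false: the paper constructs explicit values $d_{j,k}$ for which the ratio $\E|L|/\E[Q]$ is $\exp(-\Omega(h))$ on a depth-$h$ tree, so $\beta=\poly(T)$ here. Your attempt to rescue this by ``subtracting conditional means'' and invoking an anti-concentration inequality of the form $\E|\sum_j Y_j|=\Omega\bigl((\mathrm{Var}\sum_j Y_j)^{1/2}\bigr)$ does not work either: that inequality is \emph{not} what the paper proves and is in fact false for pairwise uncorrelated (even pairwise independent) summands---the Hadamard example in \S\ref{sec:anticoncintro} has $\E|\sum X_i|=1$ while $(\mathrm{Var}\sum X_i)^{1/2}=\sqrt{n}$. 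The paper's Theorem~\ref{lemma:anticonc} gives instead $\E|\sum a_iX_i|\ge \E[\sum|a_i|X_i^2]/(cs)$, which is exactly the shape needed to compare $L$ with $Q$, not $L$ with $\sqrt{\mathrm{Var}(L)}$.

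The fix the paper makes is not to argue harder about $L$ in the original basis, but to \emph{change the potential itself}: one works with $\cosh(\lambda d_t^-(I_{j,k}))$ where $d_t^-(I)=d_t(I^l)-d_t(I^r)$, equivalently with the Haar-basis coordinates of the update (\S\ref{sec:NewPotential}, \S\ref{sec:haar}). In that basis the update vector has $\log T+1$ nonzero coordinates that are genuinely uncorrelated under the uniform distribution, so Lemma~\ref{lemma:main} applies with $s=O(\log T)$ and gives $O(\log^2 T)$ on the Haar coefficients; Proposition~\ref{prop:indhaar} (the $\ell_1$-norm of $\widehat{\ind}_I$ equals $1$) then transfers this to dyadic intervals with no loss, and the $O(\log T)$ dyadic decomposition yields the final $O(\log^3 T)$. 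Your ``subtracting conditional means'' is morally the Haar transform, but applying it to the linear term after the fact, rather than building it into the potential, is what breaks the argument: the coefficients $a_i$ in the anti-concentration inequality must be $\sinh$ of Haar discrepancies, not $\sinh$ of the original $d_{j,k}$.

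Your lower-bound sketch is in the right spirit but vaguer than needed. The paper's route is cleaner and reuses the same Haar encoding: Lemma~\ref{lemma:main_lb} shows that for any distribution with uncorrelated coordinates and $\ell_2$-norm $k$, every online algorithm incurs discrepancy $\Omega(k)$ with constant probability; applied with $k=\sqrt{\log T}$ this immediately gives $\Omega(\sqrt{\log T})$ on some Haar coordinate, hence on some interval.
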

This gives an exponential improvement over the $T^{1/\log \log T}$ bound of \cite{JiangKS-arXiv19}, and is tight up to polynomial factors. The lower bound also improves a previous bound of $\Omega(\log^{1/4}T)$ of \cite{JiangKS-arXiv19}. 

There are two natural $d$-dimensional generalizations of the interval discrepancy problem, and our framework, which we will describe in \S\ref{sec:framework}, can handle both of them. 


\medskip
\emph{$d$-dimensional Online Interval Discrepancy}:
 Consider a sequence of points $x_1, \ldots, x_T$ drawn uniformly from the unit cube $[0,1]^d$. The goal is to simultaneously minimize the discrepancy of every interval for all the $d$-coordinates. In other words, to minimize the following for every interval $I$ and every coordinate $i \in [d]$:
\[ \disc^i_t(I) := \Big|\chi_1 \ind_I(x_1 (i)) + \ldots + \chi_t  \ind_I(x_t (i)) \Big|.\]

 The offline version of this problem for $d \ge 2$ is equivalent to the classic $d$-permutations problem, where an upper bound of $O(\sqrt{d}\log T)$~\cite{SpencerST-SODA97} and a breakthrough lower bound of $\Omega(\log T)$~\cite{NewmanNN-FOCS12,Franks18} for $d\geq 3$, and $\Omega(\sqrt{d})$ in general is known for the worst-case placement of points.
 
We show the following generalization of Theorem~\ref{thm:interval1d} that matches the best offline bounds, up to polynomial factors.
 \begin{theorem}[$d$-dimensional Interval Discrepancy]
    \label{thm:interval}
	There is an online algorithm which 
	selects signs $\chi_t \in \{\pm 1\}$ such that,
	with high probability, $\text{ for each } i\in [d] \text{ and } I \subseteq [0,1]$, we have $\max_{t \in [T]} \disc_t^i(I) = O(d \log^3 T)$.
Moreover, 
with constant probability, for any online algorithm
there exists an interval $I$ and a coordinate $i \in [d]$, such that
$ \max_{t \in [T]} \disc_t^i(I) \allowbreak = \Omega\big(\sqrt{d \log \left(T/d\right)}\big)$. 
\end{theorem}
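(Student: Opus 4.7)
The plan is to prove the upper bound by reducing the $d$-dimensional interval discrepancy problem to an online vector balancing instance via dyadic decomposition and invoking the framework of \S\ref{sec:framework}. First, I would use the standard fact that every interval $I \subseteq [0,1]$ is a disjoint union of at most $O(\log T)$ canonical dyadic intervals, after discretizing $[0,1]$ to granularity $1/\poly(T)$ (lossless since the $T$ arrivals are almost surely separated by more than this). It therefore suffices to prove an $O(d \log^2 T)$ bound on the discrepancy of all dyadic sub-intervals across all $d$ coordinates. Next I would set up the associated online vector balancing instance in ambient dimension $n = O(dT)$: the $(i, J)$-th coordinate of $v_t$ equals $\ind_J(x_t(i))$ for $i \in [d]$ and dyadic $J \subseteq [0,1]$, and a single sign $\chi_t \in \{\pm 1\}$ must balance all these coordinates simultaneously. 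The framework of \S\ref{sec:framework} delivers a $\polylog(T)$ bound once the input distribution exhibits suitable anti-concentration of $\ip{d_{t-1}}{v_t}$ for every ``bad'' drift $d_{t-1}$; this holds here because (a) the $d$ coordinates of $x_t$ are independent, so cross-coordinate summands are pairwise uncorrelated, and (b) within a fixed coordinate, the nested dyadic structure fits the paper's new anti-concentration inequality for pairwise uncorrelated sums. The final linear factor of $d$ arises because the variance of $\ip{d_{t-1}}{v_t}$ aggregates additively over the $d$ independent coordinate directions.

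For the lower bound, I would restrict attention to the $d$ canonical test intervals $I_i = [0, 1/2]$, one per coordinate. Since the coordinates of $x_t$ are jointly independent uniform in $[0,1]$, the induced test vectors $y_t = (\ind_{I_1}(x_t(1)), \ldots, \ind_{I_d}(x_t(d))) \in \{0,1\}^d$ are i.i.d.\ uniform, and after centering to $\{\pm 1\}^d$ this is precisely online vector balancing against i.i.d.\ uniform hypercube inputs. Invoking the known lower bound for this regime (e.g., from~\cite{BansalSpencer-arXiv19}, or a direct offline Banaszczyk-type bound for random $\{\pm 1\}^d$ inputs, since the online problem is at least as hard as its offline counterpart) yields $\|d_t\|_{\infty} = \Omega(\sqrt{d \log(T/d)})$ at some time $t \le T$ with constant probability, and hence $\max_t \disc_t^i([0,1/2]) = \Omega(\sqrt{d \log(T/d)})$ for some $i \in [d]$.

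The main obstacle is the upper bound. The dyadic reduction is routine and the framework is developed for generic use; the real work is verifying its anti-concentration prerequisite in the presence of intricate dependencies---dyadic intervals within each coordinate are nested, while the $d$ coordinates themselves are independent but must share a single sign. Keeping the final bound linear in $d$ (rather than $d^{3/2}$ or worse) hinges on decomposing any candidate bad drift $d_{t-1}$ into per-coordinate components $d_{t-1}^{(i)}$ so that $\ip{d_{t-1}}{v_t} = \sum_{i=1}^{d} \ip{d_{t-1}^{(i)}}{v_t^{(i)}}$ becomes a sum of $d$ independent mean-zero terms, at which point the pairwise-uncorrelated anti-concentration inequality yields a clean $\Omega(\|d_{t-1}\|_2)$ lower bound and the desired $d$-linear drift cancellation in the potential analysis.
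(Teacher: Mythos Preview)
Your upper-bound plan has a genuine gap: you work in the \emph{original} basis of dyadic interval indicators $\ind_{I_{j,k}}(x_t(i))$, but within a fixed coordinate $i$ these variables are \emph{not} pairwise uncorrelated. For nested dyadic intervals $I_1\subset I_2$ one has $\BE_x[\ind_{I_1}(x)\ind_{I_2}(x)]=|I_1|>0$, so the hypothesis of Theorem~\ref{lemma:anticonc} fails and the anti-concentration step does not go through. This is exactly the obstacle discussed in \S\ref{sec:overviewInterval} and Appendix~\ref{app:example}: in the indicator basis the anti-concentration parameter $\beta$ can be $\exp(\Omega(h))$ for a tree of height $h=\log T$, which is why the approach of \cite{JiangKS-arXiv19} stalls at $T^{1/\log\log T}$. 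The paper's fix, which your proposal omits, is the basis change to Haar wavelets $\Psi_{j,k}$: there the coordinates are genuinely uncorrelated (Eq.~\eqref{eqn:haarprop}), the updates remain $(d\log T+1)$-sparse, and Lemma~\ref{lemma:main} applies to give $O(d\log^2 T)$ in the Haar basis. Returning to the dyadic-interval basis then costs nothing because $\|\widehat{\ind}_I\|_1=1$ (Proposition~\ref{prop:indhaar}); this last point is what prevents the $\sqrt{n}$ loss you would otherwise incur when translating an $\ell_\infty$ bound between bases.

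Your lower-bound plan also falls short of the target. Restricting to the $d$ intervals $[0,1/2]$ gives a $d$-dimensional i.i.d.\ $\{\pm 1\}^d$ instance, but Lemma~\ref{lemma:main_lb} (or any cited bound from \cite{BansalSpencer-arXiv19}) then yields only $\Omega(\sqrt{d})$, with no $\log(T/d)$ factor---the lemma requires $n$ time steps to certify a $k$-lower bound, and here $n=d$, $k=\sqrt{d}$. The paper obtains the extra logarithm by instead testing against \emph{all} Haar wavelets up to scale $\log(T/d)$ along each coordinate, giving an instance with $n=T$ total coordinates and updates of $\ell_2$-norm $\sqrt{d\log(T/d)}$; Lemma~\ref{lemma:main_lb} then delivers the full $\Omega(\sqrt{d\log(T/d)})$, and one translates back to an interval via $\Psi_{j,k}=\ind_{I_1}-\ind_{I_2}$.
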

Previously, Jiang et al.~\cite{JiangKS-arXiv19} could  extend their analysis for online interval discrepancy to the $d=2$ case and prove the same $T^{1/\log\log T}$ bound. However, their proof is rather ad-hoc and does not seem to generalize to higher $d$.  In contrast, our bound holds for any $d$, and is tight up to polynomial factors.

 The second natural generalization of interval discrepancy is to $d$-dimensional axis-parallel boxes, which gives the following online version of the extensively studied Tusn\'ady's Problem.

\emph{$d$-dimensional Online Tusn\'ady's Problem}: Consider a sequence of points $x_1, \ldots, x_T$ drawn uniformly from the unit cube $[0,1]^d$. The goal is to simultaneously minimize the discrepancy of all axis-parallel boxes. In other words, to minimize the following for every box $B$:
\[ \disc_t(B) := \Big|\chi_1 \ind_B(x_1) + \ldots + \chi_t  \ind_B(x_t)\Big|.\]


The (offline) Tusn\'ady's problem has a fascinating history (see~\cite{Matousek-Book09} and references there in), and after a long line of work, it is known that for the worst-case placement of points, the offline discrepancy is at most $O_d(\log^{d - \frac12} T)$ \cite{Nikolov-Mathematika19}  and at least $\Omega_d(\log^{d-1} T)$~\cite{MN-SoCG15}. 
We show the following result in the online setting, which is tight to within polynomial factors.

\begin{theorem}[Tusn\'ady's problem]
    \label{thm:tusnady}
	There is an online algorithm which 
	selects signs $\chi_t \in \{\pm 1\}$ such that,
	with high probability, for every axis-parallel box $B$, we have $\max_{t \in [T]} \disc_t(B) = O_d(\log^{2d+1} T)$.
	Moreover,
	for any online algorithm, with constant probability, there exists a box $B$ such that
	$\max_{t \in [T]} \disc_t(B) = \Omega_d(\log^{d/2} T)$.
\end{theorem}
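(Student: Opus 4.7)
The plan is to reduce the $d$-dimensional online Tusn\'ady problem to online vector balancing under an arbitrary distribution and then invoke the framework from \S\ref{sec:framework}. Call a box $D = I_1 \times \cdots \times I_d$ a \emph{dyadic box} when each $I_i$ is a dyadic sub-interval of $[0,1]$. The key geometric fact is that, after rounding the points to a grid of side $1/T^{O(1)}$ (an operation that affects the discrepancy of every box by at most a constant), any axis-parallel box $B \subseteq [0,1]^d$ can be written as a disjoint union of at most $(2\log T)^d = O_d(\log^d T)$ dyadic boxes. Hence by the triangle inequality it suffices to bound $\disc_t(D) \le O_d(\log^{d+1} T)$ simultaneously for every dyadic box $D$ and every $t \in [T]$.

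For the upper bound, I would associate with each arriving point $x_t$ the vector $v_t \in \{0,1\}^n$ whose coordinates are indexed by dyadic boxes and satisfy $v_t(D) = \ind_D(x_t)$. Although $n$ is polynomial in $T$, the vector $v_t$ is only $(\log T + 1)^d$-sparse, and the structured distribution on $v_t$ induced by uniform $x_t$ satisfies the hypotheses of the framework in \S\ref{sec:framework}; invoking it yields the desired $O_d(\log^{d+1} T)$-discrepancy for every dyadic box with high probability. The stated bound then follows by combining with the dyadic decomposition of axis-parallel boxes and a union bound over the polynomially-many corner-placements on the fine grid.

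For the lower bound, the plan is to construct an approximately orthogonal family of $\Theta(\log^d T)$ test functions built from tensor products of Haar wavelets, one wavelet per coordinate direction, with scale tuples $(k_1, \ldots, k_d)$ satisfying $\sum_i k_i = \Theta(\log T)$. At each scale tuple, the Haar coefficient of the signed sum $d_t$ evolves like a random walk with $\Theta(1)$-variance increments, because the incoming point $x_t$ lies in a uniformly random dyadic cell of that scale. Even for the best online strategy, an anti-concentration argument (in the spirit of our main technical tool) forces fluctuations of $\Omega(\sqrt{\log T})$ per coefficient, and a maximum over $\Theta(\log^d T)$ essentially independent coefficients gives $\Omega(\log^{d/2} T)$ in $\ell_\infty$-norm. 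Each such coefficient is a $\pm 1$-combination of $O_d(1)$ dyadic-box discrepancies, so some axis-parallel box must attain the claimed lower bound.

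The main obstacle I foresee is the verification, for the upper bound, that the induced distribution on $v_t$ actually fits the framework's hypotheses even though dyadic boxes at nested scales are strongly positively correlated---this is precisely the setting where the new anti-concentration inequality for pairwise uncorrelated random variables is needed to produce a sufficient negative drift in~\eqref{l:ortho}. For the lower bound, the corresponding delicate point is establishing enough ``independence'' among Haar coefficients at different scale tuples so that their $\ell_\infty$-maximum scales like $\sqrt{\log^d T}$ rather than $\sqrt{\log T}$; I expect this to reduce, after averaging out the randomness of the algorithm's signs, to bounding the supremum of a near-Gaussian process indexed by the admissible scale tuples.
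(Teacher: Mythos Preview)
Your upper-bound plan has a genuine gap at precisely the point you flag as an ``obstacle.'' The framework of \S\ref{sec:framework} (specifically Lemma~\ref{lemma:main}) requires $\BE_{v}[v(i)v(j)]=0$ for distinct coordinates $i,j$. But for your choice of coordinates---indicators $\ind_D$ of dyadic boxes---this fails badly: whenever $D_1 \subseteq D_2$ we have $\BE_x[\ind_{D_1}(x)\ind_{D_2}(x)] = |D_1| > 0$. The anti-concentration inequality (Theorem~\ref{lemma:anticonc}) does \emph{not} tolerate positive correlation; its hypothesis is exactly $\BE[X_iX_j]=0$. So the framework cannot be invoked on the indicator vectors, and the drift analysis in \eqref{l:ortho} will not go through.

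The paper's fix is a change of basis: instead of indexing by dyadic boxes, index by tensor products of Haar wavelets $\bh \in \CH_{\le \log T}^{\otimes d}$ and set $v_t(\bh)=\bh(x_t)$. These coordinates \emph{are} pairwise uncorrelated (orthogonality of the Haar system, tensorized), and the sparsity is still $(\log T+1)^d$, so Lemma~\ref{lemma:main} applies directly and gives $\infnorm{d_t}=O(\log^{d+1}T)$ in the Haar basis. The second ingredient you are missing is how to return to dyadic-box discrepancies without losing a $\sqrt{n}$ factor: the paper shows (Proposition~\ref{prop:indhaarbox}) that every dyadic-box indicator has Haar $\ell_1$-norm exactly $1$, so $\disc_t(D) \le \infnorm{d_t}$ with no loss. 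After this, your dyadic decomposition of arbitrary boxes is the same as the paper's.

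For the lower bound, your sketch is substantially more complicated than needed and the arithmetic is unclear (a maximum of $\log^d T$ independent coefficients each of size $\sqrt{\log T}$ gives $\sqrt{\log T \cdot d\log\log T}$, not $\log^{d/2}T$). The paper's argument is more direct: working again in the Haar tensor basis with scales up to $\log A$ where $A=T^{1/d}$, every update has $\ell_2$-norm exactly $(\log A+1)^{d/2}$ and uncorrelated coordinates, so the generic $\ell_2$-potential lower bound (Lemma~\ref{lemma:main_lb}) forces some Haar coordinate to reach $\Omega((\log A)^{d/2})$. Since each tensor Haar function is a $\pm1$ combination of at most $2^d$ box indicators, some box must have discrepancy $\Omega_d(\log^{d/2}T)$. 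No independence-of-coefficients argument is needed.
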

In contrast, the proof approach of \cite{JiangKS-arXiv19} completely breaks down for the Tusn\'ady's problem even in two dimensions and does not give any better lower bounds in terms of $d$.
We recently learned that results similar to Theorems \ref{thm:interval1d} and \ref{thm:tusnady} were also obtained by Dwivedi et al.~\cite{DFGR19}, in the context of understanding the power of online thinning in reducing discrepancy.
 
 {\bf Remark:} Although all the problems above are stated for uniform distributions, one can use the probability integral transformation to reduce any product distribution to the uniform distribution without increasing the discrepancy, so our results in Theorems \ref{thm:interval} and \ref{thm:tusnady} also apply to any product distribution over $[0,1]^d$.

Finally, note that Theorem \ref{thm:interval1d} follows as a direct corollary of either of the above theorems.

\paragraph{General distributions.}
We now consider the setting of \emph{arbitrary} distributions for the online vector balancing problem. Here we need to tackle the orthogonality issue which gave $\Omega(T^{1/2})$ lower bounds  discussed in~\eqref{l:ortho}. 
As discussed earlier, for the uniform distribution over $\{-1,+1\}^n$, Bansal and Spencer~\cite{BansalSpencer-arXiv19} get around this issue since this does not happen for the uniform distribution reasonably often, and hence, $\E[\ip{d_{t-1}}{v_t}]$ is large for any vector $d_{t-1}$. Using this, they obtain the bound $ O(n^{1/2} \log T)$.   
Our next result shows that such a $\poly(n, \log T)$ upper bound is possible even for arbitrary distributions.


\begin{restatable}{theorem}{mainVector}\emph{(Vector balancing under dependencies)}
    \label{thm:signedseries}
For any sequence of vectors $v_1, \ldots, v_T \in [-1,1]^n$ sampled i.i.d. from some  arbitrary distribution $\boldp$, there is an online algorithm which selects signs $\chi_t \in \{\pm 1\}$ such that, with high probability, we have
\[ \max_{t \in [T]}~ \Big\|{\chi_1 v_1 + \ldots + \chi_t  v_t}\Big\|_{\infty}  = ~O(n^{2} (\log T + \log n)).\] 
\end{restatable}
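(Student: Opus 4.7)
The approach I would take follows the exponential-potential philosophy of \cite{BansalSpencer-arXiv19}, adapted to the arbitrary-distribution setting. Define
\[
\Phi_t \;:=\; \sum_{i=1}^{n} \cosh\!\bigl(\lambda\, d_t(i)\bigr), \qquad d_t = \chi_1 v_1 + \cdots + \chi_t v_t,
\]
for a scale $\lambda = \Theta(1/n^2)$. At step $t$, the algorithm observes $v_t$ and plays the greedy sign $\chi_t = -\sign(\ip{v_t}{g_{t-1}})$ where $g_{t-1}(i) := \sinh(\lambda\, d_{t-1}(i))$. The ultimate goal is to show that $\Phi_t = \poly(n,T)$ with high probability for all $t\le T$; since $\|d_t\|_\infty \le \lambda^{-1}\log\Phi_t$, this gives $\|d_t\|_\infty = O(n^2(\log T+\log n))$, as claimed.

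The one-step analysis uses the identity $\cosh(a+b) = \cosh a\cosh b + \sinh a\sinh b$. Combined with $|\cosh(\lambda v_t(i))-1|\le \lambda^2$ (since $|v_t(i)|\le 1$ and $\lambda\le 1$) and $|\sinh(\lambda v_t(i))-\lambda v_t(i)| = O(\lambda^3)$, one obtains
\[
\Phi_t \;\le\; \bigl(1+O(\lambda^2)\bigr)\,\Phi_{t-1} \;-\; \lambda\,\bigl|\ip{v_t}{g_{t-1}}\bigr| \;+\; O(\lambda^3)\,\|g_{t-1}\|_1,
\]
where the negative term is produced by the algorithm's sign choice. Taking expectation over $v_t$, the entire argument reduces to lower bounding $\E_v[|\ip{v}{g_{t-1}}|]$ by a quantity of order $\lambda\,\Phi_{t-1}/\poly(n)$, so that once $\Phi_{t-1}$ exceeds a $\poly(n,T)$ threshold we have $\E[\Phi_t\mid d_{t-1}]\le \Phi_{t-1}$. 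A Freedman/Azuma-type concentration inequality applied to $\log\Phi_t$ then upgrades the one-step supermartingale inequality into the desired uniform high-probability bound $\Phi_t\le \poly(n,T)$.

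The main obstacle is the anti-concentration step under arbitrary $\boldp$: if the coordinates of $v$ are allowed to be dependent then $\E_v|\ip{v}{g}|$ can be tiny even when $\|g\|_2$ is large, which is exactly the pathology exhibited in~\eqref{l:ortho}. My plan is to diagonalise the covariance matrix $\Sigma := \E[(v-\E v)(v-\E v)^\top]$: in its eigenbasis the components of $v$ are \emph{pairwise uncorrelated} (although typically not independent), and the new anti-concentration inequality for pairwise uncorrelated sums announced in the abstract is tailor-made to deliver a bound of the form $\E_v|\ip{v}{g}|\gtrsim \sqrt{g^\top \Sigma\, g}/\poly(n)$. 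A short case analysis then handles the two regimes separately: when $g^\top \Sigma g$ is large relative to $\|g\|_2$, the anti-concentration bound dominates the $(1+O(\lambda^2))\Phi_{t-1}$ noise term, and when $g^\top \Sigma g$ is small the contribution of $v$ to $\Phi$ is itself controlled, because on average $v$ lies in the small-eigenvalue subspace of $\Sigma$ where $g$ is nearly orthogonal to it in the $\Sigma$-inner product. A brief re-centering step handles non-zero-mean $\boldp$ at an additive $\poly(n)$ overhead. I expect the bulk of the technical work to lie in pinning down the correct polynomial factor in the anti-concentration inequality and in propagating it cleanly through the potential calculation.
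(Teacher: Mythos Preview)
Your plan has the right high-level ingredients---the $\cosh$ potential, the eigenbasis of the second-moment matrix, and the anti-concentration inequality for uncorrelated coordinates---but it deploys them in the wrong order, and this creates a real gap. The paper does \emph{not} run the potential in the original coordinates and then appeal to the eigenbasis only for the anti-concentration step. Instead it first passes to the normalised eigenbasis, setting $w_t=n^{-1/2}U^\top v_t$ where $U$ diagonalises $P=\E[vv^\top]$ (after symmetrising $\boldp$ by a Rademacher sign), and runs the entire potential argument there; this is exactly Lemma~\ref{lemma:main} with $s=n$. The reason this matters is that Theorem~\ref{lemma:anticonc} compares the linear term $L=\sum_i a_i X_i$ directly to the quadratic term $\sum_i |a_i| X_i^2$ \emph{only when the $X_i$ are the uncorrelated coordinates}. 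If you keep the potential in the original basis, your quadratic term is $\sum_i \cosh(\lambda d(i))\,v(i)^2$, whose expectation is $\sum_i \cosh(\lambda d(i))\,P_{ii}$; there is no clean inequality relating this to any anti-concentration lower bound you can extract in the eigenbasis. Your proposed substitute, $\E_v|\langle v,g\rangle|\gtrsim \sqrt{g^\top\Sigma g}/\poly(n)$, is not what Theorem~\ref{lemma:anticonc} says and is in fact false: with $g=e_2$ and $v_2\in\{-1,0,+1\}$ taking $\pm1$ each with probability $\epsilon^2/2$, one has $\sqrt{g^\top\Sigma g}=\epsilon$ but $\E|\langle v,g\rangle|=\epsilon^2$, so no fixed $\poly(n)$ works. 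The ensuing case analysis does not repair this, since in the ``small $g^\top\Sigma g$'' regime the second-order drift is controlled by the diagonal entries $P_{ii}$ weighted by $\cosh(\lambda d(i))$, which need not be small. Once the potential is run in the eigenbasis, the conversion back is the simple chain $\|d_t\|_\infty\le\|d_t\|_2=\|U^\top d_t\|_2\le\sqrt{n}\,\|U^\top d_t\|_\infty$, which together with the $n^{-1/2}$ normalisation accounts for the extra factor of $n$ in the final bound.

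Two smaller points. First, in the eigenbasis Lemma~\ref{lemma:main} gives $\E_v[\Delta\Phi]\le n$ unconditionally, so $\E[\Phi_t]\le nT$ and plain Markov plus a union bound over $t$ already yields the high-probability statement; there is no need for Freedman/Azuma on $\log\Phi_t$, and indeed the increments of $\log\Phi_t$ are not obviously bounded. Second, the paper handles a non-zero mean by pre-multiplying each $v_t$ by an independent Rademacher sign, which makes $\boldp$ symmetric at no cost and is simpler than your re-centering.
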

In \S\ref{sec:lowerBounds} we show that the dependencies on $n$ and $\log T$ in this theorem are tight up to polynomial factors as there is an $\Omega(n^{1/2}+({\log T}/\log \log T)^{1/2})$ lower bound.

All of the above results follow from a general framework that we discuss next. In addition to the framework below, the key new technical ingredient is an anti-concentration inequality for dependent random variables, which we describe below in Theorem~\ref{lemma:anticonc}. This may be of independent interest.

\subsection{Our Framework}
\label{sec:framework}
 To tackle the orthogonality issue, one of our key idea is to work with a \emph{different basis} for the discrepancy vectors. More specifically, instead of maintaining bounds on the individual coordinate discrepancies $d_t(i)$, we maintain bounds on suitable linear combinations of them. 
 This basis ensures that the (new) coordinates of the incoming vector are \emph{uncorrelated}, i.e., $\BE[X(i) \cdot X(j)]=\BE[X(i)]\cdot \BE[X(j)]$ for distinct coordinates $i,j$.
 Note that this condition is only on the expected values, and is much weaker, e.g., even pairwise independence. Once one finds a suitable new basis, which turns out to be an eigenbasis of the covariance matrix, the anti-concentration bound for such random variables (proved below in Theorem~\ref{lemma:anticonc}), together with the standard exponential penalty based framework used in previous works \cite{BansalSpencer-arXiv19, JiangKS-arXiv19}, gives 
 Theorem~\ref{thm:signedseries}. 
 


For our results on geometric discrepancy problems, there is an additional challenge,  we cannot afford to lose a $\poly(n)$ factor, as in Theorem~\ref{thm:signedseries} above, since the dimension $n=\Theta(T)$. In this case, however, the update vectors are $(\log T)$-sparse in the original basis (see \S\ref{sec:overview}) and one could hope to utilize this sparsity. Yet another challenge in this case is that bounding the discrepancy in a new basis preserves $\ell_2$-discrepancy in the original basis, but could lead to a $\sqrt{n}$ loss in $\ell_{\infty}$-discrepancy. To get $\polylog(T)$ bounds, we use a natural basis from wavelet theory, called the \emph{Haar system}, which simultaneously has sparsity, uncorrelation, and avoids the $\ell_2$ to $\ell_{\infty}$ loss. This also easily extends to higher dimensions as these wavelets can be tensorized in a natural way to get a suitable basis for higher dimensional versions of the problems. A more detailed description of our framework is given in \S\ref{sec:overview}. Next we discuss our anti-concentration results.



\subsection{Our Anti-Concentration Results for Non-Independent Random Variables}
\label{sec:anticoncintro}
Suppose $X_1,\ldots,X_n$ are independent $\{-1,+1\}$ random variables with mean zero. Then, it is well-known that $|\sum_i X_i|$ has mean $\Theta(n^{1/2})$, and moreover, this value is at least $\Omega(n^{1/2})$ with constant probability. 

Now, on the other hand, consider the following distribution. 
Let $H_n$ be $n \times n$ Hadamard matrix and let $H_n(i)$ denote its $i$-th row for $i \in [n]$. Consider the random vector $X=(X_1,\ldots,X_n)$, where $X = \xi \cdot H_n(i)$ for a Rademacher random variable $\xi \in \{-1, +1\}$ and a uniformly chosen  $i \in [n]$. Then the $X_i$'s are still mean-zero and $\{-1,+1\}$, and in fact, pairwise independent. However, the magnitude of the sum $|\sum_i X_i|$ behaves very differently from the i.i.d. setting above. It takes value $n$ with probability only $1/n$ (if $X= \xi \cdot H_n(1)$, the row of all $1$'s) and is $0$ otherwise. In particular the mean is $\E[|\sum_i X_i|] = 1$ (instead of $n^{1/2}$ above), and moreover the entire contribution to the mean comes from an event with probability only $1/n$.

Nevertheless, we can say interesting things about the anti-concentration of sums of such random variables. In particular, we show the following results for uncorrelated  or pairwise independent random variables.

\begin{restatable}{theorem}{anticonc} \label{lemma:anticonc} \emph{(Uncorrelated anti-concentration)}
For any $(a_1, \ldots, a_n) \in \R^{n}$, let $X_1,\ldots,X_n$ be uncorrelated random variables that are bounded $|X_i| \leq c$, satisfy $\BE[X_iX_j]=0$ for all $i \neq j$, and have sparsity $s$ (the number of non-zero $X_i$'s in any outcome). Then
\begin{align} \label{eq:pairwiseUncor}
    \E \Big|\sum_i a_i X_i\Big|  ~\geq~   \E\Big[\sum_i |a_i| X_i^2\Big]\cdot \frac{1}{cs}. 
\end{align}
Moreover, this bound is tight, even for pairwise independent random variables.
\end{restatable}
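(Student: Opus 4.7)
My plan is to prove the inequality by a dual/test-function argument: exhibit a random variable $Z$ with $|Z| \leq 1$ such that $\E[Z S]$ equals the right-hand side, where $S = \sum_i a_i X_i$. Since $|Z| \leq 1$ gives $Z S \leq |S|$ pointwise, taking expectations immediately yields $\E|S| \geq \E[ZS]$, which would be exactly the bound we want.

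The natural candidate is $Z = \frac{1}{cs}\sum_j \mathrm{sign}(a_j)\, X_j$. The first step is to verify $|Z| \leq 1$: in any outcome, at most $s$ of the $X_j$'s are nonzero and each has magnitude at most $c$, so $\bigl|\sum_j \mathrm{sign}(a_j)\, X_j\bigr| \leq \sum_j |X_j| \leq c\,s$, giving $|Z| \leq 1$. This is where both the sparsity and boundedness hypotheses enter crucially.

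The second step is to compute $\E[ZS] = \frac{1}{cs}\sum_{i,j} \mathrm{sign}(a_j)\, a_i\, \E[X_i X_j]$. The pairwise uncorrelation assumption kills all off-diagonal terms, leaving $\frac{1}{cs}\sum_j \mathrm{sign}(a_j)\, a_j\, \E[X_j^2] = \frac{1}{cs}\sum_j |a_j|\, \E[X_j^2]$. Combining with the first step proves \eqref{eq:pairwiseUncor}.

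For the tightness claim, I would exhibit the Hadamard example already hinted at in Section~\ref{sec:anticoncintro}: let $X = \xi \cdot H_n(I)$, where $H_n$ is the $n\times n$ Hadamard matrix, $I$ is uniform over $[n]$, and $\xi$ is an independent Rademacher variable. Setting $a_i = 1$ for all $i$, we have $c=1$, $s=n$, the $X_i$'s are $\pm 1$-valued and pairwise independent (since any two distinct columns of $H_n$ are orthogonal, conditioning on $\xi$), with $\E[X_i^2] = 1$. The sum $S$ equals $\pm n$ only when $I$ selects the all-ones row (probability $1/n$) and is zero otherwise, so $\E|S| = 1 = \tfrac{1}{cs}\sum_i |a_i|\, \E[X_i^2]$, matching the bound exactly. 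The only genuine obstacle is guessing the test function $Z$; once its form is fixed, both the norm bound and the expectation computation are one-line checks. The form of $Z$ is essentially dictated by this Hadamard tightness example---it is the choice that saturates the inequality on that family---which is why I would look for the proof by first dissecting that example.
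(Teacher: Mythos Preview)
Your proof is correct and is essentially the paper's argument collapsed into a single duality step. The paper proceeds in two stages: first it shows, for each fixed $k$, that $c\,\E\bigl[|S|\cdot \ind_{X_k\neq 0}\bigr]\ge \E[|a_k|X_k^2]$ via the chain $c\cdot\ind_{X_k\neq 0}\ge |X_k|$ and $|SX_k|\ge \sign(a_k)\,S\,X_k$ (Claim~\ref{claim:singlek}); then it sums over $k$ and uses $\sum_k \ind_{X_k\neq 0}\le s$ (Claim~\ref{claim:CombiningYks}). Your test function $Z=\tfrac{1}{cs}\sum_j\sign(a_j)X_j$ performs both stages at once: the bound $|Z|\le 1$ packages boundedness and sparsity together, and the computation of $\E[ZS]$ is exactly the sum of the per-coordinate cross-term cancellations the paper does one $k$ at a time. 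Your presentation is more compact; the paper's two-claim decomposition has the minor advantage that Claim~\ref{claim:CombiningYks} is reused verbatim for the pairwise-independent strengthening in Theorem~\ref{lem:pairwiseIndep}. The tightness example you give is the same Hadamard construction the paper uses.
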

The tightness holds for the Hadamard example above, where $\E |\sum_i X_i|=1$, $s=n$, $c=1$, and $\E[\sum_iX_i^2]=n$.

\begin{restatable}{theorem}{anticoncPairwise} \label{lem:pairwiseIndep} \emph{(Pairwise independent anti-concentration)}
For any $(a_1, \ldots, a_n) \in \R^{n}$, let $X_1,\ldots,X_n$  be mean-zero pairwise independent random variables with sparsity $s \leq n$. 
Then
\begin{align} \label{eq:pairwiseIndep}
    \E \Big[\Big|\sum_i a_i X_i\Big| \Big] ~\geq~   \E\Big[\sum_i |a_i X_i| \Big]\cdot \frac{1}{s}. 
\end{align}
\end{restatable}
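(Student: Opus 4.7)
The plan is to extract $\E|S|$, where $S := \sum_i a_i X_i$, via a duality argument: for any random variable $W$, we have $|\E[SW]| \leq \|W\|_\infty \cdot \E|S|$, so finding a test function $W$ whose sup-norm is controlled by the sparsity $s$ and whose correlation with $S$ equals $\sum_i \E|a_i X_i|$ would immediately give the claim.

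The natural candidate is $W := \sum_i \sign(a_i X_i)$, with the convention $\sign(0) = 0$. Writing $Z_i := a_i X_i$ for brevity (these are still mean-zero and pairwise independent, with sparsity at most $s$), one expands $\E[SW] = \sum_{i,j} \E[Z_i \sign(Z_j)]$. The off-diagonal terms vanish because pairwise independence makes $Z_i$ and $\sign(Z_j)$ independent for $i \neq j$, combined with $\E[Z_i] = 0$; the diagonal terms give $\E[Z_i \sign(Z_i)] = \E|Z_i|$. Hence $\E[SW] = \sum_i \E|a_i X_i|$ on the nose.

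For the sup-norm bound, observe that $|W| \leq \sum_i \ind(Z_i \neq 0) \leq s$ by the sparsity hypothesis. Combining $|\E[SW]| \leq \E|SW| \leq s \cdot \E|S|$ with the calculation above then yields $\E|S| \geq \tfrac{1}{s}\sum_i \E|a_i X_i|$, as desired.

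The main obstacle is conceptual rather than computational: identifying the right test function. The $\sign$ choice is special because it simultaneously (i) uses pairwise independence with mean-zero to kill all cross terms, (ii) makes each diagonal contribution exactly $|Z_i|$ rather than $Z_i$ or $Z_i^2$, and (iii) has sup-norm bounded by the number of nonzero coordinates, which is precisely the sparsity parameter. Once this choice is made, the rest is a one-line duality computation, and one can sanity-check tightness on the Hadamard example (there $W = S$, $\E[SW] = \E[S^2] = n$, $\|W\|_\infty = n = s$, giving $\E|S| \geq 1$).
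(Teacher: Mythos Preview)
Your proof is correct and essentially identical to the paper's. The paper proves, for each $k$, that $\E[|S|\cdot 1_{X_k\neq 0}]\geq \E|a_kX_k|$ by multiplying through by $\sign(a_kX_k)$ and using pairwise independence together with mean-zero to kill the cross terms (Claim~\ref{claim:pairwiseIndepHeart}), then sums over $k$ and invokes sparsity (Claim~\ref{claim:CombiningYks}); your single test function $W=\sum_k \sign(a_kX_k)$ simply carries out this same computation all at once via the duality bound $\E[SW]\leq \|W\|_\infty\,\E|S|$.
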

Note that this bound is also tight for the Hadamard example. In general, the bound \eqref{eq:pairwiseIndep} is stronger than in \eqref{eq:pairwiseUncor}; and a simple example in \S\ref{sec:pairwiseIndep} shows that \eqref{eq:pairwiseIndep} cannot hold for uncorrelated random variables.

Although the anti-concentration properties and the small-ball probabilities for independent variables have been extensively studied (c.f.~\cite{NV13}), the uncorrelated and pairwise independent setting does not seem to have been studied before, and Theorems \ref{lemma:anticonc} and \ref{lem:pairwiseIndep} do not seem to be known, to the best of our knowledge.

\subsection{Applications to Envy Minimization}
\label{sec:envyintro}

A classic measure of fairness in the field of fair division is {envy}~\cite{FoleyEssay67,ThomsonVarian-Essay85,LiptonMMS-EC04,Budish-Journal11}. A recent work of Benade et al.~\cite{BenadeKPP-EC18} introduced the \emph{online envy minimization} problem where $T$ items  arrive one-by-one. In the two player setting, on arrival of item $t \in \{1, \ldots, T\}$ we get to see the valuations $v_{it} \in [0,1]$ for both the players $i\in \{1,2\}$. The goal is to immediately and irrevocably allocate the item to one of the players while  minimizing the maximum \emph{envy}. There are two natural notions of envy: cardinal and ordinal (see \S\ref{sec:envy} for  definitions). Benade et al. \cite{BenadeKPP-EC18} show an $\Omega(T^{1/2})$ lower bound for online envy minimization in the \emph{adversarial} model---the reason is similar to B\'ar\'any's~\cite{Barany79} lower bound for online discrepancy. 
Can we obtain better bounds when the player valuations are drawn from a distribution?\footnote{If we make a simplifying assumption that the distribution does not depend on the time horizon $T$, better bounds are known~\cite{ZP-arXiv19,DGKPS-AAAI14}.}

In the special case of product distributions (each player independently draws their value), Jiang et al.~\cite{JiangKS-arXiv19} observed that the $2$-dimensional interval discrepancy bounds also hold for online envy minimization. In particular, they obtained a $T^{1/\log\log T}$ bound on the ordinal envy. Our new interval discrepancy bound from Theorem~\ref{thm:interval} immediately improves this to an $O(\log^3 T)$ bound on ordinal envy. Moreover, we use our vector balancing result to obtain an $O(\log T)$ bound on the cardinal envy even for general distributions.


\begin{restatable}{corollary}{envyCorol} \label{cor:envyBounds}
Suppose  valuations of two players are drawn i.i.d. from some distribution~$\boldp$ over $[0,1]\times [0,1]$. Then, for an arbitrary distribution $\boldp$ (i.e., player valuations for the same item could be correlated),  the online cardinal envy is $O(\log T)$. 
Moreover, if  $\boldp$ is a product distribution (i.e., player valuations for the same item are  independent) then the  online ordinal envy is also $O(\log^3 T)$.
\end{restatable}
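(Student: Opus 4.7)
The plan is to reduce both envy bounds to results already established in the paper, by encoding an online allocation as a signing $\chi_t \in \{\pm 1\}$, where $\chi_t = +1$ means item $t$ is allocated to player~$1$ and $\chi_t = -1$ means it is allocated to player~$2$.

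For the cardinal envy bound, I would first observe that if $v_t = (v_{1t}, v_{2t}) \in [0,1]^2$ denotes the pair of valuations for item $t$, then the cardinal envy of player~$i$ at time $t$ equals, in absolute value, $|\sum_{s \le t} \chi_s v_{is}|$ (the sign depending on which player we look at). Hence the maximum cardinal envy at time $t$ is exactly the $\ell_\infty$-discrepancy $\|\sum_{s \le t} \chi_s v_s\|_\infty$ of the two-dimensional online vector balancing instance on the i.i.d. sequence $v_1,\ldots,v_T$. Since $[0,1]^2 \subseteq [-1,1]^2$, applying Theorem~\ref{thm:signedseries} with $n = 2$ directly yields an online algorithm whose envy is $O(n^2(\log T + \log n)) = O(\log T)$ at all times with high probability. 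Crucially, Theorem~\ref{thm:signedseries} is stated for an \emph{arbitrary} distribution on $[-1,1]^n$, so this handles correlated valuations as required.

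For the ordinal envy bound under product distributions, I would first apply the probability integral transformation coordinate-wise to reduce to the case where $\boldp$ is uniform on $[0,1]^2$; this map preserves each player's rank ordering over items and hence preserves ordinal envy. Then I would appeal to the reduction from ordinal envy to $2$-dimensional interval discrepancy noted by Jiang, Kulkarni and Singla~\cite{JiangKS-arXiv19}: the ordinal envy of player~$i$ can be rewritten as a supremum, over thresholds $\theta \in [0,1]$, of the signed count of items $t$ with $v_{it} \in [\theta, 1]$, so uniform control of $\disc_t^i(I)$ over all sub-intervals $I$ and both coordinates $i \in \{1,2\}$ controls ordinal envy up to a constant. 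Invoking Theorem~\ref{thm:interval} with $d = 2$ then produces an online signing whose interval discrepancy in each coordinate is $O(\log^3 T)$ with high probability, yielding the claimed $O(\log^3 T)$ bound on ordinal envy.

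The only step that needs care is the ordinal-envy-to-interval-discrepancy reduction: naively discretizing the thresholds and union-bounding could introduce an extra $\log T$ factor, but Theorem~\ref{thm:interval} already controls \emph{every} sub-interval of $[0,1]$ simultaneously (not just a finite discretization), so the reduction incurs no additional loss beyond what is already absorbed into the high-probability guarantee. Consequently, no new discrepancy machinery is needed beyond Theorems~\ref{thm:signedseries} and~\ref{thm:interval}, and the corollary follows immediately.
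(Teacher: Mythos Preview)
Your proposal is correct and follows essentially the same approach as the paper: reduce cardinal envy to $2$-dimensional vector balancing and invoke Theorem~\ref{thm:signedseries}, and reduce ordinal envy (via the Jiang--Kulkarni--Singla observation) to $2$-dimensional interval discrepancy and invoke Theorem~\ref{thm:interval}. The only cosmetic difference is that the paper encodes the update as $v_t = (u_{1t}, -u_{2t})$ so that the two coordinates of $d_t$ equal the two players' envies exactly (rather than up to sign), and it cites the JKS reduction as a black-box lemma rather than sketching it; also, your statement that the maximum cardinal envy is \emph{exactly} $\|d_t\|_\infty$ should read ``at most,'' but this does not affect the argument.
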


\subsubsection*{Paper Organization}

The rest of the paper is organized as follows: in \S\ref{sec:overview}, we give an overview of previous challenges and our main ideas. In \S\ref{sec:anticonc}, we prove our key anti-concentration theorems that are necessary for our upper bounds on discrepancy. In \S\ref{sec:discuncor}, we give upper and lower bounds for online discrepancy under certain ``uncorrelation'' assumptions on the distribution. Then, we apply these bounds in \S\ref{sec:signedseries} to obtain our vector balancing result (Theorem~\ref{thm:signedseries}). In \S\ref{sec:geomdisc}, we again apply these bounds to obtain our geometric discrepancy results (Theorems~\ref{thm:interval} and \ref{thm:tusnady}). In \S\ref{sec:envy}, we show why our results immediately apply to online envy minimization. Finally, in \S\ref{sec:open} we end with some discussion of open problems and directions.

\ignore{\section{Introduction}

Consider the following vector balancing question: given vectors $v_1, ..., v_T \in [-1,1]^n$, find a signing  $\{\pm 1\}^T$ such that the signed sum for each prefix of the sequence has the smallest possible $\ell_{\infty}$ norm, i.e., what is the smallest number $B$ such that 
\[  \max_{t \in [T]} \Big\|{\sum_{i=1}^t \chi(i) v_i}\Big\|_{\infty}  \le B,  \] for some signing $\chi \in \{\pm 1\}^T$. 
This \emph{signed series} question, was originally asked by Riemann and L\'evy \cite{Barany08}, and more generally falls in the realm of classical discrepancy theory. Recall that given a set $\CV$ of $T$ elements and a set system $\FS \subseteq 2^{\CV}$, the combinatorial discrepancy problem asks for a signing $\chi \in \{\pm\}^T$ of all the elements of $\CV$ such that the maximum imbalance of every set in the set system, called the \emph{discrepancy} of the set system,
	\[ \disc(\FS) := \min_{\chi} \max_{S \in \FS} \Big| \sum_{v \in S} \chi(v)\Big|,\]
is minimized. Classical discrepancy theory has been extensively studied and has many interesting applications to approximation algorithms, sparsification, differential privacy, and many other areas (see \cite{Matousek-Book09,Chazelle-Book01, HobergR17, Nikolov-Thesis14,Bansal-Notes19} for more details). 

For the signed series problem, Steinitz \cite{Steinitz-16} showed that $B \le 2n$, independent of $T$ --- this bound was later improved to $n$ \cite{GS80}. Further work by Banaszczyk ~\cite{Banaszczyk-Journal98} showed that $B = O(\sqrt{n\log T})$ with only a mild dependence on $T$. Recently, there also has been a lot of success in making these results on signed series and other discrepancy problems algorithmic~\cite{Lovett-Meka-SICOMP15, Bansal2013, BansalDGL18, BansalG17, BansalDG16, LevyRR17, Rothvoss14, EldanS18}.

In this paper, we consider whether there are online algorithms for the {signed series} problem --- now the vectors $v_t$'s are arriving online and the sign $\chi_t \in \{\pm 1\}$ has to be irrevocably determined after seeing $v_1,\ldots,v_t$. What is the smallest possible $B$ one can achieve in this online setting? Can one still take $B$ to be $\poly(n)$ or $\poly(n, \log T)$? This question was introduced by Spencer~\cite{Spencer77}, who also showed that $B= \Omega(\sqrt{T})$ in general (see also~\cite{Barany79}) --- if the arriving vector $v_t$ is always orthogonal to the current signed prefix-sum, then the bound $B$ will depend polynomially on $T$. Therefore, to obtain bounds that only mildly depend on $T$, some assumption on the input sequence is necessary. 


A natural assumption is that of stochasticity --- if the arriving vectors are chosen in an i.i.d. manner from some known (or unknown) distribution $\boldp$, can one always maintain that the $\ell_{\infty}$ norm of the current signed-sum --- henceforth, referred to as discrepancy, is $\poly(n)$ or $\poly(n, \log T)$ if the signs are being chosen online? If the input vectors $v_i$'s are sampled from a distribution where each coordinate is independent with mean zero, then Bansal and Spencer \cite{BansalSpencer-arXiv19} recently showed that there is an online signing algorithm that achieves a bound of $O(\sqrt{n})$ on the \emph{expected} discrepancy. In fact, their results also show that, one can even ensure that the discrepancy is always $O(\sqrt{n}\log T)$ with high probability. 


The main question we study in this paper is whether non-trivial bounds can be obtained when the input distribution $\boldp$ has dependencies across coordinates. Many natural questions about online discrepancy in geometric and sparse set-systems can be phrased in this way. 


\paragraph{Online Discrepancy of Intervals and Interval Projections}

Let $\FS$ be the set system consisting of all intervals contained in $[0,1]$. Consider a sequence of points $x_1, \ldots, x_T$ drawn uniformly from the interval $[0,1]$. What is the minimum expected discrepancy of the set system $\FS$ under online signings? 

We can view this problem also as a signed series problem where we have a vector $v_i$ corresponding to point $x_i$ such that the  $j^{th}$ coordinate is $1$ only if $x_i$ belongs to the $j^{th}$ interval, and $0$ otherwise. Note the dependencies between coordinates since if $x_i$ belongs to an interval $I$, then it also belongs to any interval that contains the interval $I$. Na\"ively this is as an infinite dimensional problem but using standard ideas one can only consider a polynomial number of dyadic intervals.




A similar question can be considered for interval projections in $d$-dimensions. Let $\FS$ be the set system consisting of sets of the form $I_1 \times I_2 \times \ldots \times I_d$ where at most one  $I_j$ is a sub-interval of $[0,1]$ while all others are complete unit intervals. Consider a sequence of points $x_1, \ldots, x_T$ drawn uniformly from the unit cube $[0,1]^d$. What is the minimum expected discrepancy of the set system $\FS$ under online signings? The above question is a generalization of the interval discrepancy question and another equivalent way of looking at it is that the objective is to simultaneously minimize the interval discrepancy for all the $d$-projections on the $d$-coordinate axes. Therefore, we will refer to the above problem as the \emph{$d$-dimensional interval discrepancy} problem.

Note that the offline version of this problem is trivial for $d=1$ as the  discrepancy is at most $1$ for any placement of points. For $d \ge 2$, the offline problem is equivalent to the $d$-permutations problem and an upper bound of $O(\sqrt{d}\log T)$~\cite{SpencerST-SODA97} and a lower bound of $\Omega(\log T)$  \cite{NewmanNN-FOCS12} for $d=3$ is known for the worst-case placement of points.

For the online problem, the $d=2$ case has interesting applications to online envy minimization as shown by Jiang, Kulkarni and Singla \cite{JiangKS-arXiv19} and we will discuss this later in Section \ref{sec:envyintro}. 


\paragraph{Online Tusn\'ady's Problem} Consider a sequence of points $x_1, \ldots, x_T$ drawn uniformly from the unit cube $[0,1]^d$ and let the set system $\FS$ be the set of all axis-parallel boxes. What is the minimum expected discrepancy of the set system $\FS$ under online signings? 
One can also consider other variants of the problem with the set system $\FS$ being translations of a fixed polytope or the set of all halfspaces.

In the offline setting this is a classic problem in discrepancy theory and it is known that for a worst-case placement of points, the discrepancy is at most $O_d(\log^{d - 1/2} T)$ \cite{Nikolov-Mathematika19}  and at least $\Omega_d(\log^{d-1} T)$~\cite{MN-SoCG15}.


\paragraph{Online Beck-Fiala} Let $\CV$ be a universe with $T$ elements and let $\FS \subseteq 2^{\CV}$ be a family of sets where every element $v \in \CV$ is present in at most $s$ of these sets. Consider the online sequence $x_1, \ldots, x_T$  where each element is chosen uniformly at random from the universe $\CV$. What is the minimum expected discrepancy of the set system $\FS$ under online signings? 

Much work has gone into studying the offline version of this problem. Beck \cite{Beck-Combinatorica81} proved that in the offline case, the discrepancy is $O(s)$. The well-known Beck-Fiala conjecture \cite{BeckFiala-DAM81} asks whether the discrepancy bound can be improved to $O(\sqrt{s})$ which is tight. The work of Banaszczyk \cite{Banaszczyk-Journal98} (see also \cite{BansalG17}) mentioned previously also implies that the discrepancy is at most $O(\sqrt{s \log T})$ for a worst-case instance. Recently, there has been a series of works \cite{HobergRothvoss-SODA19, BansalMeka-SODA19, Franks19, EzraLovett-Random19, P18} that study the stochastic version of this question in the \emph{offline} setting.

Similar to the case of interval discrepancy, the above problems can also be viewed as an online signed series problem by taking $v_t$ to be the incidence vector of the current point $x_t$ with respect to the set system $\FS$ of interest (in the cases above where $\FS$ is infinite, it suffices to consider a polynomial sized sub-family). But note that in the signed series view of the problem, in all of the cases above, the input distribution will be dependent across coordinates.



}

\section{Proof Overview}

\label{sec:overview}

Let us start by reviewing the approach considered by Bansal and Spencer~\cite{BansalSpencer-arXiv19} in the case of independent coordinates. We also discuss the challenges involved in extending it to the setting of dependent coordinates.

\subsection{Independent Coordinates: Bansal and Spencer}
\label{sec:BansalSpencer}

Consider the online vector balancing problem, when each arriving vector is uniformly chosen from $\{\pm 1\}^n$, so that all the coordinates are independent. To design an online algorithm, it is natural to keep a potential function that keeps track of the discrepancy and chooses a sign $\chi_t$ for the current vector $v_t$ that minimizes the increase in the potential. Formally, let $d_{t} = \chi_1 v_1 + \ldots + \chi_tv_t$ denote the discrepancy vector at time $t$. For a parameter $0 < \lambda < 1$, define the potential function 
\[ \Phi_{t} = \sum_{i \in [n]} \cosh(\lambda d_{t}(i)),\]
where $d_{t}(i)$ denotes the $i$th coordinate of $d_t$ and $\cosh(x) = \frac12 \cdot ({e^{x} + e^{-x}})$ for all $x\in \BR$. One should think of the above potential function as a proxy for the maximum discrepancy as $\Phi_t$ is dominated by the maximum discrepancy: $\Phi_t \approx e^{\lambda \infnorm{d_t}}$.


On the arrival of  vector $v_t$, the algorithm chooses a sign $\chi_t \in \{\pm 1\}$, which updates the discrepancy vector to $d_t = d_{t-1} + \chi_t v_t$ and changes the potential from $\Phi_{t-1}$ to $\Phi_t$. If we can show that whenever $\Phi_t > 2n$, the \emph{drift} $\Delta \Phi_t := \Phi_t - \Phi_{t-1}$ is negative in expectation for the sign $\chi_t$ chosen by the algorithm, then we can say that the potential after $T$ arrivals, $\Phi_T$, is bounded by $\poly(nT)$ with high probability. This  implies $\cosh(\lambda \| d_T \|_{\infty})$ is bounded by $\poly(nT)$, which means a bound of $O(\lambda^{-1} \log T)$ on the maximum discrepancy.

Let us try to compute the expected drift. Define $d = d_{t-1}$. By considering the Taylor expansion, we get $\cosh(x+\delta) \leq \cosh(x) + \sinh(x)\delta + \cosh(x)\delta^2$  where $\sinh(x) = \frac12\cdot({e^{x} - e^{-x}})$ for all $x \in \BR$. So,
\[ \Delta \Phi_{t} ~~\approx~~ \sum_{i \in [n]} \Big(\lambda \sinh(\lambda d(i))\cdot(\chi_t v_t(i)) +  \lambda^2\cosh(\lambda d(i))\cdot(\chi_t v_t(i))^2 \Big) ~~=~~  \chi_t \lambda L +  \lambda^2 Q,\]
where $L = \sum_{i \in [n]} \sinh(\lambda d(i)) \cdot v_t(i)$ is the \emph{linear} term and $Q =  \sum_{i \in [n]} \cosh(\lambda d(i))$ is the \emph{quadratic} term from the Taylor expansion (note that $(\chi_tv_t(i))^2=1$). Since the algorithm is free to choose the sign $\chi_t$ to minimize the drift, $\Delta \Phi_{t} \approx -\lambda |L| + \lambda^2 Q$. Now if one can show that $\BE_{v_t}[|L|] \ge \frac{\BE[Q]}{2\lambda}$, we would get that the expected drift $\BE[\Delta \Phi_{t}] < 0$, and this would translate to a good discrepancy bound of $O(\lambda^{-1} \log T)$ if $\lambda$ is large as described above. 


Since $\cosh(x)$ and $|\sinh(x)|$ only differ by at most $1$, we can make the approximation $Q \approx \sum_{i\in [n]}|\sinh(\lambda d(i))|$ up to some small error. So, denoting $\beta = 1/\lambda$ and $a_i = \sinh(\lambda d(i))$, our task reduces to proving the following anti-concentration statement:

\paragraph{Question.} Let $X_1, \ldots, X_n$ be independent random variables with $|X_i|\le 1$. What is the smallest $\beta$ such that the following holds:
\begin{equation}
    \label{eqn:anticonc}
\ \BE\Big[ \Big| \sum_{i \in [n]}  a_i X_i \Big| \Big] \ge \frac{1}{\beta} \cdot \BE\Big[ \sum_{i \in [n]}  |a_i| X_i^2 \Big].  
\end{equation}


In the case where the $X_i$'s are independent Rademacher ($\pm 1$) random variables, classical Khintchine's inequality and Cauchy-Schwarz tell us that
\[\BE\Big[ \Big| \sum_{i \in [n]}  a_i X_i \Big| \Big] ~~ {\ge} ~~ \frac{1}{\sqrt{2}} \cdot \Big(\sum_{i \in [n]} a_i^2 \Big)^{1/2} ~~ {\ge} ~~ \frac{1}{\sqrt{2n}} \Big(\sum_{i \in [n]} |a_i|\Big) ~~ {=} ~~ \frac{1}{\sqrt{2n}} \cdot \BE \Big[\sum_{i \in [n]} |a_i|X_i^2  \Big],\]
so $\beta = O(\sqrt{n})$, which suffices for the discrepancy application. In general, when $X_i$'s are not Rademacher but are still bounded ($|X_i| \le 1$), mean-zero, and \emph{independent}, then following \cite{BansalSpencer-arXiv19} one can still show that $\beta = O(\sqrt{n})$.

The above gives a bound of $O(\sqrt{n}\log T)$ on the maximum discrepancy at every time $t \in [T]$. However, when the input distribution has dependencies across coordinates, i.e. the $X_i$'s are dependent, one can not take $\beta$ to be small in general. For example, $\beta \rightarrow \infty$ when all $a_i$'s are one and a random set of coordinates $S \subset [n]$ of size $n/2$  (say $n$ is even) take value $+1$ and the remaining coordinates in $[n]\setminus S$ take value $-1$. 

Next we discuss the simplest geometric discrepancy problem---the interval discrepancy problem in one dimension---where such a situation already arises if we use the same approach as above. 



\subsection{Interval Discrepancy: Previous Barriers}
\label{sec:overviewInterval}

Recall, we have $T$ points $x_1, \ldots, x_T$ chosen uniformly from $[0,1]$ which need to be given $\pm 1$ signs online. Consider the \emph{dyadic} intervals $I_{j,k} := [k2^{-j},(k+1)2^{-j}]$ where $0 \le k < 2^j$ and $0 \le j \le \log T$. For intuition, imagine embedding the unit interval on a complete binary tree of height $\log T$; now sub-intervals corresponding to every node of the binary tree are dyadic intervals. Note that the smallest dyadic interval has size $2^{-\log T} = 1/T$. 
By a standard reduction, every sub-interval of $[0,1]$ is contained in a union of some $O(\log T)$ dyadic intervals, so it suffices to track the discrepancy of these dyadic intervals. 

Denoting by $\ind_I$ the indicator function for an interval $I$, define 
 \[ d_t(I) := \chi_1 \ind_{I}(x_1) + \ldots + \chi_t \ind_I(x_t).\] 
Note that $|d_t(I_{j,k})|$ is the discrepancy of the interval $I_{j,k}$ at time $t$.
A natural choice of algorithm is to  use the potential function
\[ \Phi_{t} = \sum_{j,k} \cosh(\lambda d_t(I_{j,k})),\]
which is a proxy for the maximum discrepancy of any dyadic interval. Ideally, we want to set $0< \lambda < 1$ as large as possible. Defining $d_{j,k}=d_{t-1}(I_{j,k})$, and doing a similar analysis as before, we derive
\[ \Delta \Phi_{t} \approx \chi_t\lambda L +  \lambda^2 Q,\]
where $L = \sum_{j,k} \sinh(\lambda d_{j,k}) \cdot \ind_{I_{j,k}}(x_t)$ and $Q =  \sum_{j,k} \cosh(\lambda d_{j,k}) \cdot \ind_{I_{j,k}}(x_t)^2$. The problem again reduces to showing an anti-concentration statement as in Eq.~\eqref{eqn:anticonc} with $X_i$'s   being the indicators $\ind_{I_{j,k}}$ for all $j,k$. It turns out that the smallest $\beta$ one can hope for this setting is exponential in the height of the tree (see Appendix~\ref{app:example} for an example), which for binary trees of height $\log T$ only yields a $\poly(T)$ bound on the discrepancy.

One can still leverage something out of this approach---letting $B = T^{1/\log \log T}$, it was shown by Jiang, Kulkarni, and Singla~\cite{JiangKS-arXiv19}  that by embedding $B$-adic intervals on a $B$-ary tree of height $\log\log T$, the above approach gives
a sub-polynomial $T^{1/\log\log T}$ bound for the interval discrepancy problem. However, this cannot be pushed to give a $\polylog(T)$ bound because the above obstruction does not allow us to handle trees of  height $\log T$.

\subsection{Interval Discrepancy: A New Potential and the BDG Inequality} \label{sec:NewPotential}

To get around the previous problem, we take a different approach and instead of directly using  the discrepancies in the potential $\Phi_t$, we work with linear combinations of discrepancies with the following desirable properties. First, if there is a bound on these linear combinations then it should imply a bound on the original discrepancies. Second, and more importantly, the term $L$ in $\Delta \Phi_t$ can be viewed as a martingale, which leads to much better anti-concentration properties, i.e., smaller $\beta$ in~\eqref{eqn:anticonc}.

More specifically, consider the previous embedding of the dyadic intervals of length at least $1/T$ on the complete binary tree of depth $\log T$. For any interval $I_{j,k}$, let the left half interval be $I_{j,k}^l$ and the right half interval be $I_{j,k}^r$, and consider the difference (see Figure~\ref{fig:bdg}) of their discrepancies \[ d_t^-(I_{j,k}) := d_t(I_{j,k}^l) - d_t(I_{j,k}^r). \] 
Note that if $|d_t(I_{j,k})| \le \alpha$ and also $|d^-_t(I_{j,k})| \le \alpha$, then both $|d_t(I_{j,k}^l)| \le \alpha$ and $|d_t(I_{j,k}^r)|\le \alpha$. A simple inductive argument now shows that if  $|d_t([0,1])| \le \alpha$ and the differences of discrepancy for every dyadic interval $I_{j,k}$  satisfies $|d_t^-(I_{j,k})| \le \alpha$, then every dyadic interval also has discrepancy at most $\alpha$, thus satisfying the first property above. 
So let us consider a different potential function:
\[ \Xi_{t} := \cosh(\lambda d_t(I_{0,0})) + \sum_{j,k} \cosh(\lambda d_t^-(I_{j,k})) \]
with $j,k$ ranging over all the dyadic intervals (corresponding to internal nodes of the tree) and $0< \lambda < 1$ is a parameter that we want to set as large as possible. Denoting ${d^-_{j,k}}={d^-_{t-1}}(I_{j,k})$, as before, we can write
$\Delta \Xi_{t} \approx \chi_t  \lambda L +  \lambda^2 Q$, with
\begin{align*}
    \ L &= \sinh(\lambda d_t(I_{0,0})) + \sum_{j,k} \sinh(\lambda d^-_{j,k}) \cdot X_{j,k}(x_t) ~\text{ and }\\
     \ Q &= \cosh(\lambda d_t(I_{0,0})) + \sum_{j,k} \cosh(\lambda d^-_{j,k}) \cdot X_{j,k}(x_t)^2,
\end{align*}
where  $X_{j,k} = \ind_{I_{j,k}^l} - \ind_{I_{j,k}^r}$
for any interval  $I_{j,k}$. Note that $X_{j,k}$ takes value $1$ on the left half of  $I_{j,k}$, and $-1$ on the right half of $I_{j,k}$, and is zero otherwise.

\begin{figure}[h!]
   \centering
   \subfloat[The discrepancy $d_{j,k}$ terms for intervals $I_{j,k}$]{{\includegraphics[width=0.4\textwidth]{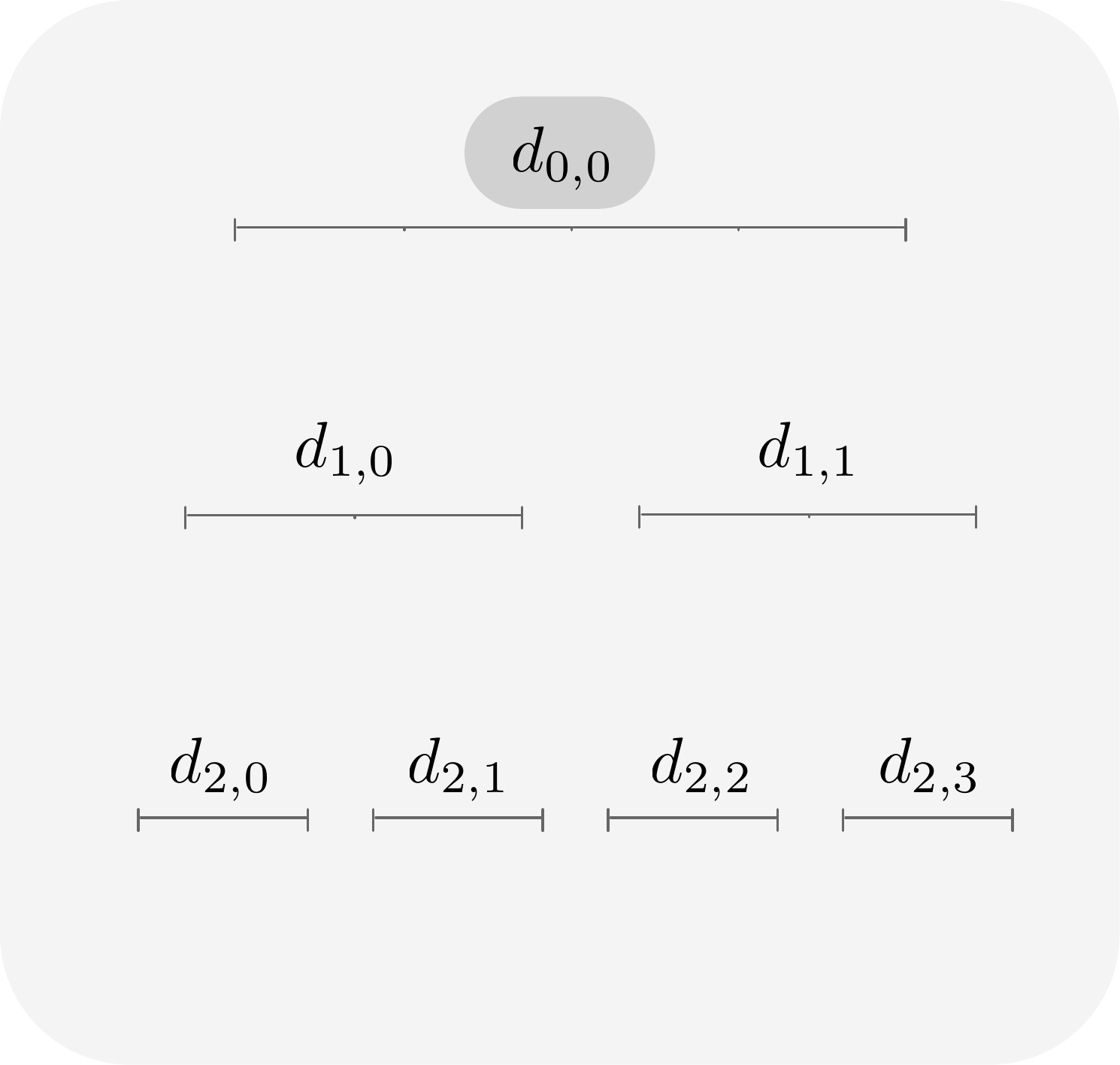}}}%
    \qquad\qquad
    \subfloat[The difference of discrepancy $d^-_{j,k}:=d_t^-(I_{j,k})$ terms for intervals $I_{j,k}$]{{\includegraphics[width=0.4\textwidth]{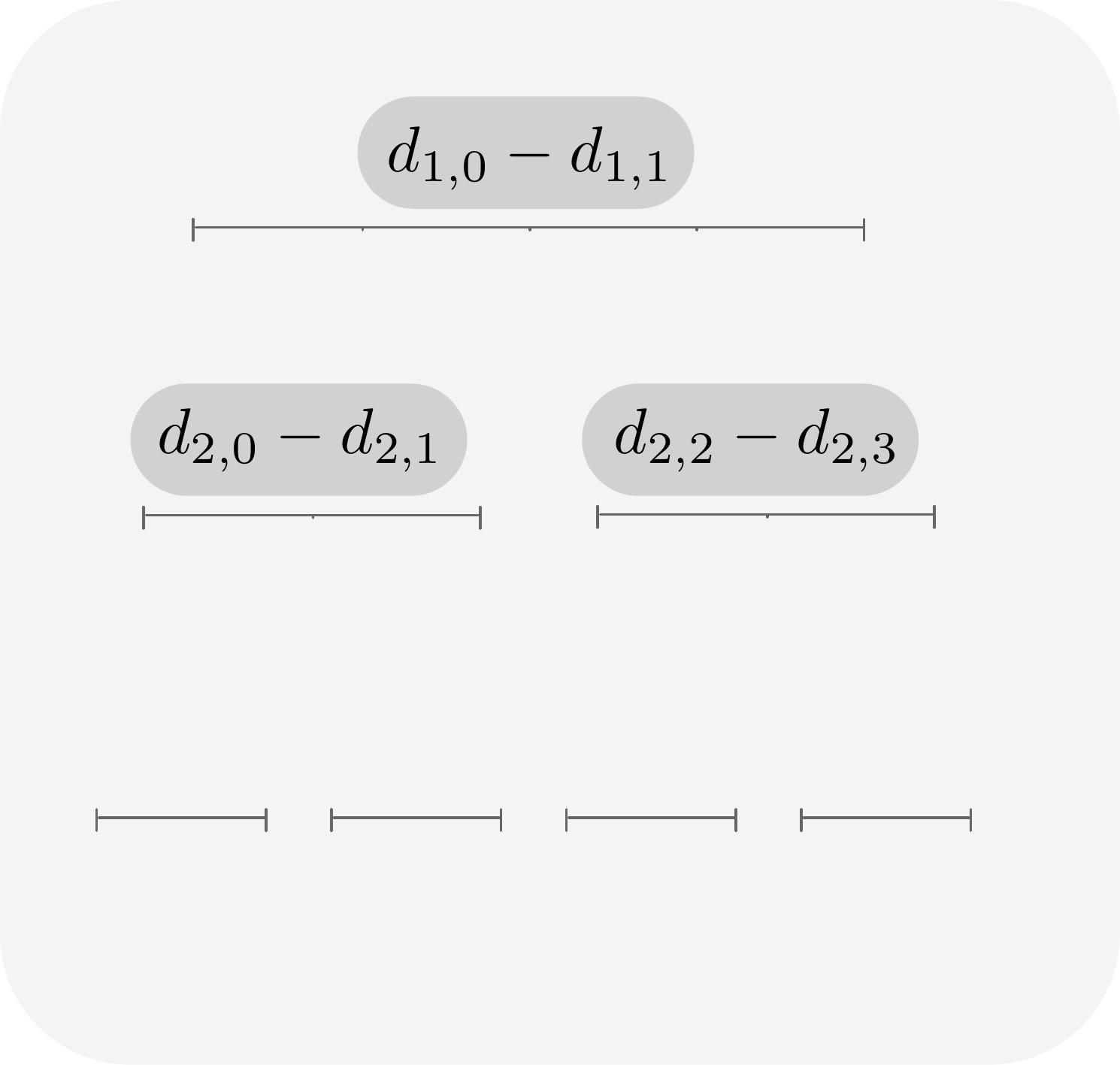}}}%
    \caption{Some terms appearing in the new potential function $\Xi_t$. Note that the hyperbolic cosine for the highlighted terms appears in $\Xi_t$.
    }%
    \label{fig:bdg}%
\end{figure}

\paragraph{Anti-concentration via Martingale analysis.} Now we show how the random variable $L$
can be viewed as a $(\log T)$-step martingale.
Let us view a uniform point $x \in [0,1]$ as being sampled one \emph{bit} at a time, starting with the most significant bit. At any point where $j$ bits of $x$ have been revealed, the interval $I_{j,k}$ on the $j^{\text{th}}$ level of the dyadic tree is determined. 
Now, consider the process that starts with the value $Y_0 = \sinh(\lambda d_{0,0})$ at the root and at any time $0 \le j \le \log T$, the process is on some node of the $j^\text{th}$ level. Conditioned on this node being $I_{j,k}$, the payoff $Y_j := a_j X_j$ where $a_j = \sinh(d^-_{j,k})$ and $X_j$ equals $1$ if the process moves to the left child and equals $-1$ otherwise. Defining $L_j = Y_0 + Y_1 \ldots + Y_j$, it follows that the sequence $L_0, \ldots, L_{\log T}$ is  a martingale and $L = L_{\log T}$. 

Moreover,  by  the approximation $\cosh(x) \approx |\sinh(x)|$, we get that $Q = |Y_0|+|Y_1|+\ldots + |Y_{\log T}|$.
Letting $a_0 = Y_0$ and $X_0=1$, the question then becomes---what is the smallest $\beta$ such that the following holds:
\[ \BE\left|\sum_{i=0}^{\log T} a_i X_i\right| ~\ge~ \frac{1}{\beta} \cdot \BE\left[\sum_{i=0}^{\log T} |a_i |X_i^2\right]  ~=~  \frac{1}{\beta} \cdot \BE\left[\sum_{i=0}^{\log T} |a_i |\right] .\]

For martingales, a statement similar to  Khintchine's inequality is implied by the well-known Burkholder-Davis-Gundy (BDG) inequality (see Theorem \ref{thm:bdg} in Appendix \ref{sec:bdg}):
\[ \BE  \left[\max_{t \le \log T} \Big|\sum_{i=0}^{t} a_i X_i\Big| \right] ~\ge~ c\cdot \BE\left[\Big(\sum_{i=0}^{\log T} a_i^2\Big)^{1/2}\right] \]
for a positive constant $c$. One can also prove (see Lemma \ref{lemma:lstar} in Appendix \ref{sec:bdg}) that 
\[  (1 + \log T) \cdot \BE \left|\sum_{i=0}^{\log T} a_i X_i\right| ~\geq~ \BE \left[ \max_{t \le \log T} \Big|\sum_{i=0}^{t} a_i X_i \Big| \right] .\] 
Then, similar to the analysis for independent Rademacher random variables, using Cauchy-Schwarz, 
\[ (1 + \log T) \cdot \BE \left|\sum_{i=0}^{\log T} a_i X_i\right| ~~ \geq~~  c\cdot \BE\left[\Big(\sum_{i=0}^{\log T} a_i^2\Big)^{1/2}\right] ~~ \ge ~~ \frac{c}{\sqrt{\log T}} \cdot \BE\left[\sum_{i=0}^{\log T} |a_i| \right].\]
So we can conclude that $\beta=\polylog(T)$, which gives a $\polylog(T)$ bound on interval discrepancy. 

How to extend this analysis to $d$-dimensional Tus\'nady's problem? The martingale analysis above strongly relied on the interval structure of the problem, which is not clear even for the two-dimensional Tus\'nady's problem. To answer this question, we take a much more general view of our online discrepancy problem.\footnote{
The more general view in fact gives a (slightly) better bound for interval discrepancy than the martingale based argument above. However, we include this martingale argument here, as it is insightful and could be useful for other problems.}

\subsection{
A More General View of Changing Basis}

One can also view the above analysis of the interval discrepancy problem as a more general underlying principle---that of working with a different \emph{basis}. For example, let us take a linear algebraic approach to interval discrepancy and consider it as a vector balancing problem in $\R^{\CD}$, where $\CD = \{I_{j,k}~|~0\le j \le \log T, 0\le k <2^j\}$ is the set of all dyadic intervals. 
When a new point $x \in [0,1]$ arrives, the coordinate $I \in \CD$ of the update vector $v_t$ is given by
            \[ v_t(I) = \ind_{I}(x).\]
Note that the update $v_t$ lives in a $T$-dimensional subspace $\CV$ of the $(2T-1)$-dimensional space $\BR^{\CD}$ since the $T$-intervals, $I_{\log T,k}$, at the bottom layer determine the rest of the coordinates. 

The original potential function $\Phi$ from \S\ref{sec:BansalSpencer} corresponded to working with the original basis, but with the potential function $\Xi$ from \S\ref{sec:NewPotential}, our approach consisted of bounding the $\ell_{\infty}$-discrepancy in a different basis of the subspace $\CV$. In general, we may choose any basis and then define a potential function as the sum of hyperbolic-cosines of the coordinates. To choose the right basis, we   need several properties from it, but most importantly we need uncorrelation.


\paragraph{Uncorrelation and anti-concentration via the Eigenbasis.}
Recall that we say random variables $X,Y$ are \emph{uncorrelated} if $\BE[XY]=\BE[X]\cdot\BE[Y]$, which is a condition only on the expected values of the random variables. Using Theorem \ref{lemma:anticonc}, to show anti-concentration  it suffices that the coordinates in the \emph{new basis} are mean-zero and uncorrelated, i.e., $\BE_v[v(i)v(j)]=0$ for distinct coordinates $i, j$.
 



 For our vector balancing results under arbitrary distributions in Theorem~\ref{thm:signedseries}, we work in an \emph{eigenbasis} of the covariance matrix.  As will be shown in the proof later, standard results from linear algebra imply that the coordinates are uncorrelated in any eigenbasis. 
 Our next lemma uses  this anti-concentration (along with the hyperbolic cosine potential) to bound discrepancy in the \emph{new basis} in terms of \emph{sparsity}---number of non-zero coordinates---of the incoming vectors.

\begin{restatable}{lemma}{mainGeneralLemma}
    \label{lemma:main} \emph{(Bounded discrepancy)}
    Let $\boldp$ be a distribution supported over $s$-sparse vectors in $[-1,1]^n$ satisfying $\BE_{v \sim \boldp}[v(i)v(j)]=0$ for all $i \neq j \in [n]$. Then for  vectors $v_1, \ldots, v_T$ sampled i.i.d. from $\boldp$, there is an online algorithm that maintains  $O(s (\log n + \log T))$ discrepancy with high probability.
\end{restatable}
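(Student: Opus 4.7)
I would follow the exponential-penalty scheme used in prior work (as recalled in \S\ref{sec:BansalSpencer}) with the potential
\[ \Phi_t ~=~ \sum_{i \in [n]} \cosh\!\big(\lambda\, d_t(i)\big), \]
where $\lambda$ is set to be a sufficiently small constant multiple of $1/s$, and where, on arrival of $v_t$, the sign $\chi_t \in \{\pm 1\}$ is chosen greedily so as to minimize an upper bound on the one-step drift. The target is to show $\BE[\Phi_t] = \poly(n,T)$ uniformly in $t \in [T]$; since $\Phi_t \ge \cosh(\lambda\|d_t\|_\infty)$, Markov's inequality together with a union bound over $t$ will then yield $\max_t \|d_t\|_\infty = O(\lambda^{-1}\log(nT)) = O(s(\log n + \log T))$ with high probability, matching the desired bound.

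\textbf{Drift computation.} A Taylor expansion of $\cosh$, valid since $|\lambda v_t(i)| \le \lambda \le 1$, yields
\[ \Phi_t - \Phi_{t-1} ~\le~ \chi_t\, \lambda\, L + C\, \lambda^2\, Q, \]
where $L = \sum_{i} \sinh(\lambda d_{t-1}(i))\, v_t(i)$, $Q = \sum_{i} \cosh(\lambda d_{t-1}(i))\, v_t(i)^2$, and $C$ is an absolute constant. The greedy choice of $\chi_t$ replaces $\chi_t\lambda L$ by $-\lambda|L|$, so that
\[ \BE\big[\Phi_t - \Phi_{t-1} \mid \mathcal{F}_{t-1}\big] ~\le~ -\lambda\, \BE\big[|L| \mid \mathcal{F}_{t-1}\big] + C\lambda^2\, \BE\big[Q \mid \mathcal{F}_{t-1}\big]. \]

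\textbf{Key step: applying Theorem~\ref{lemma:anticonc}.} Conditional on $\mathcal{F}_{t-1}$, the coefficients $a_i := \sinh(\lambda d_{t-1}(i))$ are deterministic, while the $X_i := v_t(i)$ are bounded with $c = 1$, satisfy $\BE[X_iX_j]=0$ for $i \neq j$, and have sparsity $s$---exactly the hypotheses of Theorem~\ref{lemma:anticonc}. The theorem thus gives
\[ \BE\big[|L| \mid \mathcal{F}_{t-1}\big] ~\ge~ \frac{1}{s}\, \BE\Big[\,\sum_{i} |\sinh(\lambda d_{t-1}(i))|\, v_t(i)^2 ~\Big|~ \mathcal{F}_{t-1}\Big]. \]
Combining this with the elementary bound $\cosh(x) \le |\sinh(x)| + 1$ and with $\sum_i v_t(i)^2 \le s$ (which follows from $s$-sparsity and $\|v_t\|_\infty \le 1$), the right-hand side is at least $\BE[Q \mid \mathcal{F}_{t-1}]/s - 1$. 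Plugging back and choosing $\lambda$ a small enough constant multiple of $1/s$ so that the coefficient $-\lambda/s + C\lambda^2$ of $\BE[Q \mid \mathcal{F}_{t-1}]$ becomes non-positive leaves $\BE[\Phi_t - \Phi_{t-1} \mid \mathcal{F}_{t-1}] \le \lambda = O(1/s)$.

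\textbf{Finishing and the main obstacle.} Iterating gives $\BE[\Phi_t] \le n + T\lambda = O(n+T)$ for every $t \in [T]$. Applying Markov's inequality to $\Phi_t \ge \cosh(\lambda\|d_t\|_\infty)$ at the threshold $\alpha = \Theta(s\log(nT))$ and union-bounding over $t$ then produces the high-probability guarantee $\max_t \|d_t\|_\infty = O(s(\log n + \log T))$. The only genuinely non-routine ingredient is the invocation of the new anti-concentration inequality Theorem~\ref{lemma:anticonc}; once that tool is in hand and the $\cosh$-versus-$|\sinh|$ slack is absorbed into the $\lambda$-budget, the remaining manipulations mirror the standard exponential-penalty analysis of~\cite{BansalSpencer-arXiv19}.
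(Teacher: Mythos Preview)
Your proposal is correct and follows essentially the same route as the paper: the same $\cosh$ potential with $\lambda = \Theta(1/s)$, the same Taylor-expansion bound, the same greedy sign choice, the same invocation of Theorem~\ref{lemma:anticonc}, and the same Markov-plus-union-bound finish. The only cosmetic difference is that you bound the slack via $\sum_i v_t(i)^2 \le s$ (from sparsity) whereas the paper uses the cruder $\sum_i v_t(i)^2 \le n$, yielding a per-step drift of $O(1/s)$ instead of the paper's $O(n)$; both suffice since any $\poly(n,T)$ bound on $\BE[\Phi_t]$ is enough.
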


Even though this lemma implies low discrepancy in the new basis, we need to be careful in bounding discrepancy in the original basis.

\paragraph{Sparsity and going back to the original basis.}
 As discussed briefly in \S\ref{sec:framework}, although working in an eigenbasis allows us to obtain polynomial bounds for vector balancing,  this is apriori not sufficient for our polylogarithmic geometric discrepancy bounds. There are two main challenges---firstly, working in a new basis might lose any sparsity that we might have in the original basis; e.g., in the one-dimensional interval discrepancy problem the arriving vectors are $(\log T)$-sparse (dyadic intervals) in the original basis, but could be $\Omega(T)$-sparse in the new basis; 
and secondly, even if one can find a new basis where the coordinates are uncorrelated and have low sparsity, Lemma~\ref{lemma:main} only implies low $\ell_{\infty}$-discrepancy in the new basis. So going back to the original basis might lose us a factor $\sqrt{n}$ more (we can only claim $\ell_2$-discrepancy is the same). Recall, when we view interval discrepancy as vector balancing, $n=\Theta(T)$, so we cannot afford losing $\sqrt{n}$.  Fortunately, there is a special basis consisting of \emph{Haar wavelets} that allows us to prove $\polylog(T)$  geometric discrepancy bounds.



\subsection{Haar Wavelets: Polylogarithmic Geometric Discrepancy}

There is a natural orthogonal basis associated with the unit interval---the basis of Haar wavelet functions. These consist of the functions $\Psi_{j,k}$'s shown in  Figure~\ref{fig:haar1d}. Together these functions are known to form an orthogonal basis for functions on the unit interval with bounded $L_2$-norm.

\begin{figure}[htb]
\begin{center}
\includegraphics[width=14cm]{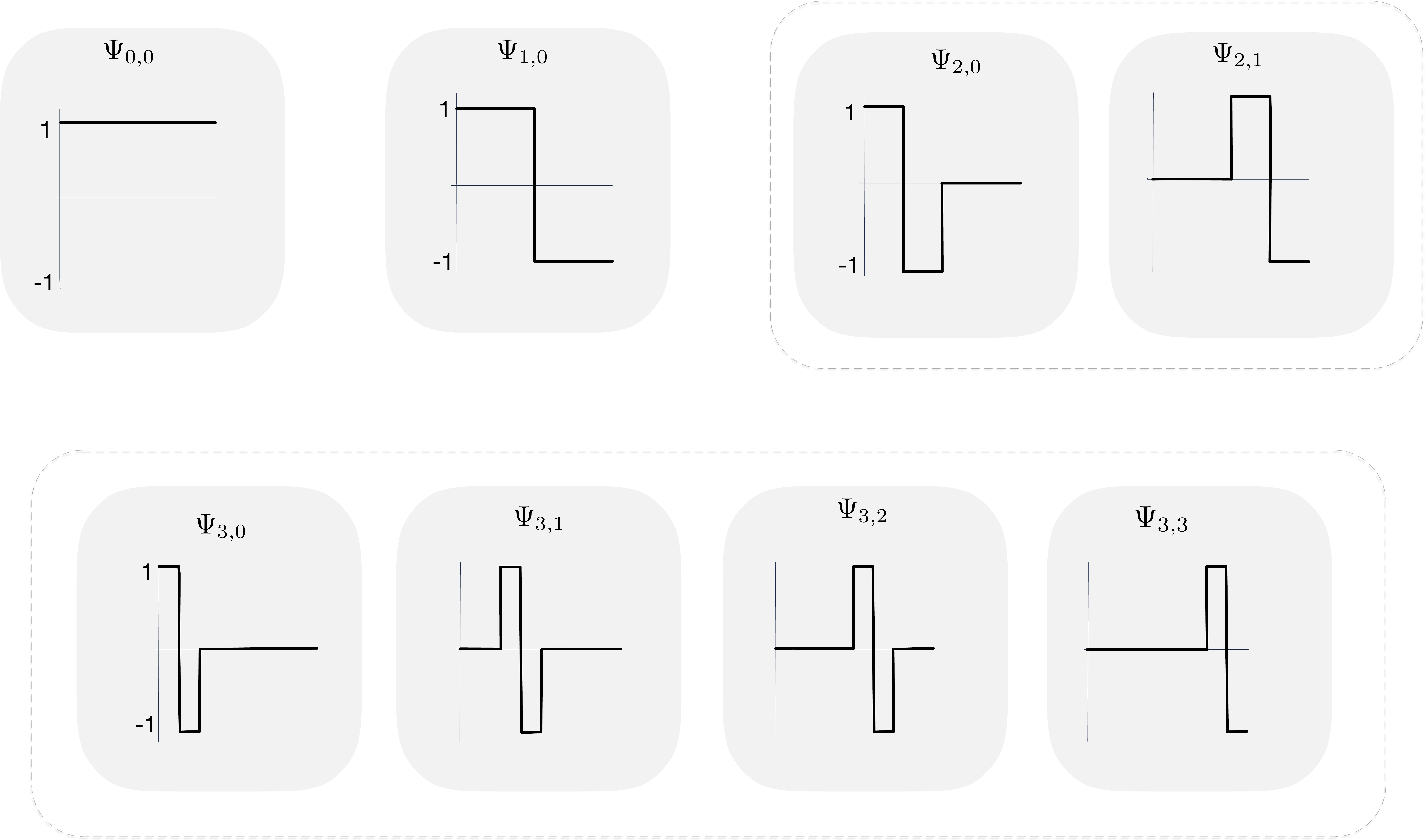}
\caption{Haar wavelets in one dimension}
\label{fig:haar1d}
\end{center}
\end{figure}

Associated with the one-dimensional Haar wavelets is a natural martingale, which is the same martingale that our previous analysis in \S\ref{sec:NewPotential} relied on (e.g., $ X_{j,k}=\Psi_{j+1,k}$ in the notation of \S\ref{sec:NewPotential}.). It turns out that the Haar wavelets have nice orthogonality and sparsity properties that allow us to use Lemma~\ref{lemma:main}---in particular, $\BE_x[h(x)h'(x)]=0$ for distinct Haar wavelet functions $h \neq h'$ and $x$ sampled uniformly from $[0,1]$. Moreover, moving from the basis of Haar wavelets to the original basis does not incur any additional loss in the discrepancy bound, since for any dyadic interval $I$, one can show that its discrepancy
\[ |d_t(I)| ~~\le~~ \alpha  \|\widehat{\ind}_I\|_1,\]
where $\alpha$ is a bound on the discrepancy in the Haar basis and $\|\widehat{\ind}_I\|_1$ is the $\ell_1$-norm of the function $\ind_I$ in the Haar basis. We prove that this $\ell_1$-norm is one, so $|d_t(I)|\le \alpha$. This gives a  more direct proof of the $\polylog(T)$ interval discrepancy bound and also extends easily to the $d$-dimensional interval discrepancy problem.


\begin{figure}[htb]
\begin{center}
\includegraphics[width=14cm]{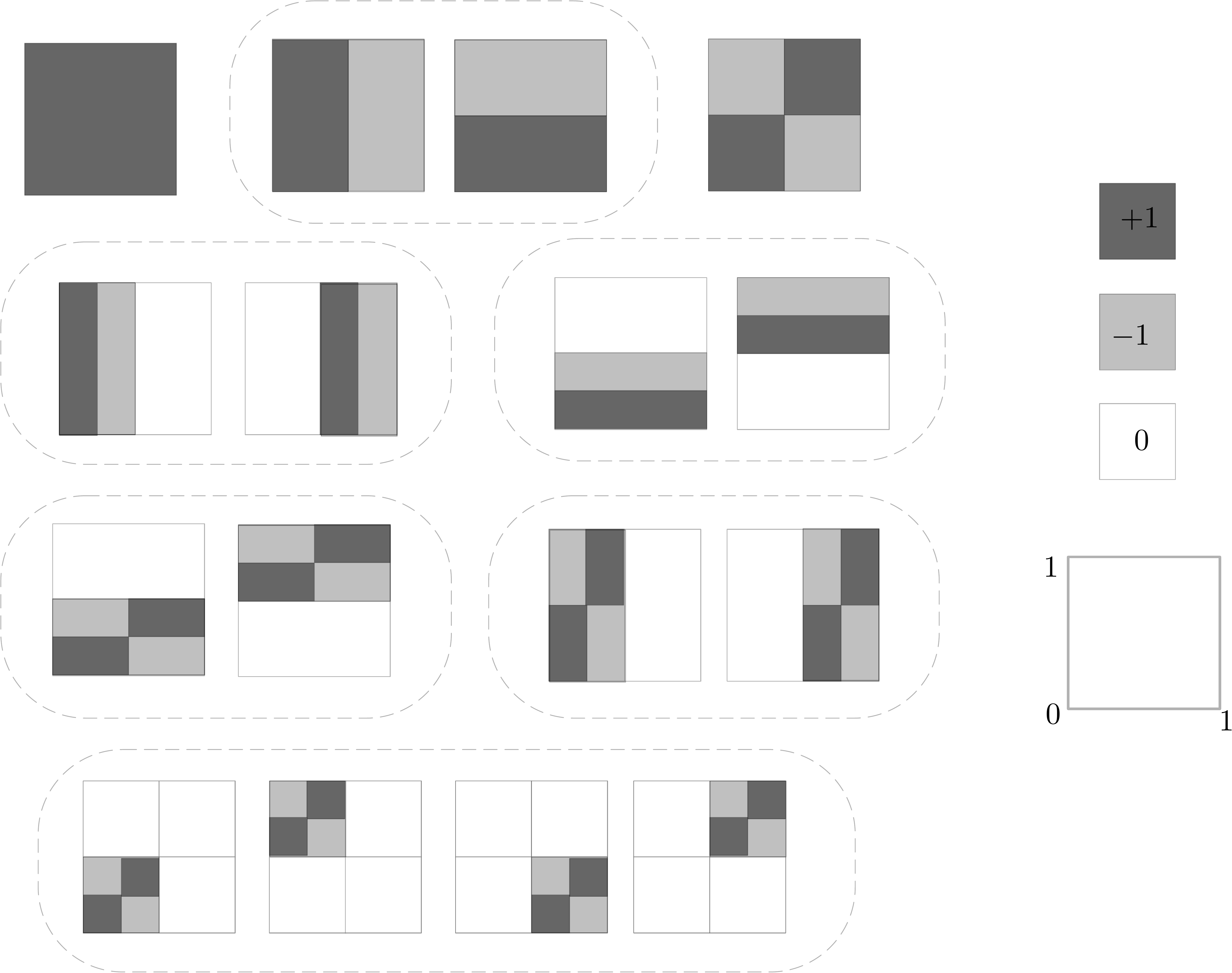}
\caption{Haar wavelets in two dimensions}
\label{fig:haar2d}
\end{center}
\end{figure}

\paragraph{Tus\'nady's problem.} Given the above framework of working in the Haar basis,  our extension to the $d$-dimensional Tus\'nady's problem now naturally follows. For example, in two dimensions, we work with the basis of Haar wavelet functions which is formed by a taking tensor product $\Psi_{j,k} \times \Psi_{j',k'}$ of the one dimensional wavelets (see Figure~\ref{fig:haar2d}). These functions form an orthogonal basis for all  bounded \emph{product} functions over $[0,1]^2$ and have nice sparsity properties. Moreover, we prove that for any axis-parallel box, the $\ell_1$-norm of the Haar basis coefficients is one, so we do not lose any additional factor in the discrepancy bound while moving from the Haar basis to the original basis. This gives a polylogarithmic bound for  two-dimensional Tus\'nady's problem, and also  extends easily to higher dimensions.

\subsection*{Notation}

All logarithms in this paper will be base two. For any integer $k$, throughout the paper  $[k]$ will denote the set $\{1, \ldots, k\}$. For a vector $u \in \BR^d$, we use $u(i)$ to denote the $i^\text{th}$ coordinate of $u$ for $i \in [d]$. Given another vector $v \in \BR^d$, the notation $u \le v$ denotes that $u(i) \le v(i)$ for each $i \in [d]$. The all ones vector is denoted by $\ind$. Given a distribution $\boldp$, we use the notation $x \sim \boldp$ to denote an element $x$ sampled from the distribution $\boldp$. For a real function $f$, we will write $\BE_{x \sim \boldp}[f(x)]$ to denote the expected value of $f(x)$ under $x$ sampled from $\boldp$. If the distribution is clear from the context, then we will abbreviate the above as $\BE_{x}[f(x)]$.

\section{Anti-Concentration Estimates}
\label{sec:anticonc}

In this section we prove the anti-concentration results: we first prove it for uncorrelated random variables, and then give an improved bound for pairwise independent random variables. Although in the rest of this paper we only use the weaker bound for uncorrelated random variables,  we think the improved anti-concentration for pairwise independent random variables is of independent interest and will find applications in the future.

\subsection{Pairwise Uncorrelated Random Variables}

The following anti-concentration bound will be used in our discrepancy applications.

\anticonc*

Note that if we have pairwise uncorrelated mean-zero random variables $X_1, \ldots, X_n$, then we get $\BE[X_iX_j]=\BE[X_i]\cdot \BE[X_j]=0$, so the above lemma implies anti-concentration in this case. The bound in the above lemma is tight  because of the Hadamard example described previously in \S\ref{sec:anticoncintro}.


%

The following is the main claim in the proof of  Theorem~\ref{lemma:anticonc}. Roughly it says that $\E \Big|\sum_i a_i X_i\Big|  \geq \frac{1}{c} \cdot \max_{k \in [n]} \E[|a_k|X_k^2]$. Combined with the observation that $\max_{k \in [n]} \E[|a_k|X_k^2] \ge \frac1n \cdot \E\left[\sum_k |a_k| X_k^2\right]$ this implies Theorem~\ref{lemma:anticonc} when sparsity $s=n$. However, to get inequality \eqref{eq:pairwiseUncor} in terms of sparsity $s$, the statement of the claim has to be more refined.

\begin{claim} 
\label{claim:singlek}
For any $(a_1, \ldots, a_n) \in \R^n$ and random variables $X_1,\ldots,X_n$ satisfying $|X_i| \leq c$ and $\BE[X_iX_j]=0$ for distinct $i,j$, the following holds for any $k\in [n]$,
    \[ \E \left[\Big|\sum_i a_i X_i\Big| \cdot 1_{X_k \neq 0} \right] ~\geq~ \frac{1}{c}\cdot \E[|a_k| X_k^2].
\]
\end{claim}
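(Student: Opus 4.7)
The plan is to reduce the claim to a one-line application of pairwise uncorrelation by replacing the indicator $\mathbf{1}_{X_k \neq 0}$ with something proportional to $|X_k|$, which is what makes the hypothesis $\E[X_iX_k]=0$ usable. The key observation is that since $|X_k|\le c$, we have the pointwise bound $\mathbf{1}_{X_k\neq 0}\ge |X_k|/c$. Writing $S:=\sum_i a_iX_i$, this gives
\[
|S|\cdot \mathbf{1}_{X_k\neq 0} \;\ge\; \frac{1}{c}\,|S\cdot X_k|
\]
on every outcome.

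Next I would take expectations and apply the trivial bound $\E|Y|\ge |\E Y|$ to pull the absolute value outside:
\[
\E\!\left[|S|\cdot \mathbf{1}_{X_k\neq 0}\right]\;\ge\;\frac{1}{c}\,\E\!\left[|S\cdot X_k|\right]\;\ge\;\frac{1}{c}\,\bigl|\E[S\cdot X_k]\bigr|.
\]
Then I would expand $S\cdot X_k=\sum_i a_i X_iX_k$ and invoke the pairwise uncorrelation hypothesis $\E[X_iX_k]=0$ for $i\neq k$, so that every cross term vanishes and only the diagonal term $i=k$ survives. This yields $\E[S\cdot X_k]=a_k\,\E[X_k^2]$, and hence
\[
\E\!\left[|S|\cdot \mathbf{1}_{X_k\neq 0}\right]\;\ge\;\frac{1}{c}\,|a_k|\,\E[X_k^2]\;=\;\frac{1}{c}\,\E\!\left[|a_k|X_k^2\right],
\]
which is the claim.

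There is no real obstacle here; the entire content of the claim is contained in the first inequality $\mathbf{1}_{X_k\neq 0}\ge |X_k|/c$, which is precisely where the boundedness hypothesis $|X_k|\le c$ (and therefore the factor $1/c$ in the conclusion) enters. The rest is just linearity of expectation together with the pairwise uncorrelation hypothesis. I note in passing that Theorem~\ref{lemma:anticonc} will then follow by summing this inequality over $k$ and using that, for each outcome, at most $s$ of the indicators $\mathbf{1}_{X_k\neq 0}$ are nonzero, so $\sum_k \mathbf{1}_{X_k\neq 0}\le s$, whence $\E|S|\ge \frac{1}{s}\sum_k \E[|S|\cdot \mathbf{1}_{X_k\neq 0}]\ge \frac{1}{cs}\,\E\!\left[\sum_k |a_k|X_k^2\right]$.
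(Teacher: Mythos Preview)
Your proof is correct and essentially identical to the paper's: both replace $\mathbf{1}_{X_k\neq 0}$ by $|X_k|/c$ using boundedness, then lower-bound $\E|S\cdot X_k|$ by the absolute value of its expectation (the paper phrases this as multiplying by $\sign(a_k)$, which is the same thing), and finally invoke uncorrelation to kill the cross terms. Your remark on deducing Theorem~\ref{lemma:anticonc} by summing over $k$ and using $\sum_k \mathbf{1}_{X_k\neq 0}\le s$ is also exactly how the paper proceeds (their Claim~\ref{claim:CombiningYks}).
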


\begin{proof}
Using that  $ |X_k|\leq c$, we have
\begin{align*}
    c\cdot \E \left[\Big|\sum_i a_i X_i\Big| \cdot 1_{X_k \neq 0} \right] &\geq  \E \left[\Big|\sum_i a_i X_i\Big| \cdot |X_k|  \right] \\
    \ & = \E \left[ \Big|a_k X_k^2 + \sum_{i\neq k} a_i X_i X_k\Big| \right]  ~\geq~ \E \left[ \sign(a_k) \Big(a_k X_k^2 + \sum_{i\neq k} a_i X_i X_k \Big) \right]. 
\end{align*}
    Since $\BE[X_iX_k]=0$ for $i\neq k$, it follows that
\begin{align*}
\  c\cdot \E \left[\Big|\sum_i a_i X_i\Big| \cdot 1_{X_k \neq 0} \right]  &\ge  \E \left[ |a_k| X_k^2 \right] + \sum_{i\neq k} a_i\cdot  \sign(a_k) \cdot \E \left[ X_i X_k \right] \\
    \ & =  \E \left[ |a_k| X_k^2 \right]. \qedhere
\end{align*} 
\end{proof}

When combined with the following easy claim, this will prove Theorem~\ref{lemma:anticonc}.

\begin{claim} \label{claim:CombiningYks}
Let $Y_1, \ldots, Y_n$ be correlated random variables such that for any outcome at most $s$ of them are non-zero. Moreover, suppose there is a random variable $L$ which satisfies 
    \[ \E \Big[|L|\cdot 1_{Y_k \neq 0} \Big] \geq \E\Big[|Y_k|\Big]~ \text{ for all } k \in [n].
\]
Then, $\E[|L|] \geq \frac{1}{s} \sum_k \E[|Y_k|]$.
\end{claim}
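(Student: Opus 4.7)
The plan is to sum the hypothesis inequality over $k \in [n]$ and use the sparsity assumption in a single step. Specifically, summing the given lower bound $\E[|L| \cdot \mathbf{1}_{Y_k \neq 0}] \geq \E[|Y_k|]$ over all $k$ yields
\[
\sum_{k=1}^n \E\bigl[|L|\cdot \mathbf{1}_{Y_k \neq 0}\bigr] \;\geq\; \sum_{k=1}^n \E[|Y_k|].
\]
By linearity of expectation, the left-hand side equals $\E\bigl[|L| \cdot \sum_k \mathbf{1}_{Y_k \neq 0}\bigr]$.

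Next I would invoke the sparsity hypothesis: in any outcome, at most $s$ of the $Y_k$'s are non-zero, so $\sum_k \mathbf{1}_{Y_k \neq 0} \leq s$ pointwise. Since $|L| \geq 0$, this gives
\[
\E\Bigl[|L| \cdot \sum_k \mathbf{1}_{Y_k \neq 0}\Bigr] \;\leq\; s \cdot \E[|L|].
\]
Chaining the two inequalities yields $s \cdot \E[|L|] \geq \sum_k \E[|Y_k|]$, which rearranges to the desired bound $\E[|L|] \geq \frac{1}{s} \sum_k \E[|Y_k|]$.

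There is no real obstacle here; the claim is essentially a pigeonhole/averaging statement once the sparsity constraint is used to control how many of the indicators $\mathbf{1}_{Y_k \neq 0}$ can simultaneously be active in any realization. The role of this claim in the paper is to convert the ``one index at a time'' bound from Claim~\ref{claim:singlek} (applied with $Y_k = \frac{1}{c}|a_k|X_k^2$ and $L = \sum_i a_i X_i$) into an aggregate lower bound, and the sparsity factor $1/s$ is exactly the loss from the worst case where all $s$ non-zero indicators pile up on the same outcome.
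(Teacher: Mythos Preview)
Your proof is correct and follows exactly the same approach as the paper: sum the hypothesis over $k$, use linearity of expectation to pull $|L|$ out of the sum of indicators, and then bound $\sum_k \mathbf{1}_{Y_k\neq 0}\le s$ pointwise using the sparsity assumption.
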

\begin{proof}
    Sum the given inequality for all $k \in [n]$ to get
\[
     \sum_k \E\Big[|Y_k|\Big] ~~ \leq ~~ \sum_k  \E \Big[|L|\cdot 1_{Y_k \neq 0} \Big] ~~= ~~  \E\Big[|L|\cdot \sum_k 1_{Y_k \neq 0} \Big] ~~\leq ~~ \E \Big[|L|\cdot s \Big] . \qedhere
\]
\end{proof}

\begin{proof}[Proof of Theorem \ref{lemma:anticonc}]
    Applying Claim \ref{claim:singlek} and Claim \ref{claim:CombiningYks} (with $L = \sum_{i} a_iX_i$ and  $Y_i = \frac{1}{c} \cdot |a_i|X_i^2$), we get that  
    \[ \E \left[\Big|\sum_i a_i X_i\Big|\right] ~~\ge~~  \E\left[\sum_k |a_k| X_k^2\right] \cdot\frac{1}{cs}. \qedhere \] 
\end{proof}


\subsection{Pairwise Independent Random Variables} \label{sec:pairwiseIndep}

In the special case of pairwise independent random variables, it is possible to obtain an improved inequality over Theorem~\ref{lemma:anticonc}. 

\anticoncPairwise*

Notice, \eqref{eq:pairwiseIndep} immediately implies \eqref{eq:pairwiseUncor}  for mean-zero pairwise independent random variables with $|X_i|\leq c$. One cannot hope to prove the stronger statement~\eqref{eq:pairwiseIndep} for uncorrelated random variables due to the following example.

\paragraph{Example.} Let $0 < \delta \ll 1$. Suppose $X_1, X_2$ are real random variables distributed over four outcomes:
\[ (X_1, X_2) = \begin{cases} \left(\dfrac{1}{\delta}, \dfrac{1}{\delta}\right) \text{ or }  \left(-\dfrac{1}{\delta}, -\dfrac{1}{\delta}\right) & \text{ w.p. } \frac{\delta^2}{2(1+\delta^2)} \text{ each}, \\
    (1,-1)  ~\text{ or }~  (-1,1) & \text{ w.p. } \frac{1}{2} - \frac{\delta^2}{2(1+\delta^2)} \text{ each}.
\end{cases} \]
Here $X_1$ and $X_2$ are uncorrelated because
\begin{align*}
\E[X_1 X_2] 
~~ = ~~ \frac{1}{\delta^2} \cdot \frac{\delta^2}{1+\delta^2} - 1 \cdot \Big(1 - \frac{\delta^2}{1+\delta^2} \Big) ~~=~~ 0 .
\end{align*}
Now it is easy to verify that $X_1$ and $X_2$ are mean zero, and \[ \BE[|X_1+ X_2|] = \frac{2 \delta}{1 + \delta^2}~~ \text{ and }~~ \BE[|X_1|+|X_2|]= \frac{2 + 2\delta}{1 + \delta^2} .\]
Therefore, the ratio between the two expectations can be made arbitrarily bad by making $\delta \to 0$.


Next, we prove Theorem~\ref{lem:pairwiseIndep}. We start with the following claim.
\begin{claim} \label{claim:pairwiseIndepHeart}
    For any $(a_1, \ldots, a_n) \in \R^{n}$ and mean-zero pairwise independent random variables $X_1,\ldots,X_n$, 
    the following holds for any $k \in [n]$, 
\[ \E \Big[\Big|\sum_i a_i X_i\Big| \cdot 1_{X_k \neq 0} \Big] ~\geq~ \E[|a_k X_k|].
\]
\end{claim}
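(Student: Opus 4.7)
The plan is to fix $k \in [n]$ and analyze the sum by conditioning on the value of $X_k$. Write $S = \sum_i a_i X_i = a_k X_k + T$, where $T := \sum_{i \neq k} a_i X_i$. The goal becomes showing that $\E[|a_k X_k + T| \cdot \ind_{X_k \neq 0}] \geq \E[|a_k X_k|]$.

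The key step is to argue that, conditionally on $X_k$, the random variable $T$ has mean zero. Here we crucially use \emph{pairwise independence} (rather than mere uncorrelation): for each $i \neq k$, the pair $(X_i, X_k)$ is independent, so $\E[X_i \mid X_k] = \E[X_i] = 0$ pointwise in $X_k$. Summing over $i \neq k$ gives $\E[T \mid X_k] = 0$. (This is exactly where the uncorrelated case breaks down, as illustrated by the $\delta \to 0$ example preceding the claim, where $\E[X_2 \mid X_1]$ is nonzero even though $\E[X_1 X_2] = 0$.)

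With the conditional mean in hand, I would apply the conditional Jensen inequality for the convex function $|\cdot|$:
\[
\E\big[|a_k X_k + T| \,\big|\, X_k\big] ~\geq~ \big|\E[a_k X_k + T \mid X_k]\big| ~=~ |a_k X_k|.
\]
Multiplying by $\ind_{X_k \neq 0}$ and taking expectations yields
\[
\E\Big[|a_k X_k + T|\cdot \ind_{X_k \neq 0}\Big] ~\geq~ \E\big[|a_k X_k| \cdot \ind_{X_k \neq 0}\big] ~=~ \E\big[|a_k X_k|\big],
\]
where the last equality uses that $|a_k X_k| \ind_{X_k = 0} = 0$. This would give the claim. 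The only subtle point (and the ``hard part'', insofar as there is one) is recognizing that pairwise independence, not just uncorrelation, is exactly the ingredient needed to produce a pointwise zero conditional mean for $T$; once that observation is in place the rest of the argument is a one-line application of Jensen. Theorem~\ref{lem:pairwiseIndep} should then follow by combining this claim with Claim~\ref{claim:CombiningYks} applied to $L = \sum_i a_i X_i$ and $Y_k = |a_k X_k|$.
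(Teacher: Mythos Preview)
Your proof is correct and is essentially the same argument as the paper's. The paper unpacks your conditional-Jensen step by hand via the sign trick, lower-bounding $|a_k X_k + \sum_{i\neq k} a_i X_i \cdot 1_{X_k\neq 0}|$ by $\sign(a_k X_k)\cdot(a_k X_k + \sum_{i\neq k} a_i X_i \cdot 1_{X_k\neq 0})$ and then using that $\E[X_i f(X_k)] = \E[X_i]\,\E[f(X_k)] = 0$ for any $f$; this is exactly your observation that $\E[T\mid X_k]=0$, just without the conditioning language.
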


\begin{proof}
We have
    \begin{align*}
    \E \Big[\Big|\sum_i a_i X_i\Big| \cdot 1_{X_k \neq 0} \Big] 
    &= \E \Big[\Big|a_k X_k + \sum_{i\neq k} a_i X_i \cdot 1_{X_k \neq 0} \Big| \Big]\\
    & \geq \E \Big[\sign(a_k X_k) \Big(a_k X_k + \sum_{i\neq k} a_i X_i \cdot 1_{X_k \neq 0}  \Big) \Big] \\ 
&= \E \Big[|a_k X_k| + \sign(a_k X_k) \sum_{i\neq k} a_i X_i \cdot 1_{X_k \neq 0}  \Big] .
\end{align*}
    Since $X_i$ and $X_k$ are mean-zero and pairwise independent for $i\neq k$, we have  $\BE[X_if(X_k)]=\BE[X_i] \cdot \BE[f(X_k)]=0$ for any function $f$. 
    Therefore,  
\begin{align*}
\E \Big[\Big|\sum_i a_i X_i\Big| \cdot 1_{X_k \neq 0} \Big] & ~~\geq~~  \E[|a_k X_k|] + \sum_{i\neq k} \E \Big[ \sign(a_k X_k) \cdot a_i X_i \cdot 1_{X_k \neq 0}  \Big] ~~=~~ \E[|a_k X_k|]. \qedhere
\end{align*} 
\end{proof}

\begin{proof}[Proof of Theorem~\ref{lem:pairwiseIndep}]
Combining Claim~\ref{claim:pairwiseIndepHeart} with Claim~\ref{claim:CombiningYks} completes the proof of Theorem~\ref{lem:pairwiseIndep}.
\end{proof}

\section{Online Discrepancy under Uncorrelated Arrivals}
\label{sec:discuncor}
\newcommand{\CT}{\mathcal{T}}
\newcommand{\CE}{\mathcal{E}}
\newcommand{\be}{\mathbf{e}}
\newcommand{\boldf}{\mathbf{f}}

In this section we consider the vector balancing problem in the special case when the input distribution has uncorrelated coordinates. All our upper and lower bounds will then follow from choosing a suitable basis to reduce the original problem to a basis with uncorrelated coordinates. 

\subsection{Upper Bounds}

 We say a vector in $\BR^d$ is \emph{$s$-sparse} if it has at most $s$ non-zero coordinates. The following lemma bounds the discrepancy for uncorrelated sparse distributions.

\mainGeneralLemma*

\begin{proof}[Proof of Lemma~\ref{lemma:main}]
Our algorithm will use the same potential function approach described in \S\ref{sec:overview}, and uses our anti-concentration lemma from \S\ref{sec:anticonc} to argue that the potential  always remains polynomially bounded.

\paragraph{Algorithm.}
 At any time step $t$, let $d_{t} = \chi_1 v_1 + \ldots + \chi_t v_t$ denote the current discrepancy vector after the signs $\chi_1, \ldots, \chi_t \in \{\pm1\}$ have been chosen. Set $\lambda = \frac1{2s}$ and define the potential function 
\[ \Phi_{t} := \sum_{i \in [n]} \cosh(\lambda d_{t}(i)).\]
When the vector $v_t$ arrives, the algorithm chooses the sign $\chi_t$ that minimizes the increase $\Phi_t - \Phi_{t-1}$.

\paragraph{Bounded Positive Drift.} Let us fix a time $t$. To simplify the notation, let $\Delta\Phi = \Phi_t - \Phi_{t-1}$, let $d = d_{t-1}$, and let $v = v_t$. 

After choosing the sign $\chi_t$, the discrepancy vector $d_t = d + \chi_t v$. To bound the change $\Delta \Phi$, since  $\cosh'(x) = \sinh(x)$ and $\sinh'(x) = \cosh(x)$, using Taylor expansion 
\begin{align*}
    \  \Delta{\Phi} &= \sum_i \Big( \lambda\sinh(\lambda d(i)) \cdot(\chi_t v(i))  + \frac{\lambda^2}{2!} \cosh(\lambda d(i)) \cdot (\chi_t v(i))^2 + \frac{\lambda^3}{3!} \sinh(\lambda d(i))\cdot (\chi_t v(i))^3 + \cdots  \Big), \\
    \                & \le \sum_i \Big( \lambda \sinh(\lambda d(i))  \cdot(\chi_t v(i)) + \lambda^2 \cosh(\lambda d(i))  \cdot(\chi_t v(i))^2 \Big),
\end{align*}
where the last inequality follows since $|\sinh(x)| \le \cosh(x)$ for all $x \in \BR$, and since $|\chi_t v(i)|\le 1$ and $\lambda < 1$,  the higher order terms in the Taylor expansion are dominated by the first and second order terms.

Set $L = \sum_i \sinh(\lambda d(i))v(i)$, and $Q^* =  \sum_i \cosh(\lambda d(i))v(i)^2$, and $Q = \sum_i |\sinh(\lambda d(i))|v(i)^2$. Since $\cosh(x) \le |\sinh(x)|+1$ for $x \in \BR$ and $|v(i)|\le 1$, we have $Q^* \le Q + n$. Therefore,
\[ \Delta \Phi \le \chi_t \cdot \lambda \cdot L + \lambda^2 \cdot Q+\lambda^2 n.\]
Since, the algorithm chooses $\chi_t$ to minimize the increase in the potential:
\[ \Delta \Phi \le - \lambda \cdot |L| + \lambda^2 \cdot Q + \lambda^2n.\]
Now, since $\BE_{v}[v(i)v(j)] = 0$ for all $i,j \in [n]$, we can apply Theorem \ref{lemma:anticonc} with $X_i = v(i)$ and $a_i=\sinh(\lambda d(i))$ to get that $\BE_{v}[|L|] \ge \frac{1}{s} \cdot \BE[Q] = 2\lambda \cdot \BE[Q]$, which yields that 
\[ \BE_{v}[\Delta \Phi] ~~\le~~ - \lambda \cdot \BE_v[|L|] + \lambda^2 \cdot \BE_v[Q] + \lambda^2n ~~\le~~ -\lambda^2 \cdot \BE_v[Q] + \lambda^2 n ~~\le~~ n.\]

\paragraph{Discrepancy Bound.} 
The above implies that for any time $t \in [T]$, the expectation $\BE[\Phi_t] \le nT$. By Markov's inequality and a union bound over the $T$ time steps, with probability at least $1 - T^{-2}$, the potential $\Phi_t \le nT^4$ for every time $t \in [T]$. Since at any time $t$, we have $\cosh({\lambda \infnorm{d_t}}) \le \Phi_t$, this implies that with probability at least $1-T^{-2}$, the discrepancy at every time is  
\[ O\left(\frac{\log(nT^4)}{\lambda}\right) = O(s(\log n + \log T)),\]
which finishes the proof of  Lemma~\ref{lemma:main}.
\end{proof}



\subsection{Lower Bounds} \label{sec:lowerBounds}
We now show that the dependence on $s$ and $\log T$ in Lemma~\ref{lemma:main}, cannot be improved up to polynomial factors.
In particular,
a lower bound of $\Omega(s^{1/2}$), even when the time horizon is $T=n$, follows directly from the following more general statement for the vector balancing problem under distributions with uncorrelated coordinates. This general version will later also imply our lower bounds for geometric discrepancy.

\begin{lemma}\label{lemma:main_lb}
    Let $\boldp$ be a distribution supported over vectors in $[-1,1]^n$ with $\ell_2$-norm $k$, such that for every $i\neq j \in [n]$ we have $\BE_{v \sim \boldp}[v(i)v(j)]=0$. Then, for any online algorithm that receives as input vectors $v_1, \ldots, v_n$ sampled i.i.d. from $\boldp$, with probability at least $3/4$, the discrepancy is $\Omega(k)$ at some time $t\in [n]$.   
\end{lemma}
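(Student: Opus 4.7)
The plan is to exploit the uncorrelation hypothesis to show that the orthogonality obstruction driving~\eqref{l:ortho} persists \emph{in expectation}, as long as the current discrepancy has small $\ell_\infty$-norm. First I would establish the key estimate
\[ \BE_{v \sim \boldp} \bigl|\langle d, v\rangle\bigr| ~\le~ k \cdot \|d\|_\infty \qquad \text{for every fixed } d \in \BR^n.\]
The uncorrelation assumption $\BE[v(i)v(j)] = 0$ for $i \neq j$ gives $\BE_v \langle d, v\rangle^2 = \sum_i d(i)^2\, \BE[v(i)^2]$, which I would bound by $\|d\|_\infty^2 \cdot \sum_i \BE[v(i)^2] = k^2 \|d\|_\infty^2$ since the support of $\boldp$ lies on the sphere of radius $k$; an application of Jensen's inequality then yields the displayed estimate.

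Next I would track the growth of $\|d_t\|_2^2$. Expanding $\|d_t\|_2^2 = \|d_{t-1} + \chi_t v_t\|_2^2$ as in~\eqref{l:ortho}, we have $\|d_t\|_2^2 - \|d_{t-1}\|_2^2 \ge k^2 - 2\,|\langle d_{t-1}, v_t\rangle|$ regardless of the sign $\chi_t$ chosen at step $t$. Fix a small absolute constant $c > 0$ (I will take $c = 1/10$) and let $B$ be the event that $\|d_t\|_\infty \le ck$ for every $t \in [n]$; the goal is to show $\Pr[B] \le 1/4$. On the one hand, on $B$ we have $\|d_n\|_2^2 \le c^2 n k^2$, hence $\BE[\|d_n\|_2^2 \,\ind_B] \le c^2 n k^2\,\Pr[B]$. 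On the other hand, telescoping the per-step lower bound and using the first-paragraph estimate conditionally on $B_t := \{\|d_s\|_\infty \le ck\text{ for all } s<t\} \supseteq B$ would give
\[ \BE[\|d_n\|_2^2 \,\ind_B] ~\ge~ n k^2 \Pr[B] ~-~ 2 \sum_{t=1}^n \BE\bigl[|\langle d_{t-1}, v_t\rangle|\,\ind_{B_t}\bigr] ~\ge~ n k^2 \Pr[B] ~-~ 2 c n k^2. \]
Combining the two inequalities gives $(1-c^2)\Pr[B] \le 2c$, so $\Pr[B] \le 2c/(1-c^2) \le 1/4$ for $c = 1/10$. Hence with probability at least $3/4$ some $t \in [n]$ satisfies $\|d_t\|_\infty > ck = \Omega(k)$.

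The main difficulty I anticipate is correctly handling the conditioning in the telescoping step: the bound $\BE_v |\langle d,v\rangle| \le k\|d\|_\infty$ is only useful when $\|d_{t-1}\|_\infty$ is small, which is guaranteed only under $B_t$. To handle this I would use $\ind_B \le \ind_{B_t}$ to replace $\ind_B$ inside each summand, and then condition on the sigma-algebra generated by $v_1,\ldots,v_{t-1}$ (which determines both $d_{t-1}$ and the event $B_t$) to apply the first-paragraph estimate pointwise, yielding $\BE[|\langle d_{t-1}, v_t\rangle|\,\ind_{B_t}] \le ck^2\,\Pr[B_t] \le ck^2$. Once this conditioning is set up correctly, the rest is elementary calculation and choice of constants.
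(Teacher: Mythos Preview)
Your proposal is correct and follows essentially the same approach as the paper: both use the quadratic potential $\|d_t\|_2^2$, the key estimate $(\BE_v\langle d,v\rangle)^2 \le \BE_v\langle d,v\rangle^2 = \sum_i d(i)^2\,\BE[v(i)^2] \le k^2\|d\|_\infty^2$ from uncorrelation, and the contradiction between ``$\|d_n\|_2^2$ must grow like $nk^2$'' and ``$\|d_n\|_2^2 \le c^2 n k^2$ under small $\ell_\infty$-discrepancy''. The only difference is bookkeeping: the paper introduces a stopping time $\tau$ and a modified potential $\Phi^*_t$ that increases deterministically by $k^2/2$ after $\tau$, whereas you work directly with the indicators $\ind_B \le \ind_{B_t}$; both devices serve the same purpose of decoupling the per-step drift bound from the global event. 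One small point worth making explicit (the paper does): since $\boldp$ is fixed, you may assume the algorithm is deterministic, so that $d_{t-1}$ and $B_t$ are indeed measurable with respect to $v_1,\ldots,v_{t-1}$.
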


We remark that the above lower bound may not hold if the algorithms are offline.

\begin{proof}[Proof of Lemma~\ref{lemma:main_lb}]
    Since the distribution $\boldp$ over inputs is fixed, we may assume that the algorithm is deterministic. Let $d_{t} = \chi_1v_1 + \ldots + \chi_{t}v_{t}$ denote the discrepancy vector at any time $t \in [n]$. Consider the quadratic potential function: 
    \[ \Phi_t ~~:=~~ \|d_t\|^2_2 ~~=~~ \sum_{i \in [n]} |d_t(i)|^2. \]


We will need the following claim that shows $\Phi_t$ increases in expectation for any online algorithm. Let us define $\Delta \Phi_t = \Phi_t - \Phi_{t-1}$. 

\begin{claim}
\label{claim:exp}
Conditioned on any $v_1, \ldots, v_{t-1}$ and signs $\chi_1, \ldots, \chi_{t-1}$ such that  $\infnorm{d_{t-1}} \le \frac{k}{4}$, we have
\begin{align}
    \BE_{v_t}[\Delta\Phi_t] &\ge k^2/2
\end{align}
where the expectation is taken only over the update $v_t \sim \boldp$.
\end{claim}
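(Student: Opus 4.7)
The plan is to expand the quadratic potential and use the uncorrelation assumption together with the hypothesis $\|d_{t-1}\|_\infty \leq k/4$ to bound the size of the inner product $\langle d_{t-1}, v_t\rangle$, which is the only term the algorithm can use to decrease $\Phi_t$.

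First I would write
\[
\Phi_t \;=\; \|d_{t-1} + \chi_t v_t\|_2^2 \;=\; \Phi_{t-1} + 2\chi_t \langle d_{t-1}, v_t\rangle + \|v_t\|_2^2,
\]
so $\Delta \Phi_t = 2\chi_t \langle d_{t-1}, v_t\rangle + \|v_t\|_2^2$. Since the support of $\boldp$ consists of vectors of $\ell_2$-norm exactly $k$, we have $\|v_t\|_2^2 = k^2$ almost surely. Since the adversary/algorithm is deterministic and picks $\chi_t \in \{\pm 1\}$ as a function of the entire history (including $v_t$), the best it can do to decrease the potential is $\chi_t = -\operatorname{sign}(\langle d_{t-1}, v_t\rangle)$, which gives
\[
\BE_{v_t}[\Delta \Phi_t] \;\geq\; k^2 - 2\,\BE_{v_t}\bigl[|\langle d_{t-1}, v_t\rangle|\bigr].
\]
So the goal reduces to showing $\BE_{v_t}[|\langle d_{t-1}, v_t\rangle|] \leq k^2/4$.

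Next I would bound the second moment of the inner product using uncorrelation. Writing $\langle d_{t-1}, v_t\rangle^2 = \sum_{i,j} d_{t-1}(i) d_{t-1}(j) v_t(i) v_t(j)$ and taking expectation over $v_t \sim \boldp$, the cross terms vanish because $\BE[v_t(i)v_t(j)] = 0$ for $i \neq j$. Hence
\[
\BE_{v_t}\bigl[\langle d_{t-1}, v_t\rangle^2\bigr] \;=\; \sum_{i \in [n]} d_{t-1}(i)^2\, \BE_{v_t}[v_t(i)^2] \;\leq\; \|d_{t-1}\|_\infty^2 \cdot \BE_{v_t}[\|v_t\|_2^2] \;\leq\; \frac{k^2}{16}\cdot k^2 \;=\; \frac{k^4}{16},
\]
where the last step uses $\|d_{t-1}\|_\infty \leq k/4$ and $\|v_t\|_2^2 = k^2$. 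By Jensen's inequality (or Cauchy–Schwarz), $\BE_{v_t}[|\langle d_{t-1}, v_t\rangle|] \leq \bigl(\BE_{v_t}[\langle d_{t-1}, v_t\rangle^2]\bigr)^{1/2} \leq k^2/4$.

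Plugging back gives $\BE_{v_t}[\Delta \Phi_t] \geq k^2 - 2 \cdot k^2/4 = k^2/2$, which is the claim. There is no real obstacle here; the only thing to be careful about is that $\chi_t$ is allowed to depend on $v_t$, which is why we must upper bound the adversarial gain by the expected absolute value of the inner product rather than the expected inner product itself (which would just be zero by uncorrelation and mean-zero structure, but that is not what we need).
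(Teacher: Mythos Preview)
Your proof is correct and follows essentially the same approach as the paper: expand $\Delta\Phi_t = 2\chi_t\langle d_{t-1},v_t\rangle + k^2$, use uncorrelation to kill the cross terms in $\BE[\langle d_{t-1},v_t\rangle^2]$, bound by $\|d_{t-1}\|_\infty^2\cdot k^2 \le k^4/16$, and apply Jensen. The only cosmetic difference is that the paper packages $L=\chi_t\langle d,v\rangle$ and bounds $|\BE[L]|$ via $(\BE[L])^2\le\BE[L^2]$, whereas you bound $\BE[|\langle d,v\rangle|]$ directly; both reduce to the same second-moment estimate.
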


\begin{proof}
    Set $\Delta \Phi = \Delta\Phi_t$, vector $v = v_t$, and $d=d_{t-1}$. When the update $v$ arrives, note that $d_t = d + \chi_t v$. Therefore, the increase in the potential is given by
\begin{align}
    \label{eqn:deltaphi}
    \ \Delta\Phi ~~=~~ \sum_{i=1}^n \Big(2 d(i)\cdot  \chi_tv(i) +  \big(\chi_t v(i) \big)^2\Big) ~~=~~ 2\chi_t \Big(\sum_{i=1}^n d(i) v(i)\Big) + \|v\|_2^2 ~~=~~ 2 L + k^2,
\end{align}
    where $L = \chi_t \left(\sum_{i=1}^n  d(i) v(i)\right)$. 
    
    To bound the expected value of $L$, we use Jensen's inequality and $\BE_v[v(i)v(j)]=0$ for $i\neq j$ to get:
    \begin{align*}
        \ (\BE_v[L])^2 &~~\le~~ \BE_v[L^2] ~~=~~ \sum_{i=1}^n |d(i)|^2 \cdot  \BE_v[v(i)^2] + \sum_{i\neq j} d(i)d(j)\cdot \BE_v[v(i)v(j)]\\
        \ &~~=~~ \sum_{i=1}^n |d(i)|^2 \cdot  \BE_v[v(i)^2] ~~\le~~ \infnorm{d}^2 \cdot \sum_{i=1}^n \BE_v[v(i)^2] ~~=~~ \infnorm{d}^2 k^2 ~~\le~~ \frac{k^4}{16}.
\end{align*}
    Therefore, plugging the above in \eqref{eqn:deltaphi}, we get
    \[ \BE_v[\Delta\Phi] ~~\ge~~ -2\cdot|\BE_v[L]| + k^2 ~~\ge~~ -2 \cdot \Big( \frac{k^4}{16} \Big)^{1/2} + k^2 ~~\ge~~ \frac{k^2}2. \qedhere \]
\end{proof}

To prove Lemma~\ref{lemma:main_lb} using the last claim, we define $\tau$ to be the first time that $\infnorm{d_\tau} > {k}/{4}$ if such a $\tau$ exists, or $\tau = n$ otherwise. Let us define a new potential $\Phi^*_t$ which remains the same as $\Phi_t$ for $t \le \tau$ and increases by $k^2/2$ deterministically for every $t > \tau$. 

Note that for all possible random choices,  
\[ \Phi^*_n ~~\le~~ \Phi_{\tau-1} + \frac{nk^2}{2} ~~\le~~ \frac{nk^2}{16} + \frac{nk^2}{2},\]
where the second inequality holds since $\infnorm{d_{\tau-1}} \le k/4$ and therefore, $\Phi_{\tau-1} \le \frac{1}{16} \cdot nk^2$.

Moreover, let $\CE$ be the event that $\infnorm{d_t} \le k/4$ for every $t \le n$. Note that when $\CE$ occurs then the final potential $\Phi_n^* \le \frac{1}{16}\cdot{nk^2}$. Defining $p = \BP[\CE]$, we have 
 \begin{equation}
    \label{eqn:upperbound}
     \BE[\Phi^*_n] ~\le~ p \cdot \frac{nk^2}{16} + (1-p)\left(\frac{nk^2}{16} + \frac{nk^2}{2}\right) ~=~ \frac{nk^2}{16} + (1-p)\frac{nk^2}{2}. 
 \end{equation}
 
Moreover, from Claim \ref{claim:exp} and the definition of $\Phi^*_t$, it follows that $\BE[\Phi^*_n] \ge \frac12\cdot {nk^2}$. Comparing this with \eqref{eqn:upperbound} yields that $p \le 1/8$. Hence, with probability at least $7/8$, the discrepancy must be $k/4$ at some point. 
\end{proof}

\paragraph{Dependence on $T$.}

We next show that the discrepancy must be  $\Omega(({\log T}/\log \log T)^{1/2})$ with high probability even when $n=O(1)$ (we assume $n \geq 2$ throughout this discussion). We only sketch the proof here as the arguments are standard. The idea is that for large $T$, there is a high probability of getting a long enough run of consecutive vectors with each $v_t$ almost orthogonal to $d_{t-1}$. 


Let $\boldp$ be the uniform distribution\footnote{Our argument works for a wide class of distributions $\boldp$, as long as for any $d_{t-1} \in \mathbb{R}^n$,
the random incoming vector $v_t$ sampled from $\boldp$ has a non-trivial probability of having a small inner product with $d_{t-1}$.
We only give the argument for the uniform distribution on the unit sphere for simplicity.} over vectors on the unit sphere $S^{n-1}$.
For any vector $u \in \R^n$, and $v$ sampled from $\boldp$, there is a universal constant $c$ so that for all $\delta \leq 1$, 
we have  $\BP[ |\ip{u}{v}| \leq \delta \|u\|_2 /n^{1/2}]  \geq c \delta.$
 

Let $\beta \geq 1 $ be some parameter that we optimize later. 
Setting $\delta = 1/(4\beta)$ gives that whenever $\|d_{t-1}\|_2 \leq \beta n^{1/2}$,
there is at least $c/(4 \beta)$ probability that $| \ip{d_{t-1}}{v_t} | \leq 1/4$, and hence irrespective of the sign $\chi_t$,
\[ \|d_t\|_2^2 ~~ \geq ~~ \|d_{t-1}\|_2^2 - 2 |\ip{d_{t-1}}{v_t} | + \|v_t\|^2_2 ~~\geq~~ \|d_{t-1}\|_2^2 + 1/2. \]

So for any $\tau$ consecutive steps, with at least $(c/4\beta)^{\tau}$ probability, this happens at every step {(or the $\ell_2$-discrepancy already exceeds $\beta n^{1/2}$ at some step)}, and hence the discrepancy has $\ell_2$-norm at least $\Omega(\tau^{1/2})$.

Partitioning the time horizon $T$ into ${T}/{\tau}$ disjoint blocks, and setting $\beta = \log (T)$,
and $\tau = \Omega(\log T/\log \log T)$, the probability 
such a run does not occur in any block is at most  
$(1 - (c/4\beta)^{\tau})^{(T/\tau)} = T^{-\Omega(1)}$  by our choice of the parameters.
This gives the claimed lower bound.

\section{Online Vector Balancing: Polynomial Bounds}
\label{sec:signedseries}

In this section, we prove our vector balancing result for arbitrary distributions. 
\mainVector*


\begin{proof}[Proof of Theorem \ref{thm:signedseries}]
    Without loss of generality, we may assume that the distribution $\boldp$ is \emph{symmetric}, i.e. both $v$ and $-v$ have the same probability density, since we can always multiply the incoming vector $v$ with a Rademacher $\pm 1$ random variable  without changing the problem. Let $P \in \BR^{d \times d}$ denote the covariance matrix of our input distribution, and since $\boldp$ is symmetric, we get $P = \BE_{v \sim \boldp}[vv^T]$. Let $U$ denote the orthogonal matrix whose columns $u_1, \ldots, u_n$ form an eigenbasis for $P$. Note that in terms of its spectral decomposition, $P = \sum_{k=1}^n \lambda_k u_ku_k^T$ for $\lambda_k \in \R$.

 To prove our discrepancy bound, instead of working in the original basis, we will view our problem as a vector balancing problem in the basis given by the columns of $U$. Now the update sequence is given by $w_1,\ldots,w_T$ where $w_t = \frac{1}{\sqrt{n}} \cdot U^Tv$ is the normalized update vector in the basis $U$. 

    %

Since $\|v\|_2 \le \sqrt{n}$ and orthogonal matrices preserve $\ell_2$-norm, we have $\|U^Tv\|_2 = \|v\|_2 \le \sqrt{n}$. It follows that for any $t$, we have $\infnorm{w_t} \le \|{w_t}\|_2 = \frac{1}{\sqrt{n}} \cdot \|{U^Tv}\|_2 \le 1$. Furthermore, any two coordinates of the update vectors $w_t$'s are uncorrelated, i.e., for any $i\neq j\in[n]$ we have
    \[ \BE[w_t(i) \cdot w_t(j)] ~~=~~ \frac{1}{n} \BE[\ip{u_i}{v}\ip{u_j}{v}] ~~=~~ \frac{1}{n}  \BE[u_i^Tv v^Tu_j] ~~=~~ \frac{1}{n}  u_i^TPu_j ~~=~~ 0,\]
where the last equality holds since $P=\sum_{k=1}^n \lambda_k u_k^Tu_k$.

    Thus, we can use the online algorithm from Lemma \ref{lemma:main} to select signs $\chi_1,\ldots,\chi_T \in \{\pm 1\}$. Let $d_t = \chi_1 v_1 + \ldots + \chi_t v_t$ denote the discrepancy in the original basis. Now using the trivial bound of $s \le n$ on  sparsity in Lemma~\ref{lemma:main}, we get  that with high probability, 
    \[\frac{1}{\sqrt{n}} \infnorm{U^Td_t} ~=~ O(n(\log n + \log T)).\]
    Again, using that orthogonal matrices preserve $\ell_2$-norm, 
    \[ \infnorm{d_t} ~~\le~~ \|d_t\|_2 ~~=~~ \|U^Td_t\|_2 ~~\le~~ \sqrt{n} \cdot \infnorm {U^Td_t}~~ = ~~ O(n^{2}(\log n + \log T)). \qedhere\]
\end{proof}


\section{Online Geometric Discrepancy: Polylogarithmic Bounds}
\label{sec:geomdisc}

In this section, we will prove our results on geometric discrepancy problems. For this, we will need a special basis of orthogonal functions on the unit interval called the Haar system. We briefly review its properties.

\subsection{Preliminaries: Haar System}
\label{sec:haar}

 Let $\Psi:\BR \to \BR$ denote the \emph{mother wavelet} function
\begin{equation*}
        \Psi(x) = \begin{cases}
                1 &\text{if } 0 \le x < \frac12 \\
                -1 &\text{if } \frac12 \le x < 1 \\
                0 &\text{otherwise.}
        \end{cases}
\end{equation*}

The \emph{unnormalized} Haar wavelet functions (recall Figure \ref{fig:haar1d}) are defined as follows: 
let $\Psi_{0,0}(x)=1$ for all $x \in \BR$
, and for any $j \in \RN^*$ and $0\le k < 2^{j-1}$ define 
\[ \Psi_{j,k}(x) := \Psi(2^{j-1}x - k).\] 

We call $j$ as the \emph{scale} and $k$ as the \emph{shift} of the wavelet.


The Haar wavelet functions have nice orthogonality properties. In particular, let $x$ be drawn uniformly from the unit interval $[0,1]$. Then, one can easily check that  
\begin{equation}
\begin{aligned}
\label{eqn:haarprop}
\ &\BE_{x} [\Psi_{j,k}(x)^2] = 2^{-(j-1)} &\text{ for } j > 0,\\
\ &\BE_{x} [\Psi_{j,k}(x)] = 0 &\text{ for } j > 0, \\
\ &\BE_{x}[ \Psi_{j,k}(x)\Psi_{j',k'}(x)] = 0 &\text{ unless }j=j' \text{ and } k=k'. 
\end{aligned}
\end{equation}


The Haar wavelet functions are not just orthogonal, but they form an orthogonal basis (not orthonormal), called the \emph{Haar system}, for the class of functions on the unit interval with bounded $L_2$-norm. In particular, we have the following proposition where for  $j \in \RZ_{\ge 0}$ we denote  $\CH_j = \bigcup_{0 \le k < 2^{j-1}} \{\Psi_{j,k}\}$ and let $\CH = \bigcup_{j \ge 0} \CH_j$.     
\begin{proposition}[\cite{W04}, Chapter 5] 
\label{prop:ortho}
    For any $f : [0,1] \to \BR$ such that $\BE_{x}[f(x)^2] < \infty$, we have  
        \[ f = \sum_{h \in \CH} \fhat(h) \cdot h(x) \]
        where $\fhat(h) = \frac{\BE_{x}[f(x)h(x)]}{\BE_x[h(x)^2]}$ is the corresponding coefficient in the Haar system basis for $h \in \CH$. 
\end{proposition}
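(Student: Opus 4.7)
The plan is to verify the two standard ingredients needed for an orthogonal system in $L^2([0,1])$ to be a basis: orthogonality (already given), and completeness in the sense that the linear span of $\CH$ is dense in $L^2([0,1])$. Given both, the coefficient formula $\fhat(h) = \BE_x[f(x)h(x)]/\BE_x[h(x)^2]$ is forced: orthogonality ensures that if $f = \sum_{h'} c_{h'} h'$ then taking the inner product with any $h$ picks out precisely the claimed value of $c_h$.

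The key step is to establish density by a dimension-counting induction. Let $V_J \subseteq L^2([0,1])$ denote the $2^J$-dimensional subspace of functions that are constant on each dyadic interval of length $2^{-J}$. I would show by induction on $J$ that $V_J$ equals the span of $\{\Psi_{0,0}\} \cup \bigcup_{j=1}^{J} \CH_j$. The base case $V_0 = \BR \cdot \Psi_{0,0}$ is immediate. For the inductive step, any $g \in V_{J+1}$ decomposes as $g = \bar{g} + (g - \bar{g})$ where $\bar{g} \in V_J$ is the function obtained by averaging $g$ over each dyadic interval of length $2^{-J}$. On each such interval $[k\cdot 2^{-J},(k+1)\cdot 2^{-J})$, the correction $g-\bar{g}$ has zero average and takes two values on the two halves, hence is a scalar multiple of $\Psi_{J+1,k}$. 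Since $|\CH_{J+1}| = 2^J = \dim V_{J+1} - \dim V_J$, the dimensions match and orthogonality guarantees linear independence.

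Next I would show that $\bigcup_{J \ge 0} V_J$ is dense in $L^2([0,1])$. Since continuous functions are dense in $L^2([0,1])$, it suffices to approximate any continuous $f$ by its dyadic averages; uniform continuity of $f$ on $[0,1]$ gives uniform, hence $L^2$-, convergence.

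Finally, combining orthogonality with density via the Hilbert space projection theorem: the partial sums $f_J := \sum_{h \in \{\Psi_{0,0}\} \cup \bigcup_{j=1}^J \CH_j} \fhat(h)\cdot h$ are exactly the orthogonal projections of $f$ onto $V_J$, so $\|f - f_J\|_2 \to 0$ as $J \to \infty$, which gives the claimed expansion in the $L^2$ sense. The only subtle issue is the mode of convergence: the proposition as stated is an equality of functions, but pointwise or almost-everywhere convergence would require more than $L^2$-integrability, so the natural interpretation is $L^2$-convergence, which is exactly what the subsequent discrepancy arguments rely on.
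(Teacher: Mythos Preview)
Your proof is correct and follows the standard approach to establishing completeness of the Haar system: the multiresolution ladder $V_0 \subset V_1 \subset \cdots$, the dimension count showing $\CH_{J+1}$ spans the orthogonal complement of $V_J$ in $V_{J+1}$, and density of dyadic step functions in $L^2$. Your observation about the mode of convergence is also well taken.

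However, note that the paper does not actually prove this proposition: it is stated with a citation to a wavelet textbook (\cite{W04}, Chapter~5) and treated as a standard background fact, so there is no proof in the paper to compare against. Your argument is exactly the kind of proof one finds in such references.
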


Indeed, since the Haar system forms an orthogonal basis, we also have that
\[ \BE_x[f(x)^2] = \sum_{h \in \CH} \fhat(h)^2 \cdot \BE_x[h(x)^2].\]
A simple corollary of Proposition \ref{prop:ortho} is that $\CH^{\otimes d}$ is an orthogonal basis for the linear space spanned by all functions over the unit cube $[0,1]^d$ that have a product structure and bounded $L_2$-norm. In particular, let $\bh = (h_1, \ldots, h_d)$ be an element of $\CH^{\otimes d}$ which we will view as a function from $[0,1]^d \to \BR$ by defining $\bh(x) = \prod_{i=1}^d h_i(x(i))$ for $x \in [0,1]^d$. Note that distinct $\bh$ and $\bh'$ are orthogonal since for $x$ drawn uniformly from $[0,1]^d$, 
\begin{equation}
    \label{eqn:haartensorprop}
    \ \BE_{x}[\bh(x)\bh'(x)] ~~=~~ \prod_{i=1}^d \BE_{x(i)}[h_i(x(i))h_i'(x(i))] ~~=~~ 0.
\end{equation}
Moreover, any product function can be expressed by functions in $\CH^{\otimes d}$ as given in the following proposition\footnote{More generally, Proposition~\ref{prop:tensorHaarBasis} holds for any $L_2$-integrable function $f \in L_2([0,1]^d)$, as the linear span of product functions with domain $[0,1]^d$ is dense in $L_2([0,1]^d)$.}.

\begin{proposition} \label{prop:tensorHaarBasis}   
    For any $f : [0,1]^d \to \BR$ such that $f(x) = \prod_{i=1}^df_i(x(i))$ for some $f_i : [0,1] \to \BR$ satisfying $\BE_{x(i)}[f_i(x(i))^2] < \infty$, we have that 
        \[ f = \sum_{\bh \in \CH^{\otimes d}} \fhat(\bh) \bh, \]
        where $\fhat(\bh) =  \frac{\BE_{x}[f(x) \bh(x)]}{\BE_x[ \bh(x)^2]}$. 
\end{proposition}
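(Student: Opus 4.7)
The plan is to reduce the $d$-dimensional statement to the one-dimensional Haar expansion from Proposition~\ref{prop:ortho} by exploiting the product structure of $f$. Concretely, I would first apply Proposition~\ref{prop:ortho} to each factor $f_i$ to obtain
\[ f_i(x(i)) = \sum_{h \in \CH} \widehat{f_i}(h)\, h(x(i)), \qquad \widehat{f_i}(h) = \frac{\BE_{x(i)}[f_i(x(i))h(x(i))]}{\BE_{x(i)}[h(x(i))^2]}, \]
where each series converges in $L_2([0,1])$. Then I would formally multiply these one-dimensional expansions:
\[ f(x) = \prod_{i=1}^d \Big(\sum_{h_i \in \CH} \widehat{f_i}(h_i) h_i(x(i))\Big) = \sum_{\bh=(h_1,\dots,h_d) \in \CH^{\otimes d}} \Big(\prod_{i=1}^d \widehat{f_i}(h_i)\Big)\, \bh(x). \]

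The second step is to identify the product coefficient $\prod_i \widehat{f_i}(h_i)$ with $\widehat{f}(\bh)$. Since $x$ is drawn uniformly from $[0,1]^d$ its coordinates are independent, so by the product structure of both $f$ and $\bh$, as well as the fact that $\bh(x)^2 = \prod_i h_i(x(i))^2$, Fubini gives
\[ \BE_x[f(x)\bh(x)] = \prod_{i=1}^d \BE_{x(i)}[f_i(x(i))h_i(x(i))], \qquad \BE_x[\bh(x)^2] = \prod_{i=1}^d \BE_{x(i)}[h_i(x(i))^2], \]
from which $\prod_i \widehat{f_i}(h_i) = \widehat{f}(\bh)$ follows directly. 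Combined with the computation above, this yields the desired expansion.

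The main technical obstacle is justifying the interchange of the sum and the product when the Haar expansion of each $f_i$ is infinite. The cleanest way to handle this is to truncate: let $\CH^{(N)}$ denote the Haar functions of scale at most $N$, let $f_i^{(N)} = \sum_{h \in \CH^{(N)}} \widehat{f_i}(h)h$, and define $f^{(N)}(x) := \prod_i f_i^{(N)}(x(i))$. For each finite $N$ the identity expands into a finite sum, so the product-of-sums manipulation and the coefficient identification above are valid term-by-term. Since each $f_i^{(N)} \to f_i$ in $L_2([0,1])$ by Proposition~\ref{prop:ortho} and the $f_i$'s are $L_2$-integrable, a standard application of Minkowski/Cauchy--Schwarz on the one-dimensional factors shows $f^{(N)} \to f$ in $L_2([0,1]^d)$. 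The orthogonality relation~\eqref{eqn:haartensorprop} then guarantees that the partial sums $\sum_{\bh \in (\CH^{(N)})^{\otimes d}} \widehat{f}(\bh)\bh$ converge in $L_2([0,1]^d)$ to the same limit, establishing the proposition. The footnote's stronger statement for arbitrary $L_2$ functions then follows by density of product functions in $L_2([0,1]^d)$ and the $L_2$-continuity of the projection onto the orthogonal system $\CH^{\otimes d}$.
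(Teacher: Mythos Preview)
Your proposal is correct and follows the same approach as the paper: expand each one-dimensional factor $f_i$ in the Haar basis via Proposition~\ref{prop:ortho} and then tensor the expansions. The paper's own proof is a one-line sketch (``expressing each $f_i$ in the Haar system basis \ldots\ we get the statement by tensoring''), whereas you spell out the coefficient identification via independence of the coordinates and supply the truncation/$L_2$-convergence justification that the paper leaves implicit.
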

\begin{proof}
   Expressing each $f_i$ in the Haar system basis using Proposition \ref{prop:ortho}, we get the statement of the proposition by tensoring.
\end{proof}


Let $\CH_{\le j} = \bigcup_{j' \le j} \CH_{j'}$, and define $\CH_{< j}, \CH_{>j}, \CH_{\ge j}$ analogously. Then, we have the following lemma about the Haar system decomposition of indicator functions of dyadic intervals. 
\begin{proposition}
    \label{prop:indhaar}
    Let $\ind_{I_{\ell,m}}$ denote the indicator function for the interval $I_{\ell,m} = \big[m2^{-\ell}, (m+1)2^{-\ell} \big)$. Then, 
    \begin{align*}
        \  \sum_{h \in \CH_0}|\widehat{\ind}_{I_{\ell,m}}(h)| &=2^{-\ell} , \\
        \  \sum_{h \in \CH_j}|\widehat{\ind}_{I_{\ell,m}}(h)| &=2^{-(\ell + 1 - j)} \text{ for any } 1 \le j \le \ell \text{ and}\\
         \  \widehat{\ind}_{I_{\ell,m}}(h) &= 0 \text{ for any } h \in \CH_{> \ell}.
    \end{align*}
In particular, we  have $\sum_{h \in \CH} |\widehat{\ind}_{I_{\ell,m}}(h)| = \sum_{h \in \CH_{\le \ell}} |\widehat{\ind}_{I_{\ell,m}}(h)| = 1.$
\end{proposition}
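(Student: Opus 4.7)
The plan is to compute each coefficient $\widehat{\ind}_{I_{\ell,m}}(\Psi_{j,k}) = 2^{j-1}\cdot\BE_x[\ind_{I_{\ell,m}}(x)\Psi_{j,k}(x)]$ (using $\BE_x[\Psi_{j,k}^2] = 2^{-(j-1)}$ for $j\ge 1$) directly, by a case analysis on how the support $J_{j,k} := [k 2^{-(j-1)}, (k+1) 2^{-(j-1)})$ of $\Psi_{j,k}$ sits relative to $I_{\ell,m}$. The key observation is that any two dyadic intervals are either disjoint or nested, so only a small number of configurations can arise, and the integral in each case is either $0$ or a constant multiple of the length of $I_{\ell,m}$.

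First I would handle the edge cases. For $j=0$, the sole element $\Psi_{0,0}\equiv 1$ gives $\widehat{\ind}_{I_{\ell,m}}(\Psi_{0,0}) = \BE_x[\ind_{I_{\ell,m}}] = 2^{-\ell}$, which is the first identity. For $j > \ell$, the length $2^{-(j-1)}$ of $J_{j,k}$ is at most $2^{-\ell}$, so dyadic nesting forces $J_{j,k}$ to be either disjoint from $I_{\ell,m}$ (in which case the integral is zero) or contained in $I_{\ell,m}$; in the latter case $\ind_{I_{\ell,m}}$ is constantly $1$ on $J_{j,k}$, so the integral reduces to $\BE_x[\Psi_{j,k}]=0$. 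This handles the third identity.

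The main case is $1 \le j \le \ell$, where $I_{\ell,m}$ is at most as long as a single half of $J_{j,k}$ (each half has length $2^{-j} \ge 2^{-\ell}$). Dyadic nesting then forces $I_{\ell,m}$ to be either disjoint from $J_{j,k}$ or fully contained in exactly one half of $J_{j,k}$, on which $\Psi_{j,k}$ is constant $\pm 1$. Exactly one shift $k$, namely the unique dyadic ancestor of $I_{\ell,m}$ at scale $j$, yields containment, and then $|\BE_x[\ind_{I_{\ell,m}}\Psi_{j,k}]| = 2^{-\ell}$, so that $|\widehat{\ind}_{I_{\ell,m}}(\Psi_{j,k})| = 2^{j-1-\ell} = 2^{-(\ell+1-j)}$, establishing the second identity.

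Finally, summing the $j=0$ and $1 \le j \le \ell$ contributions produces a geometric series
\[
\sum_{h \in \CH} |\widehat{\ind}_{I_{\ell,m}}(h)| ~=~ 2^{-\ell} + \sum_{j=1}^{\ell} 2^{-(\ell+1-j)} ~=~ 2^{-\ell} + (1 - 2^{-\ell}) ~=~ 1,
\]
which yields the final identity. There is no serious obstacle here; the argument is purely combinatorial and rests only on the self-similar dyadic structure of the Haar basis and the uniqueness of the dyadic ancestor of $I_{\ell,m}$ at each scale.
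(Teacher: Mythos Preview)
Your proposal is correct and follows essentially the same approach as the paper: both arguments identify, for each scale $j$, the unique wavelet whose support contains $I_{\ell,m}$, note that $\Psi_{j,k}$ is constant $\pm 1$ on $I_{\ell,m}$ when $1\le j\le\ell$ (and averages to zero when $j>\ell$), and then sum the resulting geometric series. Your framing in terms of the support interval $J_{j,k}$ and dyadic nesting is a slightly more explicit way of saying what the paper says directly about the values of $\Psi_{j,k}$ on $I_{\ell,m}$, but the content is the same.
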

\begin{proof}
    First, observe that for any $j > \ell$, either $\Psi_{j,k}(x)=0$ identically on the interval $I_{\ell,m}$ or it takes $+1$ and $-1$ values on equal size sub-intervals of $I_{\ell,m}$, so that $\BE_x[1_{I_{\ell,m}}(x)\Psi_{j,k}(x)]=0$.
    
    For $\Psi_{0,0}$, notice that $\BE_x[1_{I_{\ell,m}}(x)\Psi_{0,0}(x)] = 2^{-\ell}$ and $\BE_x[\Psi_{0,0}(x)^2] = 1$. Therefore, we have
    \[ \sum_{h \in \CH_0} |\widehat{\ind}_{I_{\ell,m}}(h)| =2^{-\ell}.\]

    Now consider any $1 \le j \le \ell$. Then, there exists a unique $0\le k^* < 2^{j-1}$ such that $\Psi_{j,k^*}$ takes the constant value $+1$ or $-1$ identically on the interval $I_{\ell,m}$, and the function $\Psi_{j,k}$ is identically zero on the interval $I_{\ell,m}$ for any $k \neq k^*$. It follows that  $\BE_x[1_{I_{\ell,m}}(x)\Psi_{j,k^*}(x)] = \pm 2^{-\ell}$, $\BE_x[\Psi_{j,k^*}(x)^2] = 2^{-(j-1)}$ and  $\BE_x[1_{I_{\ell,m}}(x)\Psi_{j,k}(x)]=0$ for any $k \neq k^*$. Therefore, for $1 \le j \le \ell$, we have
    \[ \sum_{h \in \CH_j} |\widehat{\ind}_{I_{\ell,m}}(h)| =2^{-(\ell + 1 - j)}.\]

    From the above, it also follows that \[\sum_{h \in \CH} |\widehat{\ind}_{I_{\ell,m}}(h)| ~~ = ~~ \sum_{h \in \CH_{\le \ell}} |\widehat{\ind}_{I_{\ell,m}}(h)| ~~ = ~~ 2^{-\ell} + \sum_{j=1}^{\ell} 2^{-(\ell + 1 - j)} ~~=~~ 2^{-\ell} + (1-2^{-\ell}) ~~=~~ 1. \qedhere\]
\end{proof}

We also get a similar proposition about dyadic boxes. In particular, let $\bl = (\ell_1, \ldots, \ell_d)$ for non-negative integers $\ell_i$'s and let $\bm = (m_1, \ldots, m_d)$ for integers $0 \le m_i < 2^{\ell_i}$. Let $\CH^{\otimes d}_{\le \bl} = \CH_{\le \ell_1} \times \cdots \times \CH_{\le \ell_d}$. Then, for the dyadic box 
\[ I_{\bl,\bm} = I_{\ell_1,m_1} \times \cdots \times I_{\ell_d,m_d},\]
we have the following proposition. Below we write $\min\{\be, \boldf\}$ to denote the vector whose $i^{\text{th}}$ coordinate is $\min\{\be(i),\boldf(i)\}$ for $\be, \boldf \in \BR^d$.
\begin{proposition}
    \label{prop:indhaarbox}
    Let $\ind_{I_{\bl,\bm}}$ denote the indicator function for the dyadic box $I_{\bl,\bm}$. Then, 
    \begin{align*}
        \ \sum_{\bh \in  \CH^{\otimes d}_{\bj} }|\widehat{\ind}_{I_{\bl,\bm}}(\bh)| &=2^{-\|\min\{\bl, \bl + \mathbf{1} - \bj\}\|_1} \text{ for any } \bj \le \bl \text{ and} \\
         \ \widehat{\ind}_{I_{\bl,\bm}}(\bh) &= 0 \text{ for any }\bh \notin \CH_{\le \bl}.
    \end{align*}
    In particular, we have $\displaystyle\sum_{\bh \in \CH^{\otimes d}} |\widehat{\ind}_{I_{\bl,\bm}}(\bh)| = \sum_{\bh \in  \CH_{\le \bl}^{\otimes d}}|\widehat{\ind}_{I_{\bl,\bm}}(\bh)| = 1.$
\end{proposition}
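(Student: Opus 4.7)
The plan is to deduce this higher-dimensional statement from the one-dimensional Proposition~\ref{prop:indhaar} by exploiting the tensor product structure of the Haar system together with Proposition~\ref{prop:tensorHaarBasis}. The key observation is that the indicator of a dyadic box factors as a product: $\ind_{I_{\bl,\bm}}(x) = \prod_{i=1}^{d} \ind_{I_{\ell_i, m_i}}(x(i))$. Applying Proposition~\ref{prop:tensorHaarBasis} to this product function shows that for any $\bh = (h_1, \ldots, h_d) \in \CH^{\otimes d}$, the tensor Haar coefficient factors as $\widehat{\ind}_{I_{\bl,\bm}}(\bh) = \prod_{i=1}^{d} \widehat{\ind}_{I_{\ell_i, m_i}}(h_i)$, since the denominators $\E_{x}[\bh(x)^2] = \prod_i \E_{x(i)}[h_i(x(i))^2]$ and numerators both factor over the coordinates.

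For the first claim, I would write
\[ \sum_{\bh \in \CH^{\otimes d}_{\bj}} \bigl|\widehat{\ind}_{I_{\bl,\bm}}(\bh)\bigr| \;=\; \prod_{i=1}^{d}\, \sum_{h_i \in \CH_{j_i}} \bigl|\widehat{\ind}_{I_{\ell_i, m_i}}(h_i)\bigr|, \]
and then invoke Proposition~\ref{prop:indhaar} for each factor. Since $\bj \le \bl$, each index satisfies $j_i \le \ell_i$, so the one-dimensional proposition gives the factor $2^{-\ell_i}$ when $j_i = 0$ and $2^{-(\ell_i + 1 - j_i)}$ when $1 \le j_i \le \ell_i$. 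Both cases are captured by the uniform expression $2^{-\min\{\ell_i,\,\ell_i + 1 - j_i\}}$ (when $j_i = 0$ the minimum is $\ell_i$, and when $j_i \ge 1$ the term $\ell_i + 1 - j_i \le \ell_i$ is smaller). Multiplying over $i$ yields exactly $2^{-\|\min\{\bl,\, \bl + \mathbf{1} - \bj\}\|_1}$, as claimed.

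For the second claim, if $\bh \notin \CH^{\otimes d}_{\le \bl}$, then some coordinate $i$ has $h_i \in \CH_{j_i}$ with $j_i > \ell_i$; by Proposition~\ref{prop:indhaar}, $\widehat{\ind}_{I_{\ell_i,m_i}}(h_i) = 0$, so the entire product vanishes. For the final equality, the second claim lets us restrict the sum to $\bh \in \CH^{\otimes d}_{\le \bl}$, and then the sum factors as $\prod_{i=1}^d \sum_{h_i \in \CH_{\le \ell_i}} |\widehat{\ind}_{I_{\ell_i,m_i}}(h_i)|$. Proposition~\ref{prop:indhaar} already gives that each factor equals $1$, so the product is $1$.

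There is no substantial obstacle here: the entire argument is a bookkeeping exercise reducing to the one-dimensional case via the tensor product. The only point requiring mild care is checking that the uniform expression $2^{-\min\{\ell_i,\,\ell_i+1-j_i\}}$ correctly interpolates between the $j_i = 0$ and $j_i \ge 1$ cases of Proposition~\ref{prop:indhaar}, which is immediate from comparing the two candidate exponents.
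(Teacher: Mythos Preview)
Your proposal is correct and is precisely the ``tensoring'' argument the paper has in mind; indeed, the paper's own proof consists of the single sentence that the result follows from Proposition~\ref{prop:indhaar} by tensoring. You have simply supplied the routine details of that reduction, including the verification that the unified exponent $\min\{\ell_i,\ell_i+1-j_i\}$ handles both the $j_i=0$ and $j_i\ge 1$ cases.
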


 
The proof of the above proposition follows from Proposition \ref{prop:indhaar} by tensoring.


\subsection{Online Interval Discrepancy Problem}
Now we prove Theorem~\ref{thm:interval} for the $d$-dimensional interval discrepancy problem.
Let $\boldx = (x_1, \ldots, x_T)$ be a sequence of points in $[0,1]^d$ and let $\chi \in \{\pm 1\}^T$ be a signing. For any interval $I \subseteq [0,1]$ and time $t \in [T]$, recall that the discrepancy of interval $I$ along coordinate direction $i$ at time $t$ is denoted
\[ \disc_t^i(I,\boldx,\chi) := \Big|\chi_1 \ind_I(x_1 (i)) + \cdots + \chi_t \ind_I(x_t (i)) \Big|. \]
We will just write $\disc_t^i(I)$ when the input sequence and signing is clear from the context.

\subsubsection{Upper Bounds}

To maintain the discrepancy of all intervals, it will suffice to bound the discrepancy of every dyadic interval $I_{j,k}=[k2^{-j}, (k+1)2^{-{j}})$ of length at least $1/T$ along every coordinate direction $i$. Let $\CD = \{ I_{j,k}  \mid  0\le j \le \log T, 0 \le k <2^j\}$. Then, we prove the following.


\begin{lemma} 
    \label{lemma:int_main}
    Given any sequence $x_1, \ldots, x_T$ sampled independently and uniformly from $[0,1]^d$, there is an online algorithm that chooses a signing such that w.h.p. for every time $t \in [T]$, we have
     \[  \max_{i \in [d]} \disc^i_t(I) = O(d\log^2 T) ~~\text{ for all } I \in \CD.\]  
\end{lemma}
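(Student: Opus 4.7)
My plan is to apply the change-of-basis framework using the one-dimensional Haar wavelet basis separately in each of the $d$ coordinate directions, then invoke Lemma~\ref{lemma:main} on the resulting uncorrelated, sparse distribution. Concretely, I would index virtual coordinates by pairs $(i,h)$ with $i\in[d]$ and $h\in\CH_{\le\log T}$, and define the update vector $w_t$ by $w_t(i,h) := h(x_t(i))$. The catch is the constant wavelet $\Psi_{0,0}\equiv 1$, which would otherwise be perfectly correlated across all $d$ directions; I resolve this by collapsing all $d$ copies of $(i,\Psi_{0,0})$ into a single shared coordinate whose value is identically $1$. The dimension of this new basis is then $n = 1 + d(T-1) = O(dT)$.

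Next I would verify the two hypotheses of Lemma~\ref{lemma:main} for this new basis. At each direction $i$ and each scale $1\le j\le \log T$, exactly one wavelet $\Psi_{j,k}$ is non-zero at $x_t(i)$, so $w_t$ has at most $1 + d\log T$ non-zero coordinates, giving sparsity $s = O(d\log T)$; clearly $\|w_t\|_\infty\le 1$. For uncorrelation, within the same direction $i=i'$ with distinct wavelets $h\ne h'$, $\E[h(x(i))h'(x(i))]=0$ by the one-dimensional Haar orthogonality~\eqref{eqn:haarprop}; across distinct directions $i\ne i'$, $\E[h(x(i))h'(x(i'))]=\E[h(x(i))]\cdot\E[h'(x(i'))]$ factorizes by independence of coordinates, and vanishes because at least one of $h,h'$ is mean-zero after the $\Psi_{0,0}$ collapse. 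Plugging $s = O(d\log T)$ and $n = O(dT)$ into Lemma~\ref{lemma:main} would yield an online signing such that, with high probability,
\[
\max_{i\in[d],\,h\in\CH_{\le\log T},\,t\in[T]} \bigl| D_t^i(h) \bigr| \;=\; O\bigl(d\log T\cdot\log(dT)\bigr) \;=\; O(d\log^2 T),
\]
where $D_t^i(h) := \sum_{\tau\le t}\chi_\tau\,h(x_\tau(i))$ is the signed Haar evaluation along direction $i$.

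Finally, I would translate the Haar-basis bound back to dyadic interval discrepancies. For any $I = I_{\ell,m}\in\CD$ with $\ell\le\log T$, Proposition~\ref{prop:ortho} expands $\ind_I=\sum_{h\in\CH_{\le\ell}}\widehat{\ind}_I(h)\,h$, so linearity gives
\[
\disc_t^i(I) \;=\; \Big|\sum_{h\in\CH_{\le\ell}}\widehat{\ind}_I(h)\cdot D_t^i(h)\Big| \;\le\; \Big(\max_{h,i}\bigl|D_t^i(h)\bigr|\Big) \cdot \sum_{h\in\CH_{\le\ell}}\bigl|\widehat{\ind}_I(h)\bigr|.
\]
By Proposition~\ref{prop:indhaar}, the latter sum equals $1$, so no additional loss is incurred and the bound becomes $\disc_t^i(I) = O(d\log^2 T)$ uniformly over all $t\in[T]$, $i\in[d]$, and $I\in\CD$, as required.

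The main conceptual obstacle is precisely the uncorrelation check across distinct coordinate directions: without collapsing $\Psi_{0,0}$ into one coordinate, the pairs $(i,\Psi_{0,0})$ and $(i',\Psi_{0,0})$ for $i\ne i'$ would be perfectly correlated (both identically~$1$), and Lemma~\ref{lemma:main} would not apply. Once this single nuisance coordinate is handled, every other cross-direction expectation carries the zero mean of a nontrivial wavelet and therefore vanishes, and the unit $\ell_1$-norm of the Haar coefficients of $\ind_I$ (Proposition~\ref{prop:indhaar}) ensures that the change of basis incurs no additional multiplicative cost in the final bound.
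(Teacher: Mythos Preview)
Your proposal is correct and follows essentially the same approach as the paper: both view the problem as vector balancing in the basis indexed by $[d]\times\CH_{\le\log T}$, handle the correlated $\Psi_{0,0}$ coordinates by collapsing them into a single coordinate (the paper phrases this as restricting to the subspace with only $(1,\Psi_{0,0})$ retained), verify sparsity $s=O(d\log T)$ and uncorrelation via Haar orthogonality and coordinate independence, apply Lemma~\ref{lemma:main}, and then translate back using Proposition~\ref{prop:indhaar} with $\sum_h|\widehat{\ind}_I(h)|=1$.
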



Before proving Lemma~\ref{lemma:int_main}, we first show why it implies the upper bound in Theorem~\ref{thm:interval}.

\begin{proof}[Proof of the upper bound in~Theorem \ref{thm:interval}]

    Without loss of generality, it suffices to consider half-open intervals. Every half-open interval $I \subseteq [0,1]$ can be decomposed as a union of at most $2\log T$ disjoint dyadic intervals in $\CD$ and two intervals $I_1 \subseteq I_{\log T,k}$ and $I_2 \subseteq I_{\log T,k'}$ for some $0 \le k,k' < T$. Note that the length of $I_1$ and $I_2$ is at most $2^{-\log T} = 1/T$. We can then write, 
    \[ \disc^i_t(I) \le (2\log T) \cdot \max_{I \in \CD} \disc^i_t(I) + \disc^i_t(I_1) + \disc^i_t(I_2).\]

    Applying the algorithm from Lemma~\ref{lemma:int_main}, the discrepancy of every dyadic interval can be bounded w.h.p. by $O(d\log^2T)$. The last two terms can be bounded by $N_1$ and $N_2$ respectively where $N_1$ (resp. $N_2$) is the number of points whose projections on any of the $i$ coordinates is in $I_1$ (resp. $I_2$). 

    The probability that a random point $z$ drawn uniformly from $[0,1]^d$ has some coordinate $z(i)$  for $i \in [d]$ in $I_1$ or $I_2$ is at most $2d/T$. It follows that $\BE[N_1+N_2] \le 2d$, so by Chernoff bounds, with probability at least $1-T^{-4}$, the number $N_1 + N_2 \le 4d\log T$.
    

Overall, w.h.p. for any interval $I$, we have
    \[ \max_{i \in [d]} \disc^i_t(I) ~~\le~~ 2\log T \cdot (d \log^2 T) + 4d\log T ~~=~~ O(d\log^3 T). \qedhere \]
\end{proof}

Next, we prove the missing Lemma~\ref{lemma:int_main}.

\begin{proof}[Proof of Lemma~\ref{lemma:int_main}]
    
    We will consider the $d$-dimensional interval discrepancy problem as a vector balancing problem  in $\left | [d]\times \CH_{\le \log T} \right |$ dimensions, where $\CH_{\le \log T}$ are the Haar wavelet functions with scale parameter at most $\log T$. Note that $|\CH_{\le \log T}|=T$, so the update vector in the vector balancing version will be $Td$-dimensional. Let us abbreviate $\CH'=\CH_{\le \log T}$.


    
    At any time when the point $x_t \in [0,1]$ arrives, then the $(i,h)$ coordinate of the update vector $v_t \in [-1,1]^{d \times \CH'}$ is given by
    \[ v_t(i,h) = h(x_t(i)).\]

    Note that all the coordinates $(i,\Psi_{0,0})$ for $i\in [d]$ will always have the same value where $\Psi_{0,0}$ is constant Haar wavelet. So, to apply the online algorithm given by Lemma \ref{lemma:main} we will only consider the subspace spanned by the coordinates $(i,h)$ where $i \in [d]$ and $h \neq \Psi_{0,0}$ and the extra coordinate $(1,\Psi_{0,0})$.
    
    Let us check first that we satisfy the conditions Lemma \ref{lemma:main}. First, note that the $\infnorm{v_t} \le 1$ and the vector $v_t$ has at most $d\log T + 1$ non-zero coordinates, since for any fixed scale $0\le j\le \log T$ and any point $z \in [0,1]$, all but one of the values $\{h(z)\}_{h \in \CH_j}$ are zero. The last condition to check is that the coordinates of the vector $v_t$ are uncorrelated. This is a consequence of \eqref{eqn:haarprop}, since whenever 
     coordinates $(i,h)$ and $(i',h')$ satisfy $i \neq i'$ or $h\neq h'$, we have 
    \[ \BE_{v_t}[v_t(i,h)\cdot v_t(i',h')] ~~=~~ \BE_{x_t}[h(x_t(i))\cdot h'(x_t(i'))] ~~=~~ 0. \]
    To elaborate more, first note that we cannot have $h=h'=\Psi_{0,0}$ since we are working in the aforementioned subspace. Now, if $i\neq i'$ then the coordinates $x_t(i)$ and $x_t(i)$ are sampled independently from $[0,1]$, and $\BE_z[h(z)]=0$ for $h \neq \Psi_{0,0}$ when $z$ is drawn uniformly from $[0,1]$. Otherwise, for $i=i'$ but $h\neq h'$, it follows from the orthogonality of the Haar system that $\BE_z[h(z)h'(z)]=0$.  
    
    Next, applying the online algorithm from Lemma \ref{lemma:main}, we select signs $\chi_1,\ldots,\chi_T$ such that we get an $\ell_{\infty}$ bound on the vector $d_t = \sum_{l\le t} \chi_l v_l$. In particular, with high probability we have
    \begin{align*}
         \ |d_t(i,h)| & ~~=~~ \Big|\sum_{l\le t} \chi_l h(x_l(i))\Big|  ~~=~~ O(d\log^2 T) ~~\text{ for any } i \in [d],h \in \CH'.
    \end{align*}
Note that the bound on $|d_t(i, \Psi_{0,0})|$ for $i \neq 1$ follows since $|d_t(i, \Psi_{0,0})|=|d_t(1, \Psi_{0,0})|$.


    To finish the proof, we need to bound the discrepancy of every dyadic interval in terms of $\infnorm{d_t}$. Note that for any dyadic interval $I \in \CD$, its coefficients in the Haar system basis $\widehat{\ind}_{I}(h)=0$ for $h \in \CH_{> \log T}$ using Proposition \ref{prop:indhaar}. Now, for any $i \in [d]$ and dyadic interval $I \in \CD$, we can write
    \begin{align*}
        \ \disc^i_t(I) &= \Big|\sum_{l \le t} \chi_l \ind_{I}(x_l(i))\Big| ~~=~~ \Big|\sum_{l \le t} \chi_l \sum_{h \in \CH'} \widehat{\ind}_{I}(h) h(x_l(i))\Big|\\
        \ &= \Big|\sum_{h \in \CH'} \widehat{\ind}_{I}(h) \Big(\sum_{l \le t} \chi_l h(x_l(i))\Big)\Big| ~~=~~ \Big|\sum_{h \in {\CH'}} \widehat{\ind}_{I}(h) d_t(i,h)\Big| \\
        \                &\le  \infnorm{d_t} \cdot \Big(\sum_{h \in \CH'} |\widehat{\ind}_{I}(h)|\Big) ~~\le~~ \infnorm{d_t} ~~=~~ O(d \log^2 T),  
    \end{align*}
    where the second last inequality follows again from Proposition \ref{prop:indhaar}. 
\end{proof}

\subsubsection{Lower Bounds}

\begin{proof}[Proof of the lower bound in Theorem~\ref{thm:interval}]
 Set $A=T/d$. We will again consider the $d$-dimensional interval discrepancy problem as a vector balancing problem in $\left | [d] \times \CH_{\le \log A} \right |$ dimensions where $\CH_{\le \log A}$ are the Haar wavelet functions with scale parameter at most $\log A$. Note that $|\CH_{\le \log T}|=A$, so the update vector in the vector balancing version will be $T$-dimensional. Let us abbreviate $\CH' = \CH_{\le \log A}$.
 
 At any time when the point $x_t \in [0,1]^d$ arrives, then the $(i,h)$ coordinate of the update vector $v_t$ is given by
\[ v_t(i,h) = \begin{cases} 
0 & \text{ if } h = \Psi_{0,0} \\ h(x_t(i)) & \text{ otherwise.} \end{cases} \] 
 

    

    Here we are essentially ignoring the coordinates $(i,h)$ with $h = \Psi_{0,0}$.
    Since for any fixed scale $0< j\le \log A$ and any point $z \in [0,1]$, all but one of the values $\{h(z)\}_{h \in \CH_j}$ are zero, the vector $v_t$ has $d\log A$ non-zero coordinates all of which take value $\pm1$. It follows that the Euclidean norm of any update vector $v_t$ is $\sqrt{d\log A}$. 

Furthermore, from the orthogonality of the Haar system, it follows that the coordinates of the vector $v_t$ are uncorrelated:
    \[ \BE_{v_t}[v_t(i,h)v_t(i',h')] = \BE_{x_t}[h(x_t(i))h'(x_t(i'))] = 0.\]

    Then, applying Lemma \ref{lemma:main_lb}, we get that with probability at least $3/4$, there is a $t \in [T]$ and a coordinate $(i,h)$ with $h \neq \Psi_{0,0}$ such that $|d_t(i,h)| = \Omega(\sqrt{d \log A})$. 

   Let $h = \Psi_{j,k}$ for some $j,k$ where $j > 0$ (recall that coordinates $(i,h)$ where $h = \Psi_{0,0}$ are always $0$). Then, by definition $h = \ind_{I_1} - \ind_{I_2}$ where $I_1$ and $I_2$ are the first and second halves of the interval $I_{j-1,k}$. In this case,  
    \[ |d_t(i,h)| ~~=~~ \Big|\Big(\sum_{s\le t}\chi_t \ind_{I_1}(x_s)\Big) - \Big(\sum_{s\le t}\chi_t \ind_{I_2}(x_s) \Big) \Big| ~~\le~~ 2 \max\Big\{|\disc_t(I_1)|,|\disc_t(I_2)|\Big\}.\]
    Therefore, substituting $A=T/d$, there exists an interval $I$ such that $\disc_t^i(I) = \Omega\left(\sqrt{d \log \left(\frac{T}{d}\right)}\right)$.
\end{proof}

\subsection{Online Tusn\'ady's Problem}

Let $\boldx = (x_1, \ldots, x_T)$ be a sequence of points in $[0,1]^d$ and let $\chi \in \{\pm 1\}^T$ be a signing. For any axis-parallel box $B \subseteq [0,1]^d$ and any time $t \in [T]$, recall that  the discrepancy of axis-parallel box $B$ at time $t$ is denoted
\[ \disc_t(B,\boldx,\chi) := \Big| \chi(1) \cdot \ind_B(x_1) + \ldots +\chi(t) \cdot \ind_B(x_t) \Big| .\]
We will just write $\disc_t(B)$ when the input sequence and signing is clear from the context.

\subsubsection{Upper Bounds}

As in the interval case, it will we sufficient to work with dyadic boxes. Recall that $I_{j,k} = [k2^{-j}, (k+1)2^{-j})$ for $j \in \RZ_{\ge 0}$ and $0\le k <2^j$. To maintain the discrepancy of all intervals, it will suffice to bound the discrepancy of every dyadic box 
\[ B_{\bj,\bk} := I_{\bj(1),\bk(1)} \times \ldots \times I_{\bj(d),\bk(d)},\] 
with $\bj,\bk \in \RZ^d$ with $0\le \bj$ and $0 \le \bk < 2^{\bj}$
with each side length at least $1/T$. In particular, let $\CD = \{ B_{\bj,\bk}  \mid  0 \le \bj \le (\log T) \ind ~,~ 0 \le \bk <2^{\bj}\}$ where $\ind \in \BR^d$ is the all ones vector. Then, we prove the following lemma to bound the discrepancy of every dyadic box.


\begin{lemma} 
    \label{lemma:tusnady_main}
    Given any sequence $x_1, \ldots, x_T$ sampled independently and uniformly from $[0,1]^d$, there is an online algorithm that chooses a signing such that w.h.p. for every time $t \in [T]$,
     \[ \disc_t(B) = O\left(\log^{d+1} T\right), \text{ for all } B \in \CD.\]  
\end{lemma}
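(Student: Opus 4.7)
The plan is to parallel the argument of Lemma~\ref{lemma:int_main}, working in the $d$-dimensional Haar tensor basis $\CH_{\le \log T}^{\otimes d}$. This provides a change of basis in which the incoming vectors are uncorrelated and have small sparsity, so that Lemma~\ref{lemma:main} applies, while Proposition~\ref{prop:indhaarbox} ensures that indicator functions of dyadic boxes in $\CD$ have unit $\ell_1$-mass in this basis, so that an $\ell_\infty$-bound in the Haar basis translates with no loss to a bound on dyadic box discrepancies.

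Concretely, on the arrival of $x_t \in [0,1]^d$, I would define the update vector $v_t$ with coordinates indexed by $\bh \in \CH_{\le \log T}^{\otimes d}$, given by $v_t(\bh) = \bh(x_t) = \prod_{i=1}^d h_i(x_t(i))$ for $\bh = (h_1, \ldots, h_d)$. To invoke Lemma~\ref{lemma:main}, I would verify three properties. First, $\|v_t\|_\infty \le 1$ since each $|h_i(z)| \le 1$. Second, sparsity: for each dimension $i$, at most one wavelet per scale $j \in \{0, 1, \ldots, \log T\}$ is non-zero at $x_t(i)$, so at most $\log T + 1$ wavelets per dimension are active, giving at most $s := (\log T + 1)^d = O_d(\log^d T)$ non-zero tensor coordinates. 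Third, uncorrelation: for any distinct $\bh, \bh' \in \CH_{\le \log T}^{\otimes d}$, the tensorized orthogonality \eqref{eqn:haartensorprop} gives
$\BE_{x_t}[\bh(x_t)\bh'(x_t)] = \prod_{i=1}^d \BE[h_i(x_t(i)) h_i'(x_t(i))] = 0$,
since at least one coordinate $i$ has $h_i \neq h_i'$ and the corresponding factor vanishes (whether or not either of $h_i, h_i'$ equals the constant wavelet $\Psi_{0,0}$, because non-constant Haar wavelets are mean-zero).

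Applying Lemma~\ref{lemma:main} with sparsity $s = O_d(\log^d T)$ and ambient dimension $n = |\CH_{\le \log T}|^d = T^d$ yields an online signing such that, with high probability, the Haar-basis discrepancy vector $d_t := \sum_{r \le t} \chi_r v_r$ satisfies $\|d_t\|_\infty = O(s(\log n + \log T)) = O_d(\log^{d+1} T)$ for all $t \in [T]$. For any dyadic box $B \in \CD$, Proposition~\ref{prop:indhaarbox} guarantees that $\widehat{\ind}_B(\bh) = 0$ outside $\CH_{\le \log T}^{\otimes d}$ (since every side length of $B$ is at least $1/T$) and $\sum_\bh |\widehat{\ind}_B(\bh)| = 1$. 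Expanding $\ind_B$ in the Haar basis,
\begin{align*}
\disc_t(B) &~=~ \Big|\sum_{r \le t} \chi_r \ind_B(x_r)\Big| ~=~ \Big|\sum_{\bh \in \CH_{\le \log T}^{\otimes d}} \widehat{\ind}_B(\bh)\, d_t(\bh)\Big| \\
&~\le~ \|d_t\|_\infty \cdot \sum_{\bh} |\widehat{\ind}_B(\bh)| ~=~ \|d_t\|_\infty ~=~ O_d(\log^{d+1} T),
\end{align*}
which matches the claimed bound.

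The main content beyond the one-dimensional interval case is verifying the tensor-product uncorrelation of Haar coordinates (handled cleanly by \eqref{eqn:haartensorprop}) and observing that the update sparsity grows only as $(\log T)^d$ despite the $T^d$ ambient dimension. I expect no substantive obstacle: the framework of Lemma~\ref{lemma:main} combined with the $\ell_1$-normalization in Proposition~\ref{prop:indhaarbox} combines without any new technical hurdles to yield the stated bound.
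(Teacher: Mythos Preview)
Your proposal is correct and follows essentially the same approach as the paper: work in the tensor Haar basis $\CH_{\le \log T}^{\otimes d}$, verify boundedness, $(\log T + 1)^d$-sparsity, and pairwise uncorrelation of the update vectors, apply Lemma~\ref{lemma:main}, and convert back via the $\ell_1$-bound of Proposition~\ref{prop:indhaarbox}. The only cosmetic difference is that you write the bound as $O_d(\log^{d+1} T)$, making the $d$-dependence of the constant explicit, whereas the paper absorbs it into $O(\cdot)$.
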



Before proving Lemma~\ref{lemma:tusnady_main}, we first show why it implies Theorem~\ref{thm:tusnady}.

\begin{proof}[Proof of the upper bound in Theorem \ref{thm:tusnady}]

     Without loss of generality, it suffices to consider axis-parallel boxes $B = I_1 \times \cdots \times I_d$ where $I_j$'s  are half-open sub-intervals of $[0,1]$. Recall that every half-open interval $I \subseteq [0,1]$ can be decomposed as a union of at most $2\log T$ disjoint dyadic intervals in $\CD$ and two intervals $I' \subseteq I_{\log T,k}$ and $I'' \subseteq I_{\log T,k'}$ for some $0 \le k,k' < T$ (note that the length of $I'$ and $I''$ is at most $2^{-\log T} = 1/T$). 
    
    From this, it follows that for any axis-parallel box $B$, there exists a set of dyadic boxes $\CD' \subseteq \CD$ of size $|\CD'|=(2\log T)^d$ and a set $\CI$ of size $|\CI|=2d$ of disjoint intervals of length at most $1/T$, such that $B$ can be decomposed as the union of boxes in $\CD'$ and some other boxes of the form $I'_1 \times \cdots \times I'_d$, where $I'_i \in \CI$ for at least one $i \in [d]$. 
    We can therefore bound, 
    \[ \disc_t(B) \le (2\log T)^d \cdot \left(\max_{B \in \CD} \disc_t(B)\right) + N,\]
    where $N$ is the number of points $z$ in the input sequence such that $z(i) \in I$ for some $i \in [d]$ and $I \in \CI$.
    
    Applying the algorithm from Lemma \ref{lemma:tusnady_main}, the discrepancy of every dyadic box can be bounded by $O(\log^{d+1} T)$ with high probability. Also, since the length of every interval in $\CI$ is at most $1/T$, for $z$ drawn uniformly from $[0,1]^d$, we have
    \[ \BP_{z}\Big[ \exists i \in [d], \exists I \in \CI \text{ such that } z(i) \in I\Big] \le \frac{2d^2}{T}.\]
Therefore, we have that $\BE[N] \le 2d^2$ and  applying Chernoff bounds, it follows that with probability at least $1-T^{-4}$, the number $N \le 4d^2\log T$.
    

Overall, with high probability for any axis-parallel box $B$, we have
    \[ \disc_t(B) ~~\le~~ (2 \log T)^d (\log^{d+1} T) + 4d^2\log T ~~=~~ O_d(\log^{2d+1} T). \qedhere \]
\end{proof}

Next, we prove the missing Lemma~\ref{lemma:tusnady_main}.
\begin{proof}[Proof of Lemma~\ref{lemma:tusnady_main}]
    We will consider the $d$-dimensional interval discrepancy problem as a vector balancing problem in $\CH_{\le \log T}^{\otimes d}$ dimensions where $\CH_{\le \log T}$ are the Haar wavelet functions with scale parameter at most $\log T$. Note that $|\CH_{\le \log T}|=T$, so the update vector in the vector balancing version will be $T^d$-dimensional. Let us abbreviate $\CH' = \CH_{\le \log T}$ and also recall that for any $\bh = (h_1,\ldots, h_d)$ in $\CH'^{\otimes d}$, we view it as a function from the cube $[0,1]^d$ to $\BR$ by defining $\bh(x) = \prod_{i=1}^d h_i(x(i))$.

    
    At any time when the point $x_t \in [0,1]^d$ arrives, then the $\bh := (h_1,\ldots,h_d)$ coordinate of the update vector $v_t \in [-1,1]^{\CH'^{\otimes d}}$ is given by
    \[ v_t(\bh) ~~=~~ \bh(x_t) ~~=~~ \prod_{i=1}^d h_i(x_t(i)).\] 

    We will apply the online algorithm given by Lemma \ref{lemma:main}. Let us check first that we satisfy the conditions of that lemma. First, note that the $\infnorm{v_t} \le 1$ and the vector $v_t$ has at most $(\log T + 1)^d$ non-zero coordinates, since for any fixed scale $0\le j\le \log T$ and any point $z \in [0,1]$, all but one of the values $\{h(z)\}_{h \in \CH_j}$ are zero. The last condition to check is that the coordinates of the vector $v_t$ are uncorrelated. This follows from the orthogonality of $\bh$ and $\bh'$. In particular, if $\bh \neq \bh'$, then
    \[ \BE_{v_t}[v_t(\bh)v_t(\bh')] ~~=~~ \BE_{x_t}[\bh(x_t)\bh'(x_t)] ~~=~~ 0.\]
    
    Applying the online algorithm from Lemma \ref{lemma:main}, we select signs $\chi_1,\ldots,\chi_T$ such that we get an $\ell_{\infty}$ bound on the vector $d_t = \chi_1 v_1 + \ldots \chi_t v_t$. In particular, with high probability
    \[ |d_t(\bh)| ~~=~~ \Big|\sum_{l\le t} \chi_l \bh(x_l)\Big| ~~=~~ O(\log^{d+1} T) \text{ for any } \bh \in \CH'^{\otimes d}.\]

    To finish the proof, we next bound the discrepancy of every dyadic box in terms of $\infnorm{d_t}$. For any dyadic box $B \in \CD$, since each side consists of dyadic interval $I_{j,k}$ where $j \le \log T$, Proposition \ref{prop:indhaarbox} implies that $\widehat{\ind}_{I}(\bh)=0$ for any $\bh \notin \CH'^{\otimes d}$. Therefore, we have
   \begin{align*}
       \ \disc_t(B) &= \Big|\sum_{l \le t} \chi_l \ind_{B}(x_l)\Big| ~~=~~ \Big|\sum_{l \le t} \chi_l \sum_{\bh \in \CH'^{\otimes d}} \widehat{\ind}_{B}(\bh) \bh(x_l)\Big|\\
        \ &= \Big|\sum_{\bh \in \CH'^{\otimes d}} \widehat{\ind}_{B}(\bh) \Big(\sum_{l \le t} \chi_l \bh(x_l)\Big)\Big| ~~=~~ \Big|\sum_{\bh \in \CH'^{\otimes d}} \widehat{\ind}_{B}(\bh) d_t(\bh)\Big| \\
        \                &\le  \infnorm{d_t} \cdot \Big(\sum_{\bh \in \CH'^{\otimes d}} |\widehat{\ind}_{B}(\bh)|\Big) ~~\le~~ \infnorm{d_t} ~~=~~ O(\log^{d+1} T),  
    \end{align*}
      where the second last inequality follows again from Proposition \ref{prop:indhaarbox}. 
\end{proof}

\subsubsection{Lower Bounds}

\begin{proof}[Proof of the lower bound in Theorem \ref{thm:tusnady}]

    Set $A = T^{1/d}$. We will consider the $d$-dimensional interval discrepancy problem as a vector balancing problem in $\CH_{\le \log A}^{\otimes d}$ dimensions where $\CH_{\le \log A}$ are the Haar wavelet functions with scale parameter at most $\log A$. Note that $|\CH_{\le \log A}|=A$, so the update vector in the vector balancing version will be $A^d$-dimensional. Let us abbreviate $\CH' = \CH_{\le \log A}$ and also recall that for any $\bh = (h_1,\ldots, h_d)$ in $\CH'^{\otimes d}$, we view it as a function from the cube $[0,1]^d$ to $\BR$ by defining $\bh(x) = \prod_{i=1}^d h_i(x(i))$.

    
    At any time when the point $x_t \in [0,1]^d$ arrives, then the $\bh := (h_1,\ldots,h_d)$ coordinate of the update vector $v_t \in [-1,1]^{\CH'^{\otimes d}}$ is given by
    \[ v_t(\bh) ~~=~~ \bh(x_t) ~~=~~ \prod_{i=1}^d h_i(x_t(i)).\] 

    We will apply Lemma \ref{lemma:main_lb}. Let us check first that we satisfy the conditions of that lemma. Similar to the proof of Lemma \ref{lemma:tusnady_main}, we note that the vector $v_t$ has exactly $(\log A + 1)^d$ non-zero coordinates that take the value $\pm1$. This implies that that Euclidean norm of any update $v_t$ is $(\log A + 1)^{d/2}$. Also from the orthogonality of $\bh$ and $\bh'$, the coordinates of the vector $v_t$ are uncorrelated --- if $\bh \neq \bh'$, then
    \[ \BE_{v_t}[v_t(\bh)v_t(\bh')] ~~=~~ \BE_{x_t}[\bh(x_t)\bh'(x_t)] ~~=~~ 0.\]


    Applying Lemma \ref{lemma:main_lb} tells us that with probability at least $3/4$, there exists a time $t \in [T]$ and a $\bh \in \CH'$ such that $|d_t(\bh)| = \Omega \left( (\log A + 1)^{d/2}\right)$. Note that since $\bh(x) = \prod_{i=1}^d h_i(x(i))$ and $h_i$ can always be expressed as $\ind_{I_i}$ or $\ind_{I_i} - \ind_{I'_i}$ for some intervals $I_i$ and $I'_i$, it follows that there exists a set $\CB$ of at most $2^d$ axis-parallel boxes and  some $\chi \in \{\pm 1\}^{\CB}$ such that 
    \[ d_t(\bh) = \sum_{B \in \CB} \chi_B \cdot \disc_t(B).\]
    By averaging, it follows that there is an axis-parallel box $B \in \CB$ such that $\disc_t(B) \ge \dfrac{|d_t(\bh)|}{2^d}$.
    
Substituting $A=T^{1/d}$, we get that for some box $B$, 
    \[ \disc_t(B) ~~=~~ \Omega\Big(\frac{1}{2^d} \cdot \log^{d/2} A \Big) ~~=~~ \Omega_d(\log^{d/2} T). \qedhere \]
\end{proof}

\section{Applications to Online Envy Minimization}
\label{sec:envy}

In this section we use our vector balancing and two-dimensional interval discrepancy results to bound online envy. Let us first give the formal definition of envy.

Recall that there are two players and $T$ items where for item $t \in \{1, \ldots, T\}$, the valuation of the player $i\in \{1,2\}$ is $v_{it} \in [0,1]$. The \emph{cardinal envy} is the standard notion of envy studied in fair division, which is the max over every player the difference between the player's valuation for the other player's allocation and the player's valuation for their own allocation~\cite{LiptonMMS-EC04,Budish-Journal11}. Formally, if Player~$i$ is allocated set $S_i$ by an algorithm, the cardinal envy is defined as 
\[ \envy_C(\valuation_1, \valuation_2,S_1, S_2) := \max \Big\{ \sum_{t\in S_2} v_{1t} - \sum_{t\in S_1} v_{1t} ~,~ \sum_{t\in S_1} v_{2t} - \sum_{t\in S_2} v_{2t}  \Big\}.
\]

The notion of ordinal envy is defined ignoring the precise item valuations, but only with respect to the relative ordering of the items. 
Roughly, it is the worst possible cardinal envy for $[0,1]$ valuations consistent with any given relative ordering.
Thus for valuations in $[0,1]$  the ordinal envy is always at least the cardinal envy~\cite{JiangKS-arXiv19}. For $i\in \{1,2\}$, let $\pi_i$ denote the decreasing order with respect to the valuations $v_{it}$. Denote $\pi_i^t$ the first $t$ items in the order $\pi$. If Player~$i$ is allocated set $S_i$, the ordinal envy is defined as 
\[ \envy_O(\pi_1, \pi_2,S_1, S_2) := \max_{t\geq 0} \Big\{  | S_2 \cap \pi_1^{ t} | - | S_1 \cap \pi_1^{t} |  ~,~   | S_1 \cap \pi_2^{ t} | - | S_2 \cap \pi_2^{t} |   \Big\}.
\]
Jiang et al.~\cite{JiangKS-arXiv19} discuss three equivalent definitions of ordinal envy.

Next, we prove Corollary~\ref{cor:envyBounds}, which is restated below.

\envyCorol*

\begin{proof}
When the player valuations are drawn independently in $[0,1]$, the ``moreover'' part  is immediate  from the following lemma of~\cite{JiangKS-arXiv19} along with our Theorem~\ref{thm:interval} for 2-dimensional interval discrepancy.

\begin{lemma}[Lemma~26 in~\cite{JiangKS-arXiv19}] For two players with independent valuations, any upper bound for $2$-dimensional interval discrepancy problem also holds for $2$-player online ordinal envy minimization.
\end{lemma}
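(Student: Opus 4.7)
The plan is to prove the lemma by an explicit online reduction: given any algorithm for $2$-dimensional online interval discrepancy with bound $D$, produce an online allocation whose ordinal envy is at most $D$ at every time $t \in [T]$.

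First, I would apply the probability integral transform. Since the two players' valuations are independent, the joint distribution factors as $F_1 \otimes F_2$, where $F_i$ denotes the marginal CDF of Player $i$. Define $y_t := (F_1(v_{1t}), F_2(v_{2t})) \in [0,1]^2$; for continuous $F_i$ the $y_t$'s are i.i.d.\ uniform on $[0,1]^2$ (atoms can be handled by the standard randomized device of replacing $F_i(v)$ with $F_i(v^-) + U \cdot (F_i(v) - F_i(v^-))$ for an independent $U \sim \mathrm{Unif}[0,1]$). Crucially, each $F_i$ is monotone non-decreasing, so Player $i$'s ranking based on the original valuations $v_{is}$ coincides with that based on the transformed coordinates $y_s(i)$; hence the ordinal envy depends only on the sequence $y_1, y_2, \ldots$ and the allocation.

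Next, feed the uniform sequence $y_1, y_2, \ldots$ into the assumed 2d interval discrepancy algorithm, obtaining signs $\chi_t \in \{\pm 1\}$; allocate item $t$ to Player $1$ if $\chi_t = +1$ and to Player $2$ otherwise, and write $S_1^{(t)}, S_2^{(t)}$ for the resulting partial allocations. Fix $t \in [T]$, $i \in \{1,2\}$, and $1 \leq k \leq t$, and let $\theta$ be the $k$-th largest value of $\{y_s(i)\}_{s \leq t}$. Since orderings are preserved, the top-$k$ set equals $\pi_i^k = \{s \leq t : y_s(i) \in [\theta, 1]\}$, and therefore
\[
\bigl| |S_1^{(t)} \cap \pi_i^k| - |S_2^{(t)} \cap \pi_i^k| \bigr|
\;=\; \Bigl|\sum_{s \leq t} \chi_s \ind_{[\theta,1]}(y_s(i))\Bigr|
\;=\; \disc_t^{\,i}\bigl([\theta,1]\bigr) \;\leq\; D.
\]
Maximizing over $k$, $i$, and $t$ bounds the online ordinal envy by $D$, completing the reduction.

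The main obstacle is the soundness of the probability integral transform when the $F_i$'s have atoms: naively $F_i(v_{it})$ is not uniform, and two items may collide in the transformed space. The randomized tie-breaking above resolves this exactly, producing marginally uniform $y_t$'s while preserving each player's ranking almost surely, since the $U$'s are independent across items and do not introduce cross-coordinate correlation. A secondary subtlety is that the envy bound must hold at every intermediate time $t$ to qualify as an online envy bound, not only at the horizon $T$; this matches the ``$\max_{t \in [T]}$'' guarantee already built into Theorem~\ref{thm:interval}, so no extra work is required.
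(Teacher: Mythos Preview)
The paper does not supply its own proof of this lemma; it is quoted from \cite{JiangKS-arXiv19} and used as a black box inside the proof of Corollary~\ref{cor:envyBounds}. There is therefore nothing in the paper to compare your argument against.

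On its own merits your reduction is correct and is the natural one. The probability integral transform converts the independent-valuation instance into i.i.d.\ uniform points on $[0,1]^2$ (the paper itself invokes this device in the remark following Theorem~\ref{thm:tusnady}); monotonicity of each $F_i$ ensures Player~$i$'s ranking is preserved, so every top-$k$ prefix $\pi_i^k$ is exactly $\{s\le t: y_s(i)\in[\theta,1]\}$, and the signed count $|S_1^{(t)}\cap\pi_i^k|-|S_2^{(t)}\cap\pi_i^k|$ is then literally $\disc_t^{\,i}([\theta,1])$. Your treatment of atoms via the randomized CDF and your observation that the ``for every $t$'' guarantee is already built into Theorem~\ref{thm:interval} are both appropriate.
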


Next, we  bound online cardinal envy under arbitrary distributions. In the following lemma we reduce this problem to $2$-dimensional vector balancing.

\begin{lemma} For two players taking values from an arbitrary distribution $\boldp$ over $[0,1]\times [0,1]$, any upper bound for $2$-dimensional vector balancing problem also holds for $2$-player online cardinal envy minimization.
\end{lemma}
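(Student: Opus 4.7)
The plan is a direct one-line reduction: encode each arriving pair of valuations as a single $2$-dimensional update vector, then run any online $2$-dimensional vector balancing algorithm and use the resulting sign to decide the allocation.

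First, at each time $t$, upon seeing $(v_{1t}, v_{2t}) \in [0,1]^2$ drawn i.i.d. from $\boldp$, form the update vector
\[
w_t := (v_{1t}, v_{2t}) \in [0,1]^2 \subseteq [-1,1]^2.
\]
Note that $w_1, \ldots, w_T$ are i.i.d.\ samples from a distribution on $[-1,1]^2$ (namely the push-forward of $\boldp$, which may of course have arbitrary correlations between the two coordinates). Feed this online stream into the assumed $2$-dimensional online vector balancing algorithm with discrepancy bound $B$, obtaining a sign $\chi_t \in \{\pm 1\}$ for each $t$. Allocate item $t$ to Player~$1$ if $\chi_t = +1$ and to Player~$2$ if $\chi_t = -1$, and write $S_1^{(t)}, S_2^{(t)}$ for the allocations after step $t$.

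The second step is to observe that cardinal envy is \emph{exactly} a one-sided version of the $\ell_\infty$-norm of the signed prefix sum. With the encoding above,
\[
\sum_{s \in S_2^{(t)}} v_{1s} - \sum_{s \in S_1^{(t)}} v_{1s} = -\sum_{s \le t} \chi_s v_{1s}
\quad\text{and}\quad
\sum_{s \in S_1^{(t)}} v_{2s} - \sum_{s \in S_2^{(t)}} v_{2s} = \sum_{s \le t} \chi_s v_{2s}.
\]
Hence, for every $t \in [T]$,
\[
\envy_C(\valuation_1, \valuation_2, S_1^{(t)}, S_2^{(t)})
\le \max\Big\{ \Big|\sum_{s \le t} \chi_s v_{1s}\Big|, \Big|\sum_{s \le t} \chi_s v_{2s}\Big| \Big\}
= \Big\| \sum_{s \le t} \chi_s w_s \Big\|_\infty \le B,
\]
where the last inequality uses the guarantee of the vector balancing algorithm at time $t$. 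Taking a $\max$ over $t$ establishes the claim; combined with Theorem~\ref{thm:signedseries} (at $n=2$, which gives $B = O(\log T)$) this yields the $O(\log T)$ cardinal envy bound stated in Corollary~\ref{cor:envyBounds}.

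The only ``obstacle'' is really a sanity check: cardinal envy is a one-sided quantity (only the positive part of each player's deficit counts), while vector balancing bounds the two-sided $\ell_\infty$-norm, so the reduction is lossless up to a benign factor of at most $2$ built into the $\max$ above. No new ideas beyond the encoding are required; the hypothesis that $\boldp$ is arbitrary (possibly correlated) is exactly what Theorem~\ref{thm:signedseries} is designed to handle, which is why the same reduction fails for ordinal envy in general and needs the product-distribution assumption (and the $d=2$ interval discrepancy bound) in the other part of the corollary.
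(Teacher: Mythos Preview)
Your proposal is correct and takes essentially the same approach as the paper: encode each valuation pair as a $2$-dimensional vector, run the online vector balancing algorithm, and read off the allocation from the sign. The only cosmetic difference is that the paper uses the encoding $v_t=(u_{1t},-u_{2t})$ with ``$+$'' meaning Player~$2$, so that each coordinate of $d_t$ equals the corresponding player's envy on the nose, whereas you use $(v_{1t},v_{2t})$ with ``$+$'' meaning Player~$1$; both choices yield the same $\ell_\infty$ bound (and there is in fact no factor-of-$2$ loss, since the positive part is dominated by the absolute value).
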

\begin{proof}
    For $i\in \{1,2\}$, let $u_{it}$ denote the valuation of Player~$i$ for $t^{th}$ item. We define the corresponding vector $v_t = (u_{1t},-u_{2t})$. If our online vector balancing algorithm assigns the next vector $v_t$ a + sign, we give the item  to Player~$2$, and otherwise we give it to Player~$1$. The crucial observation is that $d_t(1)$ and $d_t(2)$ capture precisely the cardinal envy of Players~$1$ and $2$, respectively. Thus, any bound $\| d_t\|_{\infty}$ implies a bound on the maximum cardinal envy.
\end{proof}

The last lemma when combined with Theorem~\ref{thm:signedseries} finishes the proof of  Corollary~\ref{cor:envyBounds}.
\end{proof}

\ignore{\color{blue}
We study both \emph{ordinal} and \emph{cardinal} notions of envy which are defined as follows.

\begin{defn}[Cardinal Envy]
\label{defn:CardinalEnvy}
Suppose we are given $n$ indivisible items and a player's valuation $\valuation = (v_1,\cdots,v_n)$ of the items. We define   \emph{cardinal envy} of the player on getting subset $A_1 \subseteq [n]$ with respect to another player who gets subset $A_2 \subseteq [n]$ as
\[		
\envy_C(\valuation,A_1,A_2) := \sum_{i\in A_2} v_i - \sum_{i\in A_1} v_i.
\]
Now for an allocation $\calA = \{A_1,\ldots, A_n\}$ to $n$ players where $i^{th}$ player has valuation $\valuation_i \in [0,1]^T$ and receives bundle $S_i$, we define the cardinal envy as
\[		
\envy_C(\valuation,\calA) := \max_{i,j\in [n]} \envy_C(\valuation_i,A_i,A_j) .
\]
\end{defn}

Next we define ordinal envy.
\begin{defn}[Ordinal Envy]
\label{defn:OrdinalEnvy}
Suppose we are given $n$ indivisible items and a player's valuation $\valuation = (v_1,\cdots,v_n)$ of the items. We first relabel the items such that  $ v_1 > v_2 > \cdots > v_n$ and then define   \emph{ordinal envy} of the player on getting subset $A_1 \subseteq [n]$ with respect to another player who gets subset $A_2 \subseteq [n]$ as
\[		
\envy_O(\valuation,A_1,A_2) := \max_{t \geq 0} \Big\{ |A_2 \cap [t]| - |A_1 \cap [t]| \Big\}.
\]
Now for an allocation $\calA = \{A_1,\ldots, A_n\}$ to $n$ players where $i^{th}$ player has valuation $\valuation_i \in [0,1]^T$ and receives bundle $S_i$, we define the ordinal envy as
\[		
\envy_O(\valuation,\calA) := \max_{i,j\in [n]} \envy_O(\valuation_i,A_i,A_j) .
\]
\end{defn}

In general, for valuations bounded between $[0,1]$, it is  easy to see that for any allocation $\calA$ the ordinal envy upper-bounds the cardinal  envy~\cite{JiangKS-arXiv19}. Our main results in this section show that if the player valuations  are independently distributed then there is a polylog bound on the (stronger) ordinal envy. On the other hand, when the player valuations are correlated then there is a polylog bounds for the (weaker) cardinal envy.

\subsection{Ordinal Envy}
Our main result shows that for independent valuations one can obtain a polylogarithmic discrepancy bounds even for more than two players.

\snote{Is there a complete proof along this line?}
\begin{theorem}
    For $k$ players and $T$ items where each player independently takes a value for the next arriving item, the ordinal envy is at most $O(k^2 \log^3 T)$ for the online envy-minimization problem.
\end{theorem}

\begin{proof}
    For every pair of players $i,j$ where $i\neq j$, we maintain a potential term $\Phi_{i,j} = \Phi_{i,j}^{(i)}+\Phi_{i,j}^{(j)}$ similar to that for the 2-d interval discrepancy problem, i.e., for the two trees given by $\{1,2\} \times \CH_{\le \log T}$. 
    Our algorithm maintains an  overall potential 
    \[ \textstyle \Phi(t) := \sum_{i\neq j} \Phi_{i,j}\] 
    and greedily colors the next arrival such that the increase in the potential is minimized. 
    
    To prove, we first argue that for any valuations of the players, there always exist an assignment such that the linear term is negative.
    
    \begin{claim} \label{claim:nonNegMultiColor}
    There always exists an assignment such that $L\leq 0$.
    \end{claim}
    \begin{proof}
        Assigning to a random player makes $\E[L]=0$, which means there is an assignment where $L\leq 0$.
    \end{proof}  
    
     Next we show why some times the linear term can be made very negative, so that when combined with the last claim, in expectation there is negative drift in the overall potential  $\E[\Delta \Phi_t]<0$.
     
     \begin{claim}
     We have  $\E[L] \leq  - \frac{\Phi(t)}{2k^2 \log T}$. 
     \end{claim}
     \begin{proof}
        The only potential terms that change on allocating the item to a random player $i$ are $\Phi_{i,j}$ for $j\neq i$, i.e., only $k$ terms. Let $(\ell,i,j)$ denote the tuple that maximizes $ \E[|\Phi_{i,j}^{(j,\ell)} |]$, i.e., among all pairs, player $j$'s tree  at level $\ell$ in the pair $(i,j)$ has the highest expected mass. Note that this means $ \E[|\Phi_{i,j}^{(j,\ell)} |] \geq \frac{\Phi(t)}{2 \cdot {k \choose 2 } \log T}$.
        
        Consider an algorithm that assigns the item to player $i$ (notice not $j$) whenever $L_{i,j}^{(j,\ell)} \leq 0$, and otherwise it uses Claim~\ref{claim:nonNegMultiColor} to select a player. Observe that even if we condition on the sign of $L_{i,j}^{(j,\ell)}$, every other term is still mean zero. Moreover, since $-\E\Big[ \Big( L_{i,j}^{(j,\ell)} \Big)^{-} \Big] = \frac12  \E[|\Phi_{i,j}^{(j,\ell)} |]$, we get that
        \[  \E[L] ~~\leq~~ -\frac12  \E[|\Phi_{i,j}^{(j,\ell)} |] ~~\leq~~ -\frac12 \frac{\Phi(t)}{2 \cdot {k \choose 2 } \log T}. \qedhere
        \]
        
     \end{proof}
     
     This finishes the proof of the theorem as we are getting a negative drift in expectation.
    \end{proof}
    }
    
    \ignore{\mnote{Explicitly, we have the following:
    Let 
    \begin{align*}
        L &= \sum_{i,h} a_{ih}X_{ih} \left(\sum_{j<i}(Z_i-Z_j) - \sum_{j>i}(Z_i-Z_j)\right)  \\
        Q &= \sum_{i,h} |a_{ih}|X_{ih}^2 \cdot \left(\sum_{i \neq j} (Z_i+Z_j)\right),
    \end{align*}
    where $i \in [d]$ and $h$ ranges over the Haar basis functions, $Z_i = 1$ if item is given to player $i$ and $Z_{>i} = \sum_{j > i} Z_j$. $X_{ih}$ will be $h(x(i))$ but for now lets say that $X_{ih}$ are uncorrelated except for the constant term for different $i$'s which seems to be a problem.
    We want to say $\BE \min_c (L_c + \lambda Q_c) < 0$ where $c \in [d]$ is the player chosen, where
     \begin{align*}
        L_c &= \sum_{i\neq c,h} a_{ih}X_{ih} ... \\
        Q_c &= \sum_{i \neq c,h} |a_{ih}|X_{ih}^2 + (d-1) \sum_h |a_{ch}|X_{ch}^2.
    \end{align*}
    }}
    
    \ignore{\color{red} \snote{Old proof attempt.}
    The following claim shows that if we assign the next item to a random player, the expected linear term is already comparable to the overall potential $\Phi(t)$.

    \begin{claim} If we assign the next item to a random player then $\E[|L|] \geq  \frac{\Phi(t)}{k^2 \log T}$. 
    \end{claim}
    \begin{proof}
         The only potential terms that change on allocating the item to a random player $i$ are $\Phi_{i,j}$ for $j\neq i$, i.e., only $k$ terms.
         Denote $(\ell(i), j(i)) := \arg\max_{\ell,j} \E[|\Phi_{i,j}^{(j,\ell)} |]$ to be the maximum-level and maximum player $j$. 
    We claim that 
    \[ \E \Big[|L| \mid \text{Player $i$ is chosen randomly} \Big] \geq \E\Big[|\Phi_{i,j(i)}^{(j(i),\ell(i))}|\Big].
    \]
    This is true because if we condition on the sign of player $j(i)$'s valuation at level $\ell(i)$, then everything else is mean zero.
    By definition $ \E\Big[|\Phi_{i,j(i)}^{(j(i),\ell(i))}|\Big] \geq \frac{1}{(k-1)\cdot \log T} \sum_{j\neq i} \E\Big[|\Phi_{i,j}^{(j)}| \Big]$, so
    \begin{align*}
        \E[|L|] &\geq \frac{1}{k} \sum_i \E\Big[|L| \mid \text{Player $i$ is chosen randomly}\Big] \\
        &\geq \frac{1}{k} \sum_i \frac{1}{(k-1)\cdot \log T} \sum_{j\neq i} \E\Big[|\Phi_{i,j}^{(j)} |\Big] \\
        &\geq \frac{1}{k^2\log T} \E[\Phi(t)] .   \qedhere
    \end{align*}
    \end{proof}
    
    We know there is a player $j$ which when considered in pair $(i,j)$ has a level with expected potential at least $\Phi(t)/(k^2 \log T)$. Now the algorithm always has the option of assigning the item to player $i$, in which case by the above argument the expected $L$ will be at least $\Phi(t)/(k^2 \log T)$. Formally, condition on which node of level $\ell$ value $v_j$ lies. Notice the sign of this term potential is still equally likely to be +ve or -ve.

    We lower bound the expected value of $|L|$ by considering what it would be we were to randomly assign the next item to one of the $k$ players. Although the actual algorithm doesn't do this, it gives a valid lower bound on $\E[|L|]$. 

    By the last claim, we get that the final potential is bounded by $\poly(T)$. This means the discrepancy of each dyadic interval in each tree is bounded by $k^2 \log^2 T$. Finally, since any interval can be written as a union of at most $\log T$ dyadic intervals, this completes the proof of the theorem.
}

\section{Open Problems and Directions}
\label{sec:open}


 We close this paper by mentioning some 
interesting open problems  that seem to require fundamental new techniques, and new directions in online discrepancy that remain unexplored.
\paragraph{Improving the dependence on $n$ for general distributions.}
Theorem \ref{thm:signedseries} gives a bound of $O(n^2\log T)$ for online vector balancing problem under inputs sampled from an arbitrary distribution. 
However, an optimal dependence of  $O(n^{1/2})$ on $n$ is achievable in the special case where the distribution has independent coordinates \cite{BansalSpencer-arXiv19}, and also in the offline setting with worst-case inputs \cite{B12}. This motivates the following question.

\medskip

\textbf{Question 1.} \emph{Given an arbitrary distribution $\boldp$ supported over $[-1,1]^n$, is there an online algorithm that maintains discrepancy $\sqrt{n}\cdot \polylog(T)$  on a sequence of $T$ inputs sampled i.i.d. from $\boldp$?}
\medskip

As the anti-concentration bound in Theorem \ref{lemma:anticonc} for uncorrelated variables is a $n^{1/2}$ factor worse than that for independent random variables, even getting a dependence of $ n\cdot \polylog(T)$ is an interesting first step.
%
%
\paragraph{Bounds in terms of sparsity.}
Several natural problems such as the  $d$-dimensional interval discrepancy and $d$-dimensional Tusn\'{a}dy's problem are best viewed as vector balancing problems where the input vectors are sparse.
This motivates the following online version of the Beck-Fiala problem, where the online sequence $x_1,\ldots,x_T$ is chosen independently from some distribution $\boldp$ supported over $s$-sparse $n$-dimensional vectors over $[-1,1]^n$. In the offline setting with worst-case inputs (and where we care about the discrepancy of every prefix), the methods of Banaszczyk \cite{B12} give a bound of $(s\log T)^{1/2}$.



 
 \medskip
\textbf{Question 2.} \emph{Given an arbitrary distribution $\boldp$ supported over $s$-sparse vectors in $[-1,1]^n$, is there an online algorithm that maintains discrepancy $\poly(s,\log T, \log n)$ on a sequence of $T$ inputs sampled i.i.d. from $\boldp$?}
\medskip

 Resolving the above question would  imply polylogarithmic bounds for Tusn\'ady's problem in $d$-dimensions (similar to that in Theorem \ref{thm:tusnady}) in the much more general setting where the points $x_T$ are sampled from   an arbitrary distribution over points in $[0,1]^d$. Currently, Theorems \ref{thm:interval} and \ref{thm:tusnady} only hold when the points $x_t$ are sampled from a product distribution on $[0,1]^d$.

\paragraph{Prophet model.} The last decade has seen several online problems being studied in the prophet model where the online inputs are sampled independently from known  \emph{non-identical} distributions (see, e.g.,~\cite{Lucier-SIGecom17}). The model clearly generalizes the i.i.d. model and for point mass distributions it captures the offline problem. This model becomes useful for online problems where the adversarial arrival guarantees are weak, which raises the following question.


\medskip
\textbf{Question 3.} \emph{Given arbitrary distributions $\boldp_1, \ldots, \boldp_T$ supported over vectors in $[-1,1]^n$, is there an online algorithm that maintains discrepancy $\poly(n,\log T, \log n)$ on a sequence of $T$ inputs where vector $v_t$ is sampled independently from $\boldp_t$?}

\medskip

The techniques in Theorem~\ref{thm:signedseries} do not work since the eigenbasis may change with each arrival. It will be also interesting to study this prophet model for distributions over $s$-sparse vectors.

\paragraph{Oblivious adversary model.}

A very interesting direction that is strictly harder than the above stochastic settings is to understand online vector balancing when the adversary is oblivious or non-adaptive, i.e., the adversary chooses the entire input sequence
(without any stochastic assumptions) beforehand and is \emph{not allowed} to change the inputs later based on the execution of the algorithm. 

Recall that if the adversary is fully adaptive, then one cannot hope to prove a bound better than $\Theta({T}^{1/2})$, but this might be possible for oblivious adversaries.

\medskip
\textbf{Question 4.} \emph{Is there an online algorithm that maintains discrepancy $\poly(n,\log T)$ on any sequence of $T$ vectors in $[-1,1]^n$ chosen by an oblivious adversary?}

One could also consider the same question in the Beck-Fiala setting, and ask if better bounds are possible when there is sparsity.

\medskip
\textbf{Question 5.} \emph{Is there an online algorithm that maintains discrepancy $\poly(s,\log T,\log n)$ on any sequence of $T$ vectors in $[-1,1]^n$ that are $s$-sparse and chosen by an oblivious adversary?}
\medskip

Resolving Questions $4$ and $5$ would also have implications for both  online geometric discrepancy  and online envy minimization problems in the oblivious adversary setting.



\subsection*{Acknowledgements}
We are thankful to Janardhan Kulkarni for several discussions on this project. We are also grateful to anonymous referees of STOC 2020 for their helpful comments on improving the presentation of the paper. 

\appendix

\section{Tight example for Anti-Concentration in the Original Basis for Interval Discrepancy}
\label{app:example}


Let us briefly recall the setting. Consider the complete binary tree of height $\log T$ where the nodes are the dyadic intervals $I_{j,k}$ for $0 \le j \le \log T$ and $0\le k < 2^j$. Our objective was to find the smallest $\beta$ such that 
\begin{align}  \label{eq:sepIneq1}
\E_x \Big[ \Big| \sum_{j,k} \sinh(\lambda d_{j,k}) \cdot \ind_{I_{j,k}}(x)  \Big| \Big] ~~\geq~~ \frac{1}{\beta} \cdot \E_x \Big[ \sum_{j,k} \cosh(\lambda d_{j,k})\cdot \ind_{I_{j,k}}(x)  \Big],
\end{align}
where $x$ is a uniform point on the unit interval $[0,1]$, the function $\ind_{I_{j,k}}$ is the indicator for the dyadic interval $I_{j,k}$, and $\lambda >0$ and $d_{j,k} \in \BR$. For simplicity, we set $\lambda = 1$ henceforth.

Observe that when a uniform random point $x$ arrives at a leaf dyadic interval $I_{\log T,k}$, then only the variables along that root-leaf path contribute to both sides. Moreover, since $x$ is uniform, the chosen leaf interval is also uniform among the leaves. Therefore, denoting by $\ell$ the random leaf and $P_{\ell}$ the corresponding root-leaf path, we want to ask for the smallest $\beta$ satisfying  
\begin{align}  \label{eq:sepIneq}
\E_{P_\ell} \Big[ \Big| \sum_{I_{j,k} \in P_{\ell}} a_{j,k}  \Big| \Big] ~~\geq~~ \frac{1}{\beta} \cdot \E_{P_\ell} \Big[ \sum_{I_{j,k} \in P_{\ell}} | a_{j,k} |  \Big],
\end{align}
where $a_{j,k} = \sinh(d_{j,k})$ for a node $I_{j,k}$ in the dyadic tree. Note that to get \eqref{eq:sepIneq} from \eqref{eq:sepIneq1}, we made the standard approximation that $\cosh(x)\approx|\sinh(x)|$ for $x \in \BR$. 

The following lemma shows that in general $\beta$ could be exponentially large in the height of the tree, so in the above case since the height is $\log T$, the value of $\beta = \Omega(\poly(T))$. We remark that for non-binary trees, this was already shown by Jiang, Kulkarni, and Singla~\cite{JiangKS-arXiv19}.

\begin{lemma} There exists $d_{j,k}$ for $0 \le j \le h$ and $0 \le k < 2^j$, such that  $\beta = \exp(\Omega(h))$ in \eqref{eq:sepIneq}.
\end{lemma}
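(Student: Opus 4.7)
The plan is to exhibit a very simple assignment of $d_{j,k}$ that already forces $\beta$ to be exponentially large. The key observation is that a root-to-leaf path in the complete binary tree of height $h$ has exactly $h+1$ nodes, one from each level $0, 1, \ldots, h$. So if we choose $d_{j,k}$ to depend only on the level $j$, then both sides of \eqref{eq:sepIneq} become deterministic: writing $\alpha_j := \sinh(d_{j,k})$ (the same value for all $k$ at level $j$), every path gives
\[
\sum_{I_{j,k} \in P_\ell} a_{j,k} \;=\; \sum_{j=0}^{h} \alpha_j, \qquad \sum_{I_{j,k} \in P_\ell} |a_{j,k}| \;=\; \sum_{j=0}^{h} |\alpha_j|,
\]
independent of the random leaf $\ell$. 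The smallest $\beta$ satisfying \eqref{eq:sepIneq} is therefore exactly $\sum_j |\alpha_j| \,/\, |\sum_j \alpha_j|$, and the task reduces to choosing $(\alpha_j)$ that make this ratio exponential in $h$.

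The concrete choice I will use is $\alpha_0 = 1$, $\alpha_1 = -1 + 2^{-h}$, and $\alpha_j = 0$ for $j \ge 2$. This corresponds to $d_{0,0} = \sinh^{-1}(1) = \ln(1+\sqrt{2})$, $d_{1,0} = d_{1,1} = \sinh^{-1}(-1 + 2^{-h})$, and $d_{j,k} = 0$ elsewhere. Then
\[
\Big|\sum_{j=0}^{h} \alpha_j\Big| = 2^{-h}, \qquad \sum_{j=0}^{h} |\alpha_j| = 2 - 2^{-h},
\]
which yields $\beta \ge (2 - 2^{-h})/2^{-h} = 2^{h+1} - 1 = \exp(\Omega(h))$, as required. (If one prefers an ``all-levels'' construction, one can instead take $\alpha_j = (-1)^j$ for odd $h$, in which case the path sum is exactly zero, the absolute sum is $h+1$, and $\beta = \infty$; the perturbed version above simply records a finite but still exponential lower bound.)

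There is no real obstacle here: the lemma is essentially a sanity check showing that treating \eqref{eq:sepIneq} as an anti-concentration problem on the original dyadic basis is hopeless, because values at different levels can be correlated along every path to produce almost-perfect cancellation while the absolute mass on the path remains $\Theta(1)$. The only minor care needed is to present a non-degenerate example so that the quantity $\beta$ being bounded below is itself a finite number, which the two-level perturbation above handles cleanly. Translating the resulting bound back through the $\cosh \approx |\sinh|$ approximation also gives an $\exp(\Omega(h))$ lower bound for the original inequality \eqref{eq:sepIneq1}, since $\cosh(d_{j,k}) \ge 1$ ensures the right-hand side of \eqref{eq:sepIneq1} is at least $h+1$, while the left-hand side is still $|\sum_j \alpha_j| = 2^{-h}$ for our choice.
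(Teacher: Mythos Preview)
Your proof is correct for the lemma as stated, and it takes a genuinely different (and more elementary) route than the paper. The paper builds a fractal configuration: the root carries a large value $d$, a recursive copy of the structure hangs off one side, and with probability $1-\exp(-\Omega(h))$ a random root-to-leaf path eventually hits a node with value $-d$ that cancels the root, leaving only intermediate terms of size $\sinh(2d/3)$; one then needs $d=\Omega(h)$ to make those residual terms negligible relative to $\sinh(d)$. Your construction bypasses all of this by making the values depend only on the level, so every path sees the identical sequence $(\alpha_0,\ldots,\alpha_h)$ and both sides of \eqref{eq:sepIneq} become deterministic. The problem then collapses to the trivial observation that a finite real sequence can have $|\sum_j \alpha_j|/\sum_j|\alpha_j|$ arbitrarily small, which you instantiate with two nonzero levels.

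What each approach buys: your argument is shorter, uses only two nonzero levels, and makes transparent that \eqref{eq:sepIneq} places no useful constraint once the path sums are allowed to be degenerate (deterministic). The paper's fractal example, by contrast, has genuine randomness across paths and values spread over many levels, so it rules out the hope that the failure of \eqref{eq:sepIneq} is an artifact of degenerate, level-constant configurations. For the lemma as written, however, no such robustness is required, and your two-level perturbation (or the alternating-sign variant you mention) already delivers $\beta \ge 2^{h+1}-1 = \exp(\Omega(h))$. Your closing remark about \eqref{eq:sepIneq1} is also valid: since $\cosh \ge 1$, the right-hand side there is at least $h+1$ while the left-hand side is still $2^{-h}$, so the same bound (in fact a slightly stronger one) carries over without the $\cosh \approx |\sinh|$ approximation.
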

\begin{proof}
Our construction has a fractal structure. Let $d > 0$ be a sufficiently large integer. Let $\mathcal{T}$ denote the tree structure shown in Figure~\ref{fig:fractal}(a) where the labels are the values that will be used for constructing $d_{j,k}$'s. We embed this structure in the complete binary tree of dyadic intervals and assign the $d_{j,k}$ values as follows: the root interval has value $d_{0,0}=d$ and its left children has the structure $\CT$ with the values $d_{j,k}$ as assigned by the corresponding labels in $\CT$, while the right child has value $d_{1,1}=2d/3$ and has two child subtrees with structure $\CT$ (see Figure~\ref{fig:fractal}(b)). The $d_{j,k}$ values for all the unassigned nodes (these lie in the subtree rooted at the nodes having values $d_{j,k}=-d$) are taken to be zero.


\begin{figure}
   \centering
   \subfloat[Fractal structure $\CT$]{{\includegraphics[width=0.4\textwidth]{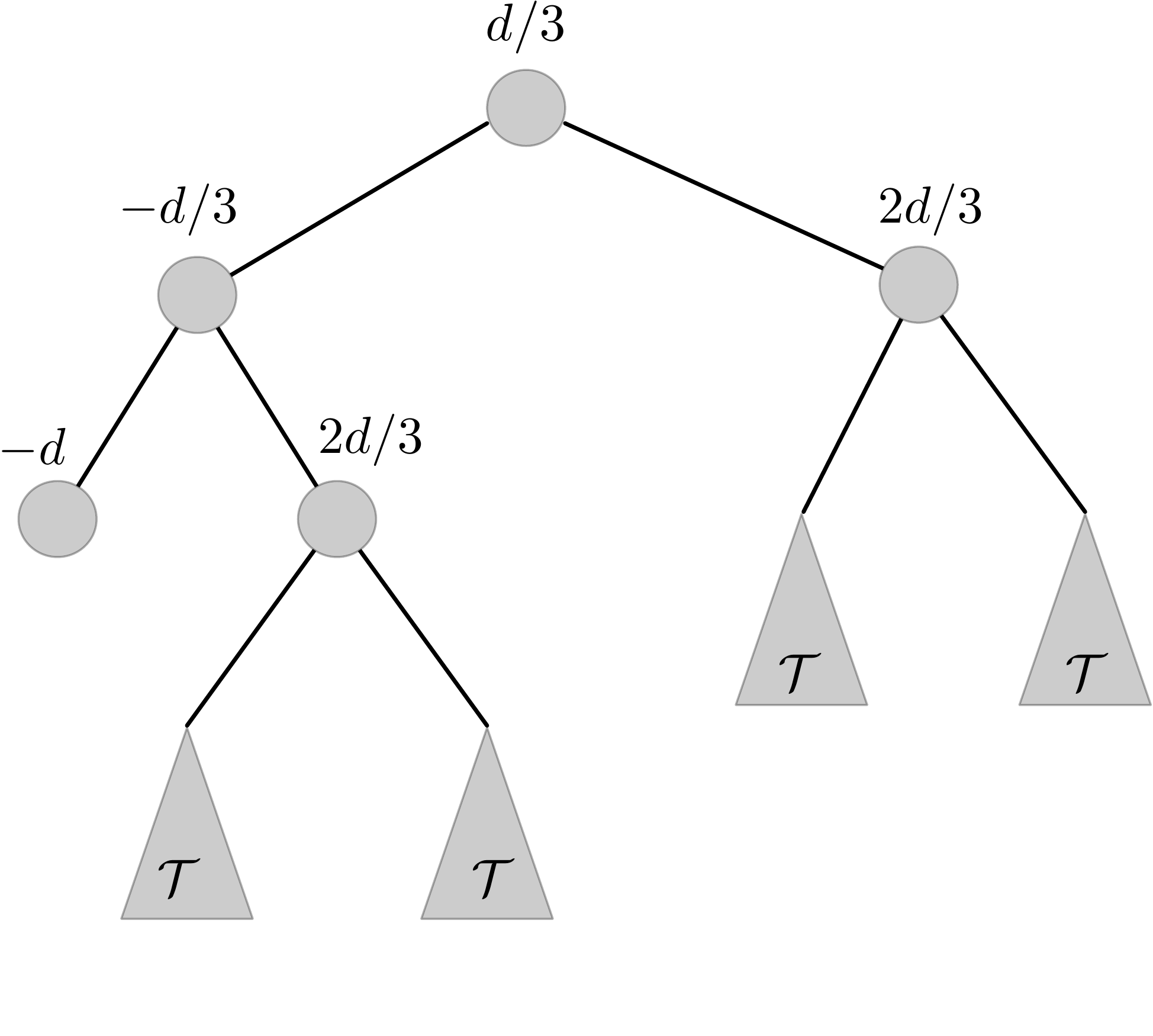}}}%
    \qquad\qquad
    \subfloat[Embedding of $\CT$ in the dyadic tree]{{\includegraphics[width=0.36\textwidth]{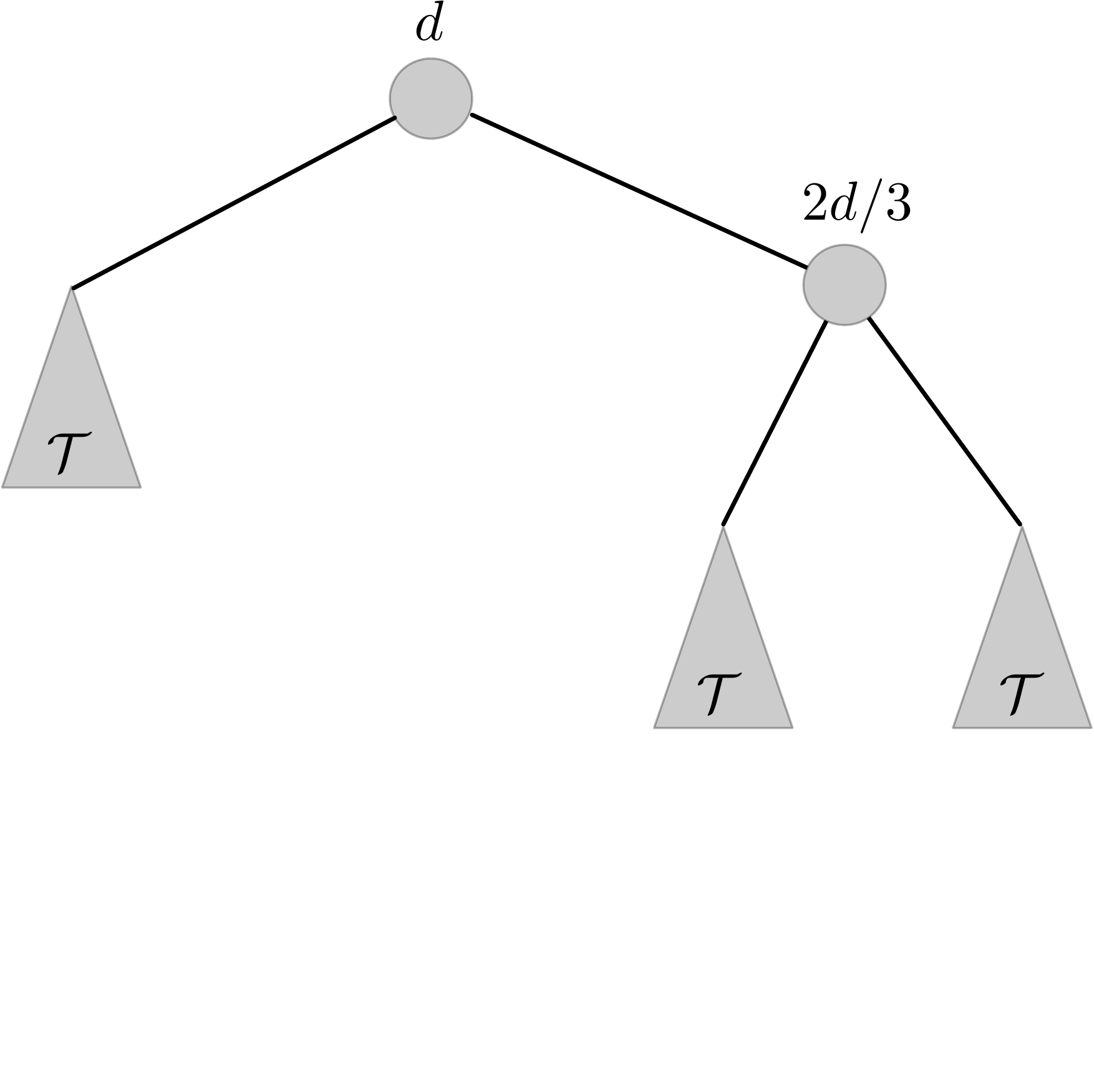}}}%
    \caption{Construction of $d_{j,k}$'s satisfying \eqref{eq:sepIneq}}%
    \label{fig:fractal}%
\end{figure}
   

Note that $\mathcal{T}$ has the property that with probability $1/4$ it ends in a node $I_{j,k}$ with $a_{j,k}=\sinh(-d)$, and otherwise it enters another $\mathcal{T}$ (unless we already reached a leaf).

The proof now follows because if we take a random root-leaf path in our dyadic tree, with probability $1 - \exp(-\Omega(h))$ it will end in a leaf with $\sinh(-d)$, which will cancel with $\sinh(d)$ at the root. Since every other entry on a root leaf path  has magnitude at most $\sinh(2d/3)$,
the left hand side in \eqref{eq:sepIneq} will be 
\[ \E_{P_\ell} \Big[ \Big| \sum_{I_{j,k} \in P_{\ell}} a_{j,k}  \Big| \Big] ~~\leq~~  \Big( 1 - \exp(- \Omega(h)) \Big) \cdot h \cdot |\sinh(2d/3)| + \exp(-\Omega(h)) \cdot h \cdot |\sinh(d)| ~~\leq ~~ \frac{|\sinh(d)|}{\exp(\Omega(h))}, 
\]
while the right hand side is 
\[ \E_{P_\ell} \Big[ \sum_{I_{j,k} \in P_{\ell}} | a_{j,k} |  \Big] ~~\ge~~ |\sinh(d)|.\]

Therefore, $\beta = \exp(\Omega(h))$ in \eqref{eq:sepIneq}.
\end{proof}

\section{Burkholder-Davis-Gundy Inequality}
\label{sec:bdg}

Let $Z_0, Z_1, \ldots, Z_t$ be a discrete martingale (with respect to $W_1, \ldots, W_t$) and let $\Delta Z_s = Z_s - Z_{s-1}$ denote the differences for all $s \in [t]$. Note that $Z_s = \Delta Z_1 + \Delta Z_2 + \ldots + \Delta Z_s$. Define $Z_t^* = \max_{0 \le s \le t} |Z_s|$ to be the maximum value of the martingale process till time $t$. Then, the well-known  Burkholder-Davis-Gundy inequality says the following.

\begin{theorem}[\cite{BDG72}]
\label{thm:bdg}
Let $1\le p < \infty$. Then, there exist positive constants $c_p$ and $C_p$ such that 
\[ c_p\cdot\BE\left[ \left(\sum_{s=1}^t |\Delta Z_s|^2\right)^{p/2} \right] ~~\le~~ \BE[(Z_t^*)^p] ~~\le~~ C_p\cdot \BE\left[ \left(\sum_{s=1}^t |\Delta Z_s|^2\right)^{p/2} \right]. \] 
\end{theorem}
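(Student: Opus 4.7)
The plan is to prove both inequalities via the standard good-$\lambda$ machinery, which reduces the $L^p$ comparison to a distributional comparison between the maximal function $Z_t^* = \max_{0 \le s \le t} |Z_s|$ and the square function $S_t := (\sum_{s=1}^t |\Delta Z_s|^2)^{1/2}$. I would first dispose of the case $p = 2$ as a warm-up: by orthogonality of martingale differences one has $\E[Z_t^2] = \sum_{s=1}^t \E[(\Delta Z_s)^2] = \E[S_t^2]$, and combining this with Doob's $L^2$ maximal inequality $\E[(Z_t^*)^2] \le 4 \E[Z_t^2]$ yields the upper bound with $C_2 = 4$; the matching lower bound is immediate from $(Z_t^*)^2 \ge Z_t^2$. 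This anchor case motivates the general strategy.

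For arbitrary $1 \le p < \infty$, the main step is to prove a good-$\lambda$ inequality of the form
\[ \BP\bigl[Z_t^* > \beta \lambda,\, S_t \le \delta \lambda\bigr] \;\le\; \psi(\beta, \delta)\, \BP[Z_t^* > \lambda], \]
where $\beta > 1$ is fixed and $\psi(\beta, \delta) \to 0$ as $\delta \to 0$. The proof introduces the stopping times $\tau = \inf\{s : |Z_s| > \lambda\}$ and $\sigma = \inf\{s : S_s > \delta \lambda\}$, and then applies Chebyshev to the stopped martingale $Z^{\tau \wedge \sigma} - Z^{\tau}$ on the event $\{\tau < \infty, \sigma = \infty\}$, using that its quadratic variation is bounded by $\delta^2 \lambda^2$. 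Integrating this inequality against $p \lambda^{p-1}\, d\lambda$ and choosing $\delta$ small enough so that $\beta^p \psi(\beta, \delta) < 1$ absorbs a term of the form $\E[(Z_t^*)^p]$ into the left-hand side, yielding the upper bound $\E[(Z_t^*)^p] \le C_p\, \E[S_t^p]$. The lower bound is obtained by running the same argument with the roles of $Z_t^*$ and $S_t$ interchanged, stopping at the first time $S_s$ exceeds $\lambda$ and using a reverse Chebyshev bound on the stopped increments.

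The main obstacle is the endpoint case $p = 1$ (and more generally $p < 2$), where one cannot afford the $L^2$ orthogonality shortcut and the good-$\lambda$ estimate requires additional care because the martingale differences are not a priori uniformly bounded by $S_t$. The cleanest way to handle this is Davis's decomposition: write $Z = M + N$, where the predictable process $M$ has differences $\Delta M_s$ bounded by $2 \max_{r \le s-1} |\Delta Z_r|$, and $N$ is a martingale whose differences are themselves controlled by $\Delta Z$. One then applies the good-$\lambda$ argument to $N$ (which has bounded differences relative to $S$) and estimates $M$ directly using the trivial bound $|M_s| \le \sum_r |\Delta M_r| \lesssim \sum_r \max_{r' \le r} |\Delta Z_{r'}|$, whose $L^1$ norm is in turn dominated by $\E[S_t]$ via the Burkholder--Gundy lemma on the maximal function of the differences. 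Stitching these pieces together, with $c_p$ and $C_p$ depending only on $p$, gives both inequalities of the theorem uniformly in $t$.
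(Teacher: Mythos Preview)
The paper does not actually prove this theorem: it is stated in the appendix as a classical result, attributed to \cite{BDG72}, and immediately followed by the remark that ``the inequality holds in much more general settings, but the above setting is sufficient for the purposes of this paper.'' There is therefore no paper proof to compare against; the authors simply invoke BDG as a black box in their martingale argument of \S\ref{sec:NewPotential}.

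Your outline is the standard route to BDG and is essentially correct as a plan: orthogonality plus Doob for $p=2$, good-$\lambda$ with stopping times $\tau,\sigma$ for the Burkholder--Gundy range, and Davis's decomposition to reach $p=1$. One place where the sketch is loose is the treatment of the ``predictable'' piece $M$ in the Davis decomposition. As written, the bound $|M_s|\le \sum_r |\Delta M_r|\lesssim \sum_r \max_{r'\le r}|\Delta Z_{r'}|$ is a sum of running maxima over \emph{all} times $r$, and that sum is not in general controlled by $\E[S_t]$ (it can be of order $t$ times the final maximum). The actual Davis decomposition splits each difference according to whether $|\Delta Z_s|$ exceeds twice the previous running maximum $d_{s-1}^*:=\max_{r<s}|\Delta Z_r|$; the ``large-jump'' contributions then telescope so that their total variation is at most a constant times $d_t^*\le S_t$, and compensating this bounded-variation piece costs only another factor. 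If you tighten that step, the plan goes through.
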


Note that the inequality holds in much more general settings, but the above setting is sufficient for the purposes of this paper.

Furthermore, for $p=1$, which is the case we need for the purposes of this paper, one can relate expected magnitude of $Z_t^*$ and $Z_t$ by the following inequality.

\begin{lemma} 
\label{lemma:lstar}
$\BE[Z_t^*] \le (t+1) \cdot \BE[|Z_t|]$ .
\end{lemma}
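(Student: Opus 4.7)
The plan is to bound the maximum by the sum, then use the martingale property to control each term by $\BE[|Z_t|]$.

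First I would write the trivial inequality $\max_{0\le s\le t}|Z_s|\le \sum_{s=0}^{t}|Z_s|$, take expectations, and use linearity to get
\[
\BE[Z_t^*] \;\le\; \sum_{s=0}^{t}\BE[|Z_s|].
\]
This replaces the maximum by a sum of $t+1$ individual expectations, at the cost of exactly the factor $(t+1)$ that appears in the claim.

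Next I would bound each $\BE[|Z_s|]$ by $\BE[|Z_t|]$. Since $(Z_s)$ is a martingale, for any $s\le t$ we have $Z_s=\BE[Z_t\mid\mathcal{F}_s]$, where $\mathcal{F}_s$ is the $\sigma$-algebra generated by $W_1,\ldots,W_s$. Applying the conditional Jensen inequality to the convex function $x\mapsto|x|$ gives $|Z_s|\le \BE[|Z_t|\mid\mathcal{F}_s]$, and taking expectations and using the tower property yields $\BE[|Z_s|]\le\BE[|Z_t|]$. In other words, $(|Z_s|)$ is a submartingale, so its expectation is non-decreasing in $s$.

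Combining these two steps,
\[
\BE[Z_t^*] \;\le\; \sum_{s=0}^{t}\BE[|Z_s|] \;\le\; (t+1)\,\BE[|Z_t|],
\]
which is the desired inequality. There is no real obstacle here; this is essentially a one-line argument whose only subtlety is invoking Jensen's inequality for the conditional expectation to turn the martingale property for $Z_s$ into the submartingale property for $|Z_s|$. I would remark that this bound is of course much weaker than Doob's maximal inequality, but it is exactly what is needed for the martingale anti-concentration application in Section \ref{sec:NewPotential}, where $t=\log T$ and one only loses a logarithmic factor.
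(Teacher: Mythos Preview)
Your proof is correct and follows essentially the same approach as the paper: bound the maximum by the sum $\sum_{s=0}^t |Z_s|$, then use that $(|Z_s|)$ is a submartingale to conclude $\BE[|Z_s|]\le\BE[|Z_t|]$ for each $s\le t$. The only cosmetic difference is that the paper phrases the submartingale step as an application of Doob's optional stopping theorem, whereas you derive it directly via conditional Jensen and the tower property; both are standard and amount to the same thing.
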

\begin{proof}
First note that $f(Z_0), \ldots, f(Z_t)$ is a sub-martingale with respect to $W_1, \ldots, W_t$ for any convex function $f$. Choosing $f(z)=|z|$, it follows that the absolute value of the above martingale is a sub-martingale. Applying Doob's optional stopping theorem to this sub-martingale, one gets that $\BE[|Z_t|] \ge \BE[|Z_0|]$. Since, we could have started this sequence anywhere, it also follows for any $s < t$ that $\BE[|Z_t|]\ge \BE[|Z_s|]$. 

Since $Z_t^* = \max_{s\le t} |Z_s| \le \sum_{s=0}^t |Z_s|$, using linearity of expectation, we get that
\[ \BE[Z_t^*] ~~\le~~ \sum_{s=0}^t \BE[|Z_s|] ~~\le~~ (t+1)\BE[|Z_t|]. \qedhere\]
\end{proof}


{
\bibliographystyle{alpha}
\bibliography{fullbib}

\newcommand{\etalchar}[1]{$^{#1}$}
\begin{thebibliography}{DFGGR19}

\bibitem[Ban98]{Banaszczyk-Journal98}
Wojciech Banaszczyk.
\newblock {Balancing vectors and Gaussian measures of n-dimensional convex
  bodies}.
\newblock {\em Random Struct. Algorithms}, 12(4):351--360, 1998.

\bibitem[Ban10]{Bansal-FOCS10}
Nikhil Bansal.
\newblock {Constructive Algorithms for Discrepancy Minimization}.
\newblock In {\em Proceedings of {FOCS} 2010}, pages 3--10, 2010.

\bibitem[Ban12]{B12}
Wojciech Banaszczyk.
\newblock {On series of signed vectors and their rearrangements}.
\newblock {\em Random Struct. Algorithms}, 40(3):301--316, 2012.

\bibitem[B{\'{a}}r79]{Barany79}
Imre B{\'{a}}r{\'{a}}ny.
\newblock {On a Class of Balancing Games}.
\newblock {\em J. Comb. Theory, Ser. {A}}, 26(2):115--126, 1979.

\bibitem[B{\'a}r08]{Barany08}
Imre B{\'a}r{\'a}ny.
\newblock {On the Power of Linear Dependencies}.
\newblock In {\em Building Bridges}, pages 31--45. Springer, 2008.

\bibitem[BDG72]{BDG72}
D.~L. Burkholder, B.~J. Davis, and R.~F. Gundy.
\newblock Integral inequalities for convex functions of operators on
  martingales.
\newblock In {\em Proceedings of BSMSP}, volume~2, pages 223--240, 1972.

\bibitem[BDG16]{BansalDG16}
Nikhil Bansal, Daniel Dadush, and Shashwat Garg.
\newblock {An Algorithm for Koml{\'{o}}s Conjecture Matching Banaszczyk's
  Bound}.
\newblock In {\em Proceedings of {FOCS} 2016}, pages 788--799, 2016.

\bibitem[BDGL18]{BansalDGL18}
Nikhil Bansal, Daniel Dadush, Shashwat Garg, and Shachar Lovett.
\newblock {The Gram-Schmidt Walk: A Cure for the Banaszczyk Blues}.
\newblock In {\em Proceedings of {STOC} 2018}, pages 587--597, 2018.

\bibitem[BG81]{BaranyGrinberg81}
Imre B{\'a}r{\'a}ny and Victor~S Grinberg.
\newblock On some combinatorial questions in finite-dimensional spaces.
\newblock {\em Linear Algebra and its Applications}, 41:1--9, 1981.

\bibitem[BG17]{BansalG17}
Nikhil Bansal and Shashwat Garg.
\newblock Algorithmic discrepancy beyond partial coloring.
\newblock In {\em Proceedings of {STOC} 2017}, pages 914--926, 2017.

\bibitem[BKPP18]{BenadeKPP-EC18}
Gerdus Benade, Aleksandr~M. Kazachkov, Ariel~D. Procaccia, and
  Christos{-}Alexandros Psomas.
\newblock {How to Make Envy Vanish Over Time}.
\newblock In {\em Proceedings of EC 2018}, pages 593--610, 2018.

\bibitem[BS19]{BansalSpencer-arXiv19}
Nikhil Bansal and Joel~H. Spencer.
\newblock {On-Line Balancing of Random Inputs}.
\newblock {\em CoRR}, abs/1903.06898, 2019.

\bibitem[Bud11]{Budish-Journal11}
Eric Budish.
\newblock The combinatorial assignment problem: Approximate competitive
  equilibrium from equal incomes.
\newblock {\em J. Polit. Econ.}, 119(6):1061--1103, 2011.

\bibitem[Cho94]{Chobanyan94}
Sergej Chobanyan.
\newblock {Convergence as of rearranged random series in Banach space and
  associated inequalities}.
\newblock In {\em Probability in Banach Spaces, 9}, pages 3--29. Springer,
  1994.

\bibitem[DFGGR19]{DFGR19}
Raaz Dwivedi, Ohad~N. Feldheim, Ori Gurel-Gurevich, and Aaditya Ramdas.
\newblock The power of online thinning in reducing discrepancy.
\newblock {\em Probability Theory and Related Fields}, 174:103–131, 2019.

\bibitem[DGK{\etalchar{+}}14]{DGKPS-AAAI14}
John~P. Dickerson, Jonathan~R. Goldman, Jeremy Karp, Ariel~D. Procaccia, and
  Tuomas Sandholm.
\newblock The computational rise and fall of fairness.
\newblock In {\em Proceedings of AAAI}, pages 1405--1411, 2014.

\bibitem[DNTT18]{DNTT18}
Daniel Dadush, Aleksandar Nikolov, Kunal Talwar, and Nicole
  Tomczak{-}Jaegermann.
\newblock Balancing {V}ectors in {A}ny {N}orm.
\newblock In {\em Proceedings of {FOCS} 2018}, pages 1--10, 2018.

\bibitem[ES18]{EldanS18}
Ronen Eldan and Mohit Singh.
\newblock Efficient algorithms for discrepancy minimization in convex sets.
\newblock {\em Random Struct. Algorithms}, 53(2):289--307, 2018.

\bibitem[Fol67]{FoleyEssay67}
Duncan~K Foley.
\newblock Resource allocation and the public sector.
\newblock {\em Yale Econ Essays}, 7:45--98, 1967.

\bibitem[Fra18]{Franks18}
Cole Franks.
\newblock {A simplified disproof of Beck's three permutations conjecture and an
  application to root-mean-squared discrepancy}.
\newblock {\em arXiv:1811.01102}, 2018.

\bibitem[Gia97]{Giannopoulos}
Apostolos~A Giannopoulos.
\newblock On some vector balancing problems.
\newblock {\em Studia Mathematica}, 122(3):225--234, 1997.

\bibitem[JKS19]{JiangKS-arXiv19}
Haotian Jiang, Janardhan Kulkarni, and Sahil Singla.
\newblock {Online Geometric Discrepancy for Stochastic Arrivals with
  Applications to Envy Minimization}.
\newblock {\em arXiv:1910.01073}, 2019.

\bibitem[LM15]{Lovett-Meka-SICOMP15}
Shachar Lovett and Raghu Meka.
\newblock {Constructive Discrepancy Minimization by Walking on the Edges}.
\newblock {\em {SIAM} J. Comput.}, 44(5):1573--1582, 2015.

\bibitem[LMMS04]{LiptonMMS-EC04}
Richard~J. Lipton, Evangelos Markakis, Elchanan Mossel, and Amin Saberi.
\newblock On approximately fair allocations of indivisible goods.
\newblock In {\em Proceedings of {EC} 2004}, pages 125--131, 2004.

\bibitem[LRR17]{LevyRR17}
Avi Levy, Harishchandra Ramadas, and Thomas Rothvoss.
\newblock {Deterministic Discrepancy Minimization via the Multiplicative Weight
  Update Method}.
\newblock In {\em Proceedings of {IPCO} 2017}, pages 380--391, 2017.

\bibitem[Luc17]{Lucier-SIGecom17}
Brendan Lucier.
\newblock An economic view of prophet inequalities.
\newblock {\em SIGecom Exchanges}, 16(1):24--47, 2017.

\bibitem[Mat09]{Matousek-Book09}
Jiri Matousek.
\newblock {\em {Geometric discrepancy: An illustrated guide}}, volume~18.
\newblock Springer Science \& Business Media, 2009.

\bibitem[MN15]{MN-SoCG15}
Ji{\v{r}}{\'{i}} Matou{\v{s}}ek and Aleksandar Nikolov.
\newblock Combinatorial discrepancy for boxes via the $\gamma_2$ norm.
\newblock In {\em Proceedings of SoCG 2015}, pages 1--15, 2015.

\bibitem[MNT14]{MNT14}
Ji{\v{r}}{\'{i}} Matou{\v{s}}ek, Aleksandar Nikolov, and Kunal Talwar.
\newblock Factorization norms and hereditary discrepancy.
\newblock {\em CoRR}, abs/1408.1376, 2014.

\bibitem[Nik17]{Nikolov-Mathematika19}
Aleksandar Nikolov.
\newblock Tighter bounds for the discrepancy of boxes and polytopes.
\newblock {\em CoRR}, abs/1701.05532, 2017.

\bibitem[NNN12]{NewmanNN-FOCS12}
Alantha Newman, Ofer Neiman, and Aleksandar Nikolov.
\newblock Beck's three permutations conjecture: {A} counterexample and some
  consequences.
\newblock In {\em Proceedings of {FOCS} 2012}, pages 253--262, 2012.

\bibitem[NV13]{NV13}
Hoi~H. Nguyen and Van~H. Vu.
\newblock {\em Small Ball Probability, Inverse Theorems, and Applications},
  pages 409--463.
\newblock Springer Berlin Heidelberg, 2013.

\bibitem[Rot14]{Rothvoss14}
Thomas Rothvo{\ss}.
\newblock {Constructive Discrepancy Minimization for Convex Sets}.
\newblock In {\em Proceedings of {FOCS} 2014}, pages 140--145, 2014.

\bibitem[Spe77]{Spencer77}
Joel Spencer.
\newblock Balancing games.
\newblock {\em J. Comb. Theory, Ser. B}, 23(1):68--74, 1977.

\bibitem[Spe85]{Spencer85}
Joel Spencer.
\newblock Six standard deviations suffice.
\newblock {\em Trans. Am. Math. Soc.}, 289(2):679--706, 1985.

\bibitem[Spe87]{Spencer-Book87}
Joel~H. Spencer.
\newblock {\em Ten lectures on the probabilistic method}, volume~52.
\newblock Society for Industrial and Applied Mathematics Philadelphia, 1987.

\bibitem[SST97]{SpencerST-SODA97}
Joel~H. Spencer, Aravind Srinivasan, and Prasad Tetali.
\newblock The discrepancy of permutation families.
\newblock In {\em Proceedings of {SODA}}, 1997.

\bibitem[TV85]{ThomsonVarian-Essay85}
William Thomson and Hal Varian.
\newblock Theories of justice based on symmetry.
\newblock {\em Social goals and social organizations: essays in memory of
  Elisha Pazner}, 126, 1985.

\bibitem[Wal04]{W04}
David Walnut.
\newblock {\em {An {I}ntroduction to {W}avelet {A}nalysis}}.
\newblock Applied and Numerical Harmonic Analysis. Birkhäuser Basel, 1
  edition, 1 2004.

\bibitem[ZP19]{ZP-arXiv19}
David Zeng and Alexandros Psomas.
\newblock Fairness-efficiency tradeoffs in dynamic fair division.
\newblock {\em CoRR}, abs/1907.11672, 2019.

\end{thebibliography}
}

\end{document}